\documentclass[aps,pra,11pt,superscriptaddress,tightenlines,longbibliography,nofootinbib]{revtex4-2}
\usepackage[utf8]{inputenc}
\usepackage[T1]{fontenc}
\usepackage[colorlinks = true]{hyperref}
\usepackage{xcolor}
\definecolor{darkred}  {rgb}{0.5,0,0}
\definecolor{darkblue} {rgb}{0,0,0.5}
\definecolor{darkgreen}{rgb}{0,0.5,0}
\definecolor{myblue}   {rgb}{0.176,0.643,0.937}
\definecolor{myred}    {rgb}{0.557,0.176,0.188}
\hypersetup{
  urlcolor   = blue,         
  linkcolor  = darkblue,     
  citecolor  = darkgreen,    
  filecolor   = darkred       
}
\usepackage[caption=false]{subfig}
\usepackage{url}            % simple URL typesetting
\usepackage{booktabs}       % professional-quality tables
\usepackage{amsfonts}       % blackboard math symbols
\usepackage{nicefrac}       % compact symbols for 1/2, etc.
\usepackage{microtype}      % microtypography
\usepackage{qcircuit}
\newcommand{\cgate}[1]{*+<.6em>{#1} \POS ="i","i"+UR;"i"+UL **\dir{-};"i"+DL **\dir{-};"i"+DR **\dir{-};"i"+UR **\dir{-},"i" \cw} % This appears in the latest version of qcircuit, but I don't know how to change the version of the package, so I add it manually.
\usepackage{amsmath,amssymb}
\usepackage{amsthm}
\usepackage[capitalise,nameinlink]{cleveref}
\usepackage{tikz}
\usetikzlibrary{arrows.meta,positioning}
\usepackage{verbatim}
\usepackage{physics}
\usepackage{multirow}
\usepackage{makecell}
\usepackage{mathtools}
\usepackage{nicematrix}
\usepackage{graphics}
\usepackage{placeins}
\usepackage{circledsteps}

\newtheorem{definition}{Definition}
\newtheorem{theorem}{Theorem}
\newtheorem{lemma}{Lemma}
\newtheorem{corollary}{Corollary}
\newtheorem{proposition}{Proposition}

\definecolor{ginger}{rgb}{0.69, 0.4, 0.0}

\newcommand{\lket}[1]{\vert #1 \rangle\!\rangle}
\newcommand{\lbra}[1]{\langle\!\langle #1 \vert}

\newcommand{\mc}{\mathcal}

\interfootnotelinepenalty=10000

\begin{document}

\title{A generalized cycle benchmarking algorithm for characterizing\\ mid-circuit measurements}
\author{Zhihan Zhang}
\email{zhihan-z21@mails.tsinghua.edu.cn}
\affiliation{Institute for Interdisciplinary Information Sciences, Tsinghua University, Beijing 100084, China}

\author{Senrui Chen}
\email{csenrui@uchicago.edu}
\affiliation{Pritzker School of Molecular Engineering, The University of Chicago, Chicago, Illinois 60637, USA}

\author{Yunchao Liu}
\email{yunchaoliu@berkeley.edu}
\affiliation{Department of Electrical Engineering and Computer Sciences, University of California, Berkeley, California 94720, USA}

\author{Liang Jiang}
\email{liang.jiang@uchicago.edu}
\affiliation{Pritzker School of Molecular Engineering, The University of Chicago, Chicago, Illinois 60637, USA}

\begin{abstract}
Mid-circuit measurements (MCMs) are crucial ingredients in the development of fault-tolerant quantum computation. While there have been rapid experimental progresses in realizing MCMs, a systematic method for characterizing noisy MCMs is still under exploration. In this work we develop a cycle benchmarking (CB)-type algorithm to characterize noisy MCMs. The key idea is to use a joint Fourier transform on the classical and quantum registers and then estimate parameters in the Fourier space, analogous to Pauli fidelities used in CB-type algorithms for characterizing the Pauli noise channel of Clifford gates. Furthermore, we develop a theory of the noise learnability of MCMs, which determines what information can be learned about the noise model (in the presence of state preparation and terminating measurement (SPAM) noise) and what cannot, which shows that all learnable information can be learned using our algorithm. As an application, we show how to use the learned information to test the independence between measurement noise and state preparation noise in an MCM. Finally, we conduct numerical simulations to illustrate the practical applicability of the algorithm. Similar to other CB-type algorithms, we expect the algorithm to provide a useful toolkit that is of experimental interest.
\end{abstract}

\maketitle

\section{Introduction}
Mid-circuit measurements (MCMs) are a central component in quantum error correction and are being developed on various hardware platforms~\cite{ryan2021realization,singh2023mid,google2023suppressing,bluvstein2024logical}. To perform syndrome measurements, one needs to initialize fresh ancilla qubits, apply the syndrome extraction circuit, and measure the ancilla qubits while maintaining the coherence of system qubits. Characterizing the noise model for this process is thus a necessary step for fault-tolerant quantum computation. 
For quantum gate characterization, a standard approach is to engineer the noise process into Pauli channels~\cite{wallman2016noise,hashim2020randomized} and various methods have been developed to characterize the Pauli channel~\cite{flammia2020efficient,erhard2019characterizing,emerson2007symmetrized,flammia2022averaged,van2023probabilistic,carignan2023error}. However, MCMs are much more complicated than a noisy gate: the measurement readout can be faulty, the post-measurement state can be faulty, and there can be additional errors on unmeasured qubits. Despite significant recent efforts~\cite{govia2023randomized,rudinger2022characterizing}, developing general and systematic characterization methods for this more complicated noise model of MCMs still remains open.

In this paper we address this challenge by developing a general algorithm for characterizing MCMs. Our algorithm is a cycle benchmarking (CB)-type protocol~\cite{erhard2019characterizing,flammia2020efficient,emerson2007symmetrized,flammia2022averaged} generalizing CB with interleaved gates~\cite{chen2023learnability,van2023probabilistic,carignan2023error}, the state-of-the-art method for characterizing Clifford gates. The key idea behind CB with interleaved gates is the observation that the Fourier transform of the Pauli error rates (known as Pauli fidelities) are eigenvalues of the Pauli channel and therefore are easy to learn via repeated application of the noisy gate. We generalize this view to MCMs by defining Pauli fidelities as a joint Fourier transform on the classical and quantum registers, which provides a means of converting between the physical space (focusing on Pauli errors) and dual space (focusing on Fourier coefficients). A key challenge here is that the Pauli fidelities are no longer eigenvalues of the noise channel and are not directly learnable. We address this by introducing an additional Fourier transform during post-processing to filter out undesired components. This allows us to use the CB framework and repeatedly apply MCMs to learn products of Pauli fidelities. See \cref{fig:circuitExample} for a comparison between our algorithm and CB with interleaved gates.

On the other hand, not all parameters of noisy MCMs are learnable in a way that is robust to state preparation and terminating measurement (SPAM\footnote{Note that, while SPAM typically refers to ``state preparation and measurements'', in our context, we treat MCMs as a mid-circuit gadget different from the usual destructive measurement. Thus, the ``M'' in SPAM should be interpreted specifically as the terminating measurements.}) noise, similar to noisy Clifford gates~\cite{chen2023learnability}, due to the existence of gauge freedom~\cite{nielsen2021gate}. We give an exact and complete characterization about the learnable parameters of MCMs by generalizing the graph-theoretical framework of Ref.~\cite{chen2023learnability}, and show that our algorithm is able to learn all learnable parameters of noisy MCMs.

As a concrete application, our algorithm gives a general method to test whether a noisy MCM can be factorized as a measure-and-prepare instrument, which can help understand the underlying physical mechanism for the implementation of MCMs. See \cref{sec:summary} for a more detailed summary.

\subsection{General noisy MCMs}
We start by defining the noise model associated with MCMs. We consider an ideal MCM as a general Clifford gate $G$ on $n+m$ qubits, followed by a measurement on the $n$ ancilla qubits in the computational basis (see \cref{fig:MCMIllu}). The other $m$ qubits are left unmeasured. Note that the quantum register for the $n$ ancilla qubits is still present after the measurement, i.e. we keep the post-measurement quantum state. 
We consider this general model because MCMs are often used for qubit reset~\cite{bluvstein2024logical,baumer2023efficient,hashim2023quasi,mcewen2021removing}.
Also note that this setting incorporates subsystem measurements ($G$ equals to identity), syndrome measurements ($G$ is the syndrome extraction circuit), as well as Clifford gates without measurement ($n=0$) as special cases.

In the literature of noise characterization, noisy Clifford gates are often modeled as an ideal Clifford gate followed by a Pauli channel (channels where random Pauli error occur) or equivalently, a Pauli channel followed by an ideal Clifford gate.
This is due to the randomized compiling techniques~\cite{wallman2016noise,hashim2020randomized} which can twirl a general CPTP noise channel into a Pauli channel that is more tractable for characterization.
Motivated by recent developments in randomized compiling for subsystem measurements~\cite{beale2023randomized}, we extend these techniques to MCMs.
We show that general noisy MCMs can be twirled into the form of an ideal Clifford gate $G$ followed by a uniform stochastic instrument $\mathcal{U}$, which is a highly structured noisy measurement and its physical meaning can be understood as the following three parts:
\begin{enumerate}
    \item Suppose we obtain the result $k$ during measurement readout. From this we know that ideally, the $n$ ancilla qubits should be projected into the basis state $\ket{k}$ (throughout the paper, $\ket{k}$ means the computational basis state of index $k$). However, due to noise in the readout, the ancilla qubits are actually projected into a different basis state $\ket{k+a}$, where $a$ is an $n$-bit string that came from some probability distribution. 
    \item After the ancilla qubits are projected into $\ket{k+a}$, the noise can further affect the post-measurement state, which becomes $\ket{k+b}$, where $b$ is an $n$-bit string that came from some probability distribution.
    \item Finally, the measurement process causes an additional random Pauli error $P$ acting on the remaining $m$ qubits due to crosstalk.
\end{enumerate}
The above noise model is parameterized by Pauli error rates, which is a joint probability distribution $p_{a,b}^P$ satisfying $\sum_{P\in\mathcal{P}^m,a,b\in\mathbb{Z}_2^n}p_{a,b}^P=1$ ($\mathcal{P}^m$ is the $m$-qubit Pauli group).

Our goal is to learn the Pauli error rates, which has $4^{n+m}-1$ unknown parameters.
More specifically, on input density matrix $\rho$ acting on $n+m$ qubits, the output density matrix equals
\begin{equation}\label{eq:noisemodel}
\rho\mapsto \sum_{P\in\mathcal{P}^m,k,a,b\in\mathbb{Z}_2^n} p_{a,b}^P (P\otimes \ketbra{k+b}{k+a})G\rho G^\dag (P\otimes \ketbra{k+a}{k+b})\otimes\ketbra{k}_R.
\end{equation}
Here the third register $R$ is a classical register for measurement readout. A key point here is that the Pauli error rates $p_{a,b}^P$ are \emph{independent} from the readout result $k$; this is a consequence of randomized compiling.

\begin{figure}[t]
    \subfloat[Ideal and noisy Clifford gates\label{fig:CliffordIllu}]{\centering
\begin{minipage}[t][1.6cm][c]{0.4\textwidth}
\begin{tabular}{c}
\Qcircuit @C=1em @R=.7em {
    & /^n \qw & \gate{G} & \qw & \xrightarrow[version]{noisy} & & /^n \qw & \gate{G} & \gate{\Lambda} & \qw\\
    & & & & \push{\rule{3em}{0em}} & & & & &
}
\end{tabular}
\end{minipage}
}\subfloat[Ideal and noisy MCMs\label{fig:MCMIllu}]{\centering
\begin{minipage}[t][1.6cm][c]{0.5\textwidth}
\begin{tabular}{c}
\Qcircuit @C=1em @R=.7em {
    & /^{m} \qw & \multigate{1}{G} & \qw & \qw & & \push{\rule{3em}{0em}} & & /^{m} \qw & \multigate{1}{G} & \multimeasure{1}{\mathcal{U}} & \qw\\
    & /^{n} \qw & \ghost{G} & \meter & \qw & & \xrightarrow[version]{noisy} & & /^{n} \qw & \ghost{G} & \ghost{\mathcal{T}} & \qw\\
    & & & \control \cwx & \cw & k_0 & \push{\rule{3em}{0em}} & & & & \control \cwx & \cw & k\\
}
\end{tabular}
\end{minipage}
}\\
\subfloat[Measure-and-prepare instruments\label{fig:MaPIllu}]{\centering
\Qcircuit @C=1em @R=.7em {
    & /^m \qw & \multigate{1}{G} & \qw & \qw & \qw & \qw & \qw & & \push{\rule{3em}{0em}} & & /^m \qw & \multigate{1}{G} & \multimeasure{1}{\mathcal{U}} & \qw & \qw & \multigate{1}{\mathcal{E}} & \qw & \qw & \\
    & /^n \qw & \ghost{G} & \meter & \measureD{discard} & \lstick{\ket{0^n}} & \gate{X} & \qw & & \xrightarrow[version]{noisy} & & /^n \qw & \ghost{G} & \ghost{\mathcal{U}} & \measureD{discard} & \lstick{\ket{0^n}} & \ghost{\mathcal{E}} & \gate{X} & \qw & \\
    & & & \control \cwx & \cw & \cw & \control \cwx \cw & \cw & k_0 & \push{\rule{3em}{0em}} & & & & \control \cwx & \cw & \cw & \cw & \control \cwx \cw & \cw & k \\
    & & & & \push{\rule{7.5em}{0em}} & & & & & & & & & & \push{\rule{7.5em}{0em}} & & & &
}
}
\caption{Noise models considered in this work. (a) A noisy Clifford gate $G$ is modeled as an ideal gate $G$ followed by a Pauli channel $\Lambda$. (b) The noisy version of an MCM, whose corresponding ideal version consists of a Clifford gate $G$ followed by a subsystem measurement, is modeled as the ideal Clifford gate $G$ followed by a uniform stochastic instrument $\mathcal{U}$. The measurement outcome in the noiseless case is denoted $k_0$ whereas the outcome in the noisy case is denoted $k$. (c) We also consider a special family of noisy MCMs called measure-and-prepare instruments, in which the noise model can be decomposed into measurement noise and state-preparation noise.}
\label{fig:noisemodels}
\end{figure}

\subsection{Measure-and-prepare instruments}\label{sec:USIRIntro}
A natural subclass of the general noise model in \cref{eq:noisemodel} is the family of noise channels that can be factorized into measurement noise and state-preparation noise, which we call measure-and-prepare instruments (\cref{fig:MaPIllu}). 
For example, in certain experimental setups, MCMs are realized in a measure-and-prepare manner (e.g., destructive measurements with replaced free ancilla in atom array platforms~\cite{bluvstein2024logical}).
After an MCM, the ancillas are initialized in ground state $\ket{0}$, which are then adjusted into the state corresponding to the measurement result through the application of $X$ gates (NOT gates).
This sequence of operations results in a special type of MCM and gives us additional structures among the parameters.

A fundamental question about the physical process that implements general MCMs is whether it has the above factorized form. As an application of our result, in~\Cref{sec:application} we will give a general method for testing whether an unknown MCM can be represented by a measure-and-prepare instrument.

\subsection{Summary of results}
\label{sec:summary}
In the remainder of the paper, we will first introduce notations and preliminaries in~\cref{sec:notation}.
Subsequent sections will detail specific aspects of our results.

\paragraph{Fourier transforms and dual space formalism} (\cref{sec:Fourier}).
For noisy Clifford gates, standard characterization methods such as some CB-type protocols~\cite{chen2023learnability,van2023probabilistic,carignan2023error,flammia2020efficient} employ a Fourier transform, converting Pauli errors rates into quantities known as Pauli fidelities, which are eigenvalues of the Pauli channel and are easy to learn. Our first technical contribution is to generalize this conversion between the physical space (Pauli errors) and the dual (Fourier) space, by defining a suitable notion of ``Pauli fidelities'' for MCMs, which is a joint Fourier transform on the classical and quantum registers.\\

\paragraph{Generalized CB algorithm for characterizing MCMs} (\cref{sec:characterization}). Similar to CB with interleaved gates, our algorithm learns the Pauli fidelities and reconstructs the Pauli error rates via an inverse Fourier transform. However, the Pauli fidelities of MCMs are no longer eigenvalues of the noise channel, so learning these Pauli fidelities requires new techniques. To address this challenge, we further introduce a key technical idea: an additional Fourier transform on the classical registers during the data processing procedure, which allows us to learn a single desired component from sums of many components. This completes a generalized version of the CB algorithm to characterize MCMs.\\

\paragraph{Learnability using pattern transfer graph} (\cref{sec:learnability}). 
Furthermore, we extend the pattern transfer graph technique from~\cite{chen2023learnability} to determine which parameters in the noise model are SPAM-robustly learnable and which are not.
We show that our algorithm captures all the information that can be learned, which corresponds to the cycle space of the pattern transfer graph.
Conversely, information that our algorithm cannot learn is inherently unlearnable by any SPAM-robust algorithm.
Corresponding numerical simulation results are shown in \cref{fig:learnedCycle}.\\

\paragraph{Testing measure-and-prepare channels} (\cref{sec:application}).
As an application of our algorithm, we show that it can be used to determine whether an arbitrary MCM can be factorized as a measure-and-prepare instrument. This shows that our algorithm can be used to probe the fundamental physical process of the measurement process. Numerical results are shown in \cref{fig:learnedIndep}.\\

\paragraph{Numerical results} (\cref{sec:numerical}).
Finally, we conduct numerical simulations for the algorithm and applications to support our analysis.

\section{Notations and preliminaries}\label{sec:notation}

\textbf{Pauli operators, Clifford gates.}
In this paper, $\mathcal{P}^m:=\{I,X,Y,Z\}^m$ represents the $m$ qubit Pauli group (modulus phase), $\mathcal{C}$ represents the group of single qubit Clifford gates.

\textbf{Pauli operator with vector exponents.}
For $P\in\mathcal{P}^1$ and $x\in\mathbb{Z}_2^n$, we use $P^x$ to represent the $n$ qubit Pauli operator that act as $I$ on the $i$th qubit if the $i$th bit of $x$ is $0$, and act as $P$ on the $i$th qubit if the $i$th bit of $x$ is $1$.

\textbf{Inner product of Pauli operators.}
For $P,Q\in\mathcal{P}^m$, we define inner product $\langle P,Q\rangle$ to be $0$ if Pauli operators $P$ and $Q$ commute, and $1$ otherwise.\footnote{This is the symplectic inner product between the symplectic representation of $P$ and $Q$. Specifically, the symplectic representation $\phi$ of a Pauli operator $P\in\mathcal{P}^m$ is the unique length $2m$ bit string that satisfies $P=i^{\phi(P)^T\Upsilon\phi(P)}X^{\phi(P)_{\mathrm{odd}}}\cdot Z^{\phi(P)_{\mathrm{even}}}$, where $\phi(P)_{\mathrm{odd}}$ and $\phi(P)_{\mathrm{even}}$ denotes the substring of $\phi(P)$ consists of odd and even positions, repectively and $\Upsilon=\bigoplus_{j=1}^m\left(\begin{smallmatrix}0&1\\0&0\end{smallmatrix}\right)$. The inner product is defined as the binary symplectic form $\langle P,Q\rangle=\phi(P)^T(\Upsilon+\Upsilon^T)\phi(Q)\mod2$.}
That is, $PQ=(-1)^{\langle P,Q\rangle}QP$.

\textbf{Pauli weight patterns.}
Following Ref.~\cite{chen2023learnability}, for $P\in\mathcal{P}^m$, we define its Pauli weight pattern $pt(P)$ to be the $m$ bit string that is $0$ on the $i$th bit if $P$ act as identity on the $i$th qubit, and is $1$ on the $i$th bit if $P$ act as non-identity on the $i$th qubit.

\textbf{Quantum channels.}
Calligraphic letters represent channels, for example $\mathcal{E},\mathcal{M},\mathcal{T},\mathcal{U}$.
Specifically, $\mathcal{G}$, $\mathcal{H}$, $\mathcal{P}$, $\mathcal{X}$, $\mathcal{Z}$ represent specific channels of applying gates of the corresponding normal letters $G$, $H$, $P$, $X$, $Z$.
For example, $\mathcal{G}(\rho)=G\rho G^\dagger$. 

\textbf{Pauli channels.}
Pauli channels are represented by $\Lambda$ exclusively.
They are channels of applying a random Pauli operator according to a probability distribution,
\begin{equation}\label{eq:PCp}
\Lambda=\sum_{P\in\mathcal{P}^m}p^P\mathcal{P},
\end{equation}
where $\{p^P\}_{P\in\mathcal{P}^m}$ are called \textbf{Pauli error rates}.
% For unnormalized Pauli channels, they are just non-negative values, and for the other case they are a probability distribution.

\textbf{Super operator formalism.}
General treatment of the super operator formalism can be found in e.g.~\cite{greenbaum2015introduction}.
In short, vectorization refers to the isomorphism $|i\rangle\langle j|\mapsto|ij\rangle$. The inner product over the vectorized space is then given by $\langle\!\langle{O|P}\rangle\!\rangle=\Tr(O^\dagger P)$.
In this paper, for $k\in\mathbb{Z}_2^n$, $\lket{k}$ is the vectorization of computational basis state $|k\rangle\langle k|$.
And for $P\in\mathcal{P}^m$, double brackets $|P\rangle\!\rangle$ represents the vectorization of Pauli operator $P$. 

% \cs{At least define inner product?}
% \cs{Maybe we need normalized Pauli operator here.}
%\yunchao{use lket, lbra}

\textbf{Pauli fidelities.}
Recall from \cref{eq:PCp} that a Pauli channel is of the form $\Lambda=\sum_{P\in\mathcal{P}^m}p^P\mathcal{P}$.
Note that the Pauli operators are eigenvectors of the channel:
\begin{equation}
\Lambda|Q\rangle\!\rangle=\sum_{P\in\mathcal{P}^m}p^P\mathcal{P}|Q\rangle\!\rangle=\sum_{P\in\mathcal{P}^m}(-1)^{\langle P,Q\rangle}p^P|Q\rangle\!\rangle.
\end{equation}
The corresponding eigenvalues are defined as Pauli fidelities~\cite{flammia2020efficient}
\begin{equation}\label{eq:PCp2d}
\lambda^Q=\sum_{P\in\mathcal{P}^m}(-1)^{\langle P,Q\rangle}p^P.
\end{equation}
For normalized Pauli channels, the trace preserving condition $\sum_{p\in\mathcal{P}^m}p^P=1$ translates into $\lambda^I=1$.
The inverse transformation is given by
\begin{equation}\label{eq:PCd2p}
p^P=\frac{1}{4^m}\sum_{Q\in\mathcal{P}^m}(-1)^{\langle P,Q\rangle}\lambda^Q.
\end{equation}
Using the Pauli fidelities, we can write the Pauli channel in the dual space
\begin{equation}\label{eq:Pcdng}
\Lambda=\frac{1}{2^m}\sum_{Q\in\mathcal{P}^m}\lambda^Q|Q\rangle\!\rangle\langle\!\langle Q|.
\end{equation}
This uses the Pauli-transfer-matrix representation, which is diagonal for Pauli channels.
We say that the Pauli error rates lie in the physical space because they represent physical errors.
Correspondingly, we say that the Pauli fidelities lie in the dual space.

\textbf{Quantum instruments.}
Mathematically, general MCMs are modeled as quantum instruments, which are sets of completely positive, trace non-increasing maps $\{\mathcal{E}_j\}$ such that $\sum_j\mathcal{E}_j$ is trace preserving.
The action of it on a density operator $\rho$ is~\cite{wilde2011classical}
\begin{equation}
\rho\mapsto\sum_j\mathcal{E}_j(\rho)\otimes|j\rangle\langle j|.
\end{equation}
That is, with probability $p_j=\Tr(\mathcal{E}_j(\rho))$, outcome $j$ is observed and the output state becomes $\frac{1}{p_j}\mathcal{E}_j(\rho)$.

\section{Noise models and their Fourier transform}
\label{sec:Fourier}
\subsection{Uniform stochastic instrument}
As mentioned in the introduction, the motivation for modeling MCMs as uniform stochastic instruments is randomized compiling.
But we will leave the details to~\cref{app:MCMrc} and begin our discussion with a formal definition of the uniform stochastic instruments.

Using the super operator formalism, we can rewrite our noise model in~\cref{eq:noisemodel} to the form of quantum instruments.
\begin{definition}[Uniform stochastic instruments, Eq. 17 of Ref.~\cite{beale2023randomized}]
In physical space, the uniform stochastic instrument is the set $\{\mathcal{U}_k\}$ where each $\mathcal{U}_k$ is expressed by the following trace non-increasing map
\begin{align}
\mathcal{U}_k&\coloneqq\sum_{a,b\in\mathbb{Z}_2^n,P\in\mathcal{P}^m}p_{a,b}^P\mathcal{P}\otimes\lket{k+b}\lbra{k+a}\label{eq:USIp}\\
&=\sum_{a,b\in\mathbb{Z}_2^n}\Lambda_{a,b}\otimes\lket{k+b}\lbra{k+a}\label{eq:USIc}
\end{align}
here $p_{a,b}^P$ is a joint probability distribution on $\mathbb{Z}_2^n\times\mathbb{Z}_2^n\times\mathcal{P}^m$ called Pauli error rates and $\Lambda_{a,b}$ are unnormalized Pauli channels indexed by $a,b$ (note that they are independent of $k$).
\end{definition}

The overall noise model in~\cref{eq:noisemodel} can then be reexpressed as
\begin{equation}
    \lket{\rho}\mapsto \sum_{k\in\mathbb{Z}_2^n}\mc U_k\mc G\lket{\rho}\otimes \lket{k}_R.
\end{equation}
Let $\mc T_k:=\mc U_k\mc G$.
As the Pauli channel $\Lambda_{a,b}$ is independent of $k$, in the following we will often drop the third register $\lket{k}_R$ and work with a specific map $\mc T_k$.
Then we can see that $\{\mathcal{U}_k\}$ and $\{\mathcal{T}_k\}$ are quantum instruments since $\sum_{a,b\in\mathbb{Z}_2^n,P\in\mathcal{P}^m}p_{a,b}^P=1$.
%We assume throughout the paper that the Pauli error rates are not exactly zero and hence the noises do not lie on the boundary of physical noises.

Now we generalize the physical vs dual space picture to uniform stochastic instruments.
We define the Pauli fidelities $\lambda_{x,y}^Q$ as the Fourier transform of the Pauli error rates 
\begin{equation}\label{eq:USIp2d}
\lambda_{x,y}^Q=\sum_{a,b\in\mathbb{Z}_2^n,P\in\mathcal{P}^m}(-1)^{a\cdot x+b\cdot y+\langle P,Q\rangle}p_{a,b}^P.
\end{equation}
We note that the Pauli fidelities defined above are different from the Pauli fidelities of the unnormalized Pauli channels in \cref{eq:USIc} as there is an additional Fourier transformation on the subscript $a$ and $b$.
The role of the Pauli fidelities will become clear after the next lemma.

In fact, our protocol will first learn the Pauli fidelities and then use the inverse transformation
\begin{equation}\label{eq:USId2p}
p_{a,b}^P=\frac{1}{4^{n+m}}\sum_{x,y\in\mathbb{Z}_2^n,Q\in\mathcal{P}^m}(-1)^{a\cdot x+b\cdot y+\langle P,Q\rangle}\lambda_{x,y}^Q
\end{equation}
to obtain the Pauli error rates.

Using Pauli fidelities, we can write the uniform stochastic instrument and thus the instrument $\{\mathcal{T}_k\}$ of our model in the dual space.
\begin{lemma}\label{lem:unif}
Let $\mc U_k$ be a uniform stochastic instrument defined in \cref{eq:USIc}, and let $\mc T_k=\mc U_k \mc G$ where $\mc G$ is a (Clifford) unitary channel. Then $\mathcal{T}_k$ can be expressed in dual space as
\begin{equation}\label{eq:USId}
\mathcal{T}_k=\frac{1}{2^{2n+m}}\sum_{x,y\in\mathbb{Z}_2^n,Q\in\mathcal{P}^m}(-1)^{k\cdot(x+y)}\lambda_{x,y}^Q|Q\otimes Z^y\rangle\!\rangle\langle\!\langle\mathcal{G}^\dagger(Q\otimes Z^x)|.
\end{equation}
\end{lemma}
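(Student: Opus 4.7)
The plan is to verify the identity by a direct calculation: substitute the inverse Fourier transform~\eqref{eq:USId2p} for $p_{a,b}^P$ into the physical-space definition~\eqref{eq:USIp} of $\mathcal{U}_k$, exchange the order of summation, and collapse the inner sums using orthogonality of characters on $\mathcal{P}^m$ and on $\mathbb{Z}_2^n$. Composing with $\mathcal{G}$ at the end (using the unitary adjoint identity $\langle\!\langle O|\mathcal{G}=\langle\!\langle \mathcal{G}^\dagger(O)|$) produces the stated expression for $\mathcal{T}_k=\mathcal{U}_k\mathcal{G}$. Because the Pauli sector and the ancilla sector factorize in $\mathcal{U}_k$, the calculation splits naturally into two independent character-sum calculations.

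After substitution and rearrangement, the Pauli sector contributes $\sum_{P\in\mathcal{P}^m}(-1)^{\langle P,Q\rangle}\mathcal{P}$. Writing each Pauli channel $\mathcal{P}$ in its diagonal Pauli-transfer-matrix form $\mathcal{P}=2^{-m}\sum_{Q'}(-1)^{\langle P,Q'\rangle}|Q'\rangle\!\rangle\langle\!\langle Q'|$ and using non-degeneracy of the symplectic form, i.e.\ $\sum_{P\in\mathcal{P}^m}(-1)^{\langle P,Q+Q'\rangle}=4^m\delta_{Q,Q'}$, gives $\sum_{P}(-1)^{\langle P,Q\rangle}\mathcal{P}=2^{m}|Q\rangle\!\rangle\langle\!\langle Q|$. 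This is the standard computation that appears in the Clifford version of CB, and it is the cleanest step.

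The ancilla sector, $\sum_{a,b\in\mathbb{Z}_2^n}(-1)^{a\cdot x+b\cdot y}\lket{k+b}\lbra{k+a}$, is where the phase $(-1)^{k\cdot(x+y)}$ in the target formula must appear. The shift-of-variable $a\mapsto k+a$, $b\mapsto k+b$ pulls the $k$-dependence into an overall factor $(-1)^{k\cdot(x+y)}$, leaving $\sum_{a,b}(-1)^{a\cdot x+b\cdot y}\lket{b}\lbra{a}$. Expanding each $\lket{b}=2^{-n}\sum_{y'}(-1)^{b\cdot y'}\lket{Z^{y'}}$ (and analogously for $\lbra{a}$), and using $\sum_{b}(-1)^{b\cdot(y+y')}=2^n\delta_{y,y'}$, collapses this to $\lket{Z^y}\lbra{Z^x}$. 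Keeping track of these combinatorial factors is the main bookkeeping obstacle; the only subtlety is that the transform on the classical registers is \emph{not} the Pauli-fidelity Fourier transform of the internal Pauli channel $\Lambda_{a,b}$, but the joint Fourier transform defined in~\eqref{eq:USIp2d}, so one must resist the temptation to absorb intermediate factors incorrectly.

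Finally, combining the two factorized pieces produces a normalization $\frac{2^m}{4^{n+m}}=\frac{1}{2^{2n+m}}$ and the tensor-product structure $|Q\otimes Z^y\rangle\!\rangle\langle\!\langle Q\otimes Z^x|$. Composing on the right with $\mathcal{G}$ turns $\langle\!\langle Q\otimes Z^x|$ into $\langle\!\langle \mathcal{G}^\dagger(Q\otimes Z^x)|$ since $\mathcal{G}$ is a unitary channel. This yields exactly~\eqref{eq:USId}. I do not expect any genuine difficulty beyond careful indexing; the entire argument is character-sum arithmetic once the correct ansatz is in place.
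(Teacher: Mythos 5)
Your proposal is correct and follows essentially the same route as the paper: both are direct Fourier/character-sum computations that expand the computational-basis projectors on the ancilla register in the $Z$-Pauli basis, use the diagonal Pauli-transfer-matrix form on the unmeasured qubits, extract the $(-1)^{k\cdot(x+y)}$ phase from the $k$-dependence, and finish with $\langle\!\langle O|\mathcal{G}=\langle\!\langle\mathcal{G}^\dagger(O)|$. The only difference is cosmetic — the paper derives the coefficient and recognizes it as $\lambda_{x,y}^Q$ via Eq.~(\ref{eq:USIp2d}), whereas you substitute the inverse transform (\ref{eq:USId2p}) and collapse the sums by orthogonality.
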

\begin{proof}
For a uniform stochastic instrument, we have
\begin{align}
\mathcal{U}_k=&\sum_{a,b\in\mathbb{Z}_2^n}\Lambda_{a,b}\otimes\lket{k+b}\lbra{k+a}\\
=&\sum_{a,b\in\mathbb{Z}_2^n}\Lambda_{a,b}\otimes\left(\frac{1}{2^n}\sum_{Q\in\mathcal{P}^n}|Q\rangle\!\rangle\langle\!\langle Q|\right)\lket{k+b}\lbra{k+a}\left(\frac{1}{2^n}\sum_{P\in\mathcal{P}^n}|P\rangle\!\rangle\langle\!\langle P|\right)\label{eq:basischange}\\
=&\sum_{a,b\in\mathbb{Z}_2^n}\Lambda_{a,b}\otimes\left(\frac{1}{2^n}\sum_{y\in\mathbb{Z}_2^n}|Z^y\rangle\!\rangle\langle\!\langle Z^y|\right)\lket{k+b}\lbra{k+a}\left(\frac{1}{2^n}\sum_{x\in\mathbb{Z}_2^n}|Z^x\rangle\!\rangle\langle\!\langle Z^x|\right)\label{eq:simplifyzero}\\
=&\frac{1}{2^{2n}}\sum_{x,y\in\mathbb{Z}_2^n}(-1)^{k\cdot(x+y)}\sum_{a,b\in\mathbb{Z}_2^n}(-1)^{a\cdot x+b\cdot y}\Lambda_{a,b}\otimes|Z^y\rangle\!\rangle\langle\!\langle Z^x|\label{eq:Z-k_inner}\\
=&\frac{1}{2^{2n}}\sum_{x,y\in\mathbb{Z}_2^n}(-1)^{k\cdot(x+y)}\sum_{a,b\in\mathbb{Z}_2^n}(-1)^{a\cdot x+b\cdot y}\left(\frac{1}{2^m}\sum_{P,Q\in\mathcal{P}^m}(-1)^{\langle P,Q\rangle}p_{a,b}^P|Q\rangle\!\rangle\langle\!\langle Q|\right)\otimes|Z^y\rangle\!\rangle\langle\!\langle Z^x|\label{eq:paulirewrite}\\
=&\frac{1}{2^{2n+m}}\sum_{x,y\in\mathbb{Z}_2^n,Q\in\mathcal{P}^m}(-1)^{k\cdot(x+y)}\left(\sum_{a,b\in\mathbb{Z}_2^n,P\in\mathcal{P}^m}(-1)^{a\cdot x+b\cdot y+\langle P,Q\rangle}p_{a,b}^P\right)|Q\otimes Z^y\rangle\!\rangle\langle\!\langle Q\otimes Z^x|\\
=&\frac{1}{2^{2n+m}}\sum_{x,y\in\mathbb{Z}_2^n,Q\in\mathcal{P}^m}(-1)^{k\cdot(x+y)}\lambda_{x,y}^Q|Q\otimes Z^y\rangle\!\rangle\langle\!\langle Q\otimes Z^x|.\label{eq:fidelityform}
\end{align}
\cref{eq:basischange} is a basis change.
\cref{eq:simplifyzero} uses the fact that Pauli operators that have non-zero inner product (over the vectorized space) with a computational basis state must be a tensor product of $I$ and $Z$s.
The phases in \cref{eq:Z-k_inner} comes from the inner products $\langle\!\langle Z^y\mid k+b\rangle\!\rangle$ and $\langle\!\langle k+a\mid Z^x\rangle\!\rangle$.
\cref{eq:paulirewrite} rewrites the Pauli channels using \cref{eq:Pcdng}.

As can be seen from \cref{eq:fidelityform}, the Pauli fidelities are \emph{not} eigenvalues of the uniform stochastic instrument.
They connect the Pauli operators $Q\otimes Z^y$ and $Q\otimes Z^x$.

Now put an ideal gate $G$ before this instrument.
Since $\mathcal{G}^\dagger(Q\otimes Z^x)$ becomes $Q\otimes Z^x$ after applying $\mathcal{G}$, $\langle\!\langle Q\otimes Z^x|\mathcal{G}=\langle\!\langle\mathcal{G}^\dagger(Q\otimes Z^x)|$, we get the result.
%\yunchao{explain more}
\end{proof}
When noise rates are small, we would expect that $p_{0,0}^I$ is close to $1$, while other Pauli error rates are small.
This means that all Pauli fidelities will be close to $1$.
Note that the trace preserving condition $\sum_{P\in\mathcal{P}^m,a,b\in\mathbb{Z}_2^n}p_{a,b}^P=1$ translates into $\lambda_{0,0}^I=1$ here.

\begin{table}[htp]
\centering
\resizebox{0.85\textwidth}{!}{
\begin{NiceTabular}{|c|c|c|c|}
\hline
& & Physical space (Pauli error rate) & Dual space (Pauli fidelity) \\\hline
\Block{3-1}{Noisy Clifford\\gates\\(\cref{fig:CliffordIllu})} & \makecell{Pauli channels\\(with Clifford gates)} & \makecell{\(\displaystyle\sum_{P\in\mathcal{P}^m}p^P\mathcal{P}\mathcal{G}\)\\(\cref{eq:PCp})} & \makecell{\(\displaystyle\frac{1}{2^m}\sum_{Q\in\mathcal{P}^m}\lambda^Q|Q\rangle\!\rangle\langle\!\langle\mathcal{G}^\dagger(Q)|\)\\(\cref{eq:PCd})} \\\cline{2-4}
~ & learnability & \makecell{same as the Pauli fidelities\\(see \cref{sec:invar})} & \makecell{cycle space are learnable,\\cut space are not\\(Theorem 2 in \cite{chen2023learnability})} \\\cline{2-4}
~ & conversion formula & \makecell{\(\displaystyle p^P=\frac{1}{4^m}\sum_{Q\in\mathcal{P}^m}(-1)^{\langle P,Q\rangle}\lambda^Q\)\\(\cref{eq:PCd2p})} & \makecell{\(\displaystyle \lambda^Q=\sum_{P\in\mathcal{P}^m}(-1)^{\langle P,Q\rangle}p^P\)\\(\cref{eq:PCp2d})} \\\hline
\Block{3-1}{Noisy MCMs\\(\cref{fig:MCMIllu})} & \makecell{uniform stochastic\\instruments\\(with Clifford gates)} & \makecell{\(\displaystyle \sum_{\substack{a,b\in\mathbb{Z}_2^n\\P\in\mathcal{P}^m}}p_{a,b}^P(\mathcal{P}\otimes\lket{k+b}\lbra{k+a})\mathcal{G}\)\\(\cref{eq:USIp})} & \parbox{5cm}{\begin{align*}&\frac{1}{2^{2n+m}}\sum_{\substack{x,y\in\mathbb{Z}_2^n\\Q\in\mathcal{P}^m}}(-1)^{k\cdot(x+y)}\\&\lambda_{x,y}^Q|Q\otimes Z^y\rangle\!\rangle\langle\!\langle\mathcal{G}^\dagger(Q\otimes Z^x)|\end{align*}\\(\cref{eq:USId})} \\\cline{2-4}
~ & learnability & \makecell{different from the Pauli fidelities,\\partially characterized\\(see \cref{sec:pLearn})} & \makecell{cycle space are learnable,\\cut space are not\\(see \cref{sec:dLearn})} \\\cline{2-4}
~ & conversion formula & \parbox{5cm}{\begin{align*}p_{a,b}^P=&\frac{1}{4^{n+m}}\sum_{\substack{x,y\in\mathbb{Z}_2^n\\Q\in\mathcal{P}^m}}\\&(-1)^{a\cdot x+b\cdot y+\langle P,Q\rangle}\lambda_{x,y}^Q\end{align*}\\(\cref{eq:USId2p})} & \parbox{5cm}{\begin{align*}\lambda_{x,y}^Q=&\sum_{\substack{a,b\in\mathbb{Z}_2^n\\P\in\mathcal{P}^m}}\\&(-1)^{a\cdot x+b\cdot y+\langle P,Q\rangle}p_{a,b}^P\end{align*}\\(\cref{eq:USIp2d})} \\\hline
\Block{3-1}{Noisy measure-\\and-prepare\\(\cref{fig:MaPIllu})} & \makecell{measure and\\prepare instruments\\(with Clifford gates)} & \parbox{5cm}{\begin{align*} \sum_{\substack{a,b\in\mathbb{Z}_2^n\\P\in\mathcal{P}^m\\P_1\cdot P_2=P}}&q_a^{P_1}r_b^{P_2}\\&(\mathcal{P}\otimes\lket{k+b}\lbra{k+a})\mathcal{G}\end{align*}\\(\cref{eq:USIRp})} & \parbox{5cm}{\begin{align*}&\frac{1}{2^{2n+m}}\sum_{\substack{x,y\in\mathbb{Z}_2^n\\Q\in\mathcal{P}^m}}(-1)^{k\cdot(x+y)}\\&\zeta_x^Q\xi_y^Q|Q\otimes Z^y\rangle\!\rangle\langle\!\langle\mathcal{G}^\dagger(Q\otimes Z^x)|\end{align*}\\(\cref{eq:USIRd})} \\\cline{2-4}
~ & learnability & \makecell{same as uniform\\stochastic instruments,\\but overly constrained\\(see \cref{sec:application})} & \makecell{same as uniform\\stochastic instruments,\\but overly constrained\\(see \cref{sec:application})} \\\cline{2-4}
~ & conversion formula & \parbox{5cm}{\begin{align*}q_a^P=&\frac{1}{2^{n+2m}}\sum_{\substack{x\in\mathbb{Z}_2^n\\Q\in\mathcal{P}^m}}\\&(-1)^{a\cdot x+\langle P,Q\rangle}\zeta_x^Q\\r_b^P=&\frac{1}{2^{n+2m}}\sum_{\substack{x\in\mathbb{Z}_2^n\\Q\in\mathcal{P}^m}}\\&(-1)^{b\cdot x+\langle P,Q\rangle}\xi^Q\end{align*}\\(\cref{eq:USIRd2p})} & \parbox{5cm}{\begin{align*}\zeta_x^Q=&\sum_{\substack{a\in\mathbb{Z}_2^n\\P\in\mathcal{P}^m}}\\&(-1)^{a\cdot x+\langle P,Q\rangle}q_a^P\\\xi^Q_x=&\sum_{\substack{a\in\mathbb{Z}_2^n\\P\in\mathcal{P}^m}}\\&(-1)^{b\cdot x+\langle P,Q\rangle}r_b^P\end{align*}\\(\cref{eq:USIRp2d})} \\\hline
\end{NiceTabular}
}
\caption{Summary comparing physical space and dual space, integrating our findings with previous works.
Pointers are provided for relevant definitions and details.
The concept of cycle space and cut space will be discussed in~\cref{sec:PTG}.}
\label{tb:comp}
\end{table}

\subsection{Measure-and-prepare instruments}\label{sec:MPformal}
It turns out that the additional structure for measure-and-prepare instruments over uniform stochastic instruments is that the Pauli error rates $p$ factorizes according to $p_{a,b}^P=\sum_{P_1\cdot P_2=P}q_a^{P_1}r_b^{P_2}$, where $q$ and $r$ are two probability distributions and the Pauli fidelities factorizes according to  $\lambda'^Q_{x,y}=\zeta_x^Q\xi_y^Q$.
We defer the proofs to~\cref{sec:refresh}.

A comparison between the physical and dual space is listed in \cref{tb:comp}, which summarizes our results.
\FloatBarrier

\section{Algorithms for noise characterization}\label{sec:characterization}

\begin{figure}[t]
\centering
\begin{minipage}[b]{0.35\textwidth}
\subfloat[CB with interleaved gates]{\centering
\begin{tabular}{l}
    \Qcircuit @C=0.6em @R=0.4em {
    & \ctrl{1} & \qw & \gate{\phantom{I}} & \ctrl{1} & \qw & \gate{\phantom{I}} & \ctrl{1} & \qw & \gate{\phantom{I}} & \ctrl{1} & \qw & \gate{\phantom{I}} & \meter\\
    & \targ & \qw & \gate{\phantom{I}} & \targ & \qw & \gate{\phantom{I}} & \targ & \qw & \gate{\phantom{I}} & \targ & \qw & \gate{\phantom{I}} & \meter\\}\\
    \\
    1.~Prepare Pauli eigenstate\\
    2.~Apply repeated Clifford gates\\
    3.~Estimate Pauli observable\\
    4.~Obtain sum of log Pauli fidelities\\
    5.~Perform Fourier transform,\\~~~~obtain Pauli error
\end{tabular}
}
\end{minipage}\quad\quad
\begin{minipage}[b]{0.6\textwidth}
\subfloat[Our protocol]{\centering
\begin{tabular}{l}
\Qcircuit @C=0.6em @R=0.4em {
    & \ctrl{1} & \qw & \gate{\phantom{I}} & \ctrl{1} & \qw & \gate{\phantom{I}} & \ctrl{1} & \qw & \gate{\phantom{I}} & \ctrl{1} & \qw & \gate{\phantom{I}} & \meter\\
    & \targ & \meter & \gate{\phantom{I}} & \targ & \meter & \gate{\phantom{I}} & \targ & \meter & \gate{\phantom{I}} & \targ & \meter & \gate{\phantom{I}} & \meter\\
    & & \cwx & & & \cwx & & & \cwx & & & \cwx & & \\
}\\
\\
1.~Prepare Pauli eigenstate\\
    2.~Apply repeated MCMs\\
    3.~Estimate Pauli observable\\
    {\color{myblue}\textit{4.~Perform Fourier transform on measurement outcome}}\\
    5.~Obtain sum of log Pauli fidelities\\
    6.~Perform Fourier transform, obtain Pauli error
\end{tabular}
}
\end{minipage}
\caption{A comparison between CB with interleaved gates (a) and our protocol for characterizing MCMs (b). Here, blank gates represent single qubit Clifford gates. The quantum circuit in (a) is used to characterize a noisy $\mathrm{CNOT}$ gate, while the circuit in (b) is used to characterize a noisy MCM (measuring the target qubit after applying a $\mathrm{CNOT}$). The key difference between the two algorithms is an additional Fourier transform on measurement outcome (highlighted in blue).}
\label{fig:circuitExample}
\end{figure}

\subsection{Generalizing CB-type protocols}\label{sec:CBrecap}
For intuition, we will start with the special case of $n=0$ in our CB-type protocol.
In this scenario, our protocol simplifies to the CB with interleaved gates protocol for benchmarking noisy Clifford gates.
Variants of this protocol have been discussed in previous works, such as Refs.~\cite{flammia2020efficient,flammia2022averaged,carignan2023error}, with the exact version employed in Ref.~\cite{chen2023learnability} and utilized in Sec. SV of Ref.~\cite{van2023probabilistic}.
However, given that this is a special case, we will not dig into a comprehensive treatment here.

In CB with interleaved gates, a noisy Clifford gate to be characterized is modeled as an ideal Clifford gate followed by a Pauli noise channel.
Naturally we would want to learn the Pauli error rates of the error.
However, concatenation of Pauli channels corresponds to convolution of Pauli error rates, making it complicated to work with.
Therefore, a Fourier transformation is employed and the theory evolves within the dual space.
In the dual space, combine \cref{eq:Pcdng} to a Clifford gate $\mathcal{G}$ we get the noisy version of it,
\begin{equation}\label{eq:PCd}
\widetilde{\mc G}=\Lambda\circ\mc G=\frac{1}{2^m}\sum_{Q\in\mathcal{P}^m}\lambda^Q|Q\rangle\!\rangle\langle\!\langle\mathcal{G}^\dagger(Q)|.
\end{equation}
That is, the $\mathcal{G}^\dagger(Q)$ component of the input density matrix is converted to the $Q$ component of the output density matrix and shrunk by a factor of $\lambda^Q$ at the same time.
CB with interleaved gates repeatedly applies such noisy gates for multiple times and measures the component in the final density matrix.
In addition, single-qubit Clifford gates are interleaved between the mid-circuit measurements to ensure the correct concatenation of the transitions.
Take the simplest case where there are no interleaving single qubit Clifford gates as an example.
Start with a noisy input state $\frac{I+\lambda_S^PP}{2^m}$ ($\lambda_S^P$ is an unknown state preparation noise parameter), repeatedly applying the noisy Clifford gate $l$ times gives
\begin{equation}
\widetilde{\mathcal{G}}^l\left(\frac{I+\lambda_S^PP}{2^m}\right)=\frac{I+\lambda_S^P\lambda^{\mathcal{G}(P)}\cdots\lambda^{\mathcal{G}^l(P)}\mathcal{G}^l(P)}{2^m}.
\end{equation}
If we then measure (with noise) the observable $\mathcal{G}^l(P)$, the expected value we get will be $\lambda_S^P\lambda^{\mathcal{G}(P)}\cdots\lambda^{\mathcal{G}^l(P)}\lambda_M^{\mathcal{G}^l(P)}$, where $\lambda_M^{\mathcal{G}^l(P)}$ is an unknown measurement noise.
Suppose $P$ is invariant under $\mc G$, the expectation value becomes $\lambda_S^P(\lambda^P)^l\lambda_M^P$. In this case, by conducting experiments with different $l$ and conduct an exponential fitting, one can estimate $\lambda^P$ independently of the SPAM noise. For generic $P$ that is not invariant under $\mc G$, since $\{\mathcal{G}(P),\mathcal{G}^2(P),\ldots\}$ is periodic, similar approaches can be used to estimate certain product of Pauli fidelities SPAM robustly.
% for experiments with different number of noisy gates the Pauli fidelities are accumulated for a different number of times, and thus an exponential fit can be applied to extract the Pauli fidelites.

Inspired by this, we interleave single qubit Clifford gates between noisy MCMs.
An illustrative example circuit is presented in \cref{fig:circuitExample}.
However, in our MCM scenario, Pauli operators cease to be eigenvectors (cf. \cref{lem:unif} and note the summation over $y$ therein).
In fact, prior to the characterization, the eigenvectors are unknown, so we cannot prepare and measure the eigenvectors as in the CB with interleaved gates case.
Hence we still choose to prepare and measure Pauli observables, but now simple interleaving is insufficient because multiple transitions occur simultaneously, superposing together and yielding complicated results.
In detail, if we perform many experiments, group the terminating measurement results according to the MCM results and average them, the corresponding conditional expectation values are compromised of many terms, each term is a product of Pauli fidelities.
This new phenomenon is illustrated in \cref{fig:virtSpaceComp}.
\input{bigTikz}

Details of the figure may be understood more easily by referencing \cref{sec:protocol}.
Each column of nodes corresponds to a vectorization of a Pauli operator.
An arrow from column $P$ to column $Q$ represents $|Q\rangle\!\rangle\langle\!\langle P|$ together with the associated coefficients which are Pauli fidelities up to signs.
In \cref{fig:PCcompose}, arrows start at $\mathcal{G}^\dagger(Q)$ and end at $Q$ for $Q\in\mathcal{P}^m$.
Take $Q=IZ$ for example, we have $\mathcal{G}^\dagger(Q)=\mathrm{CNOT}^\dagger\cdot IZ\cdot \mathrm{CNOT}=ZZ$ and $Q=IZ$, so we get arrows from $ZZ$ to $IZ$.
Similarly, in \cref{fig:USIcompose}, arrows start at $\mathcal{G}^\dagger(Q\otimes Z^x)$ and end at $Q\otimes Z^y$ for $x,y\in\mathbb{Z}_2^n$ and $Q\in\mathcal{P}^m$.
Take $Q=I$, $x=1$, $y\in\{0,1\}$ for example, we have $\mathcal{G}^\dagger(Q\otimes Z^x)=\mathrm{CNOT}^\dagger\cdot IZ\cdot \mathrm{CNOT}=ZZ$, $Q\otimes Z^y\in\{II,IZ\}$, so we get arrows from $ZZ$ to $II$ and $IZ$.
Arrows of a specific row represent all elements of $\widetilde{\mathcal{G}}$ or $\mathcal{T}$ of a specific time.
By distributive law, the composition (the product) contains all possible combinations of edges from the channels.
However, since the vectorized Pauli operators are pairwise orthogonal, only consecutive edges (paths) may have non-zero contribution to the product since they correspond to the product of the form $|R\rangle\!\rangle\langle\!\langle Q|Q\rangle\!\rangle\langle\!\langle P|\cdots$, with coefficients omitted.
Some of the edges cannot form consecutive paths, indicating that their contribution is completely eliminated by subsequent randomized compiling, so if we want to learn them we should change the interleaving single qubit Clifford gates to make them consecutive.
If we specify a Pauli component of the input state and a terminating measurement, only paths of the specific starting point and end point will have contribution to the probabilities.

To disentangle this superposition of different transition paths so as to extract a single term, we need to apply another Fourier transformation on the measurement results in our data processing procedure.
It involves aggregating the averages while incorporating the Fourier coefficients.
This is one of our new ideas, and further details will be provided in \cref{sec:protocol}.

\subsection{Pattern transfer graph}\label{sec:PTG}
Before formally introducing our protocol, we first need to define the \emph{pattern transfer graph}, which is a directed graph $(V,E)$ where vertices $V=\mathbb{Z}_2^{n+m}$ corresponds to Pauli weight patterns and edges $E=\{e_{x,y}^Q:=(pt(\mathcal{G}^\dagger(Q\otimes Z^x)),pt(Q\otimes Z^y))|x,y\in\mathbb{Z}_2^n,Q\in\mathcal{P}^m\}$ corresponds to log Pauli fidelities.
An example of the Pattern transfer for $G=\mathrm{CNOT}$ ($m=n=1$) is shown in \cref{fig:pt}.
Our graph can be viewed as a generalization of the pattern transfer graph defined in~\cite{chen2023learnability}. For comparison, the pattern transfer graph for $G=\mathrm{CNOT}$ in the CB with interleaved gates case ($m=2$, $n=0$) is also included\footnote{
Note that, our labeling convention is slightly different from Ref.~\cite{chen2023learnability} because in our model, the noise happens after the Clifford gate, while in Ref.~\cite{chen2023learnability} the noise happens before the Clifford gate.
As a result, the pattern transfer graph defined in Ref.~\cite{chen2023learnability} is $(V,E)$ with $V=\mathbb{Z}_2^m$ and $E=\{e^Q:=(pt(Q),pt(\mathcal{G}(Q))|Q\in\mathcal{P}^m\}$, while our pattern transfer graph is defined as $(V,E)$ with $V=\mathbb{Z}_2^{n+m}$ and $E=\{e_{x,y}^Q:=(pt(\mathcal{G}^\dagger(Q\otimes Z^x)),pt(Q\otimes Z^y))|x,y\in\mathbb{Z}_2^n,Q\in\mathcal{P}^m\}$.
}.

\begin{figure}[ht]
\centering
\begin{tikzpicture}
[on grid,
pt/.style={circle, draw, thick, inner sep=0pt,minimum size=8mm, scale=1},
el/.style = {inner sep=4pt, align=left, scale=1}]
\node[pt] (00)               {$00$};
\node[pt] (01) [right=2.5cm of 00] {$01$};
\node[pt] (10) [below=2.5cm of 00] {$10$};
\node[pt] (11) [below=2.5cm of 01] {$11$};
\path[->] (00) edge node[el, auto] {$e^I_{0,1}$} (01);
\path[->] (01) edge node[el, above, pos=0.8] {$e^Z_{1,0}$} (10);
\path[->] (10) edge[bend left] node[el, below] {$e^Z_{0,1}$} (11);
\path[->] (11) edge node[el, below, pos=0.8] {$e^I_{1,0}$} (00);
\path[->] (11) edge[bend left] node[el, auto] {$e^X_{0,0},e^X_{1,0},e^Y_{0,0},e^Y_{1,0}$} (10);
\path[->] (11) edge[bend left] node[el, right] {$e^I_{1,1}$} (01);
\path[->] (01) edge[bend left] node[el, auto] {$e^Z_{1,1}$} (11);
\path[->] (00) edge[loop left] node[el, auto] {$e^I_{0,0}$} (00);
\path[->] (10) edge[loop left] node[el, auto] {$e^Z_{0,0}$} (10);
\path[->] (11) edge[loop right] node[el, auto, align=center] {$e^X_{0,1},e^X_{1,1}$\\$e^Y_{0,1},e^Y_{1,1}$} (11);
\draw [dashed] (5cm,-4cm) -- (5cm,1cm);
\node[pt] (c00) [right=5cm of 01]  {$00$};
\node[pt] (c01) [right=2.5cm of c00] {$01$};
\node[pt] (c10) [below=2.5cm of c00] {$10$};
\node[pt] (c11) [below=2.5cm of c01] {$11$};
\path[->] (c00) edge[loop left] node[el, auto] {$e^{II}$} (c00);
\path[->] (c01) edge[loop right] node[el, auto] {$e^{IX}$} (c01);
\path[->] (c10) edge[loop left] node[el, auto] {$e^{ZI}$} (c10);
\path[->] (c11) edge[loop right] node[el, auto] {$e^{XZ},e^{ZX},$\\$e^{YY},e^{XY},e^{YZ}$} (c11);
\path[->] (c11) edge[bend left] node[el, right] {$e^{IZ},$\\$e^{IY}$} (c01);
\path[->] (c01) edge[bend left] node[el, right] {$e^{ZY},$\\$e^{ZZ}$} (c11);
\path[->] (c11) edge[bend left] node[el, below] {$e^{XI},e^{YI}$} (c10);
\path[->] (c10) edge[bend left] node[el, below] {$e^{YX},e^{XX}$} (c11);
\end{tikzpicture}
\caption{left: Pattern transfer graph for $G=\mathrm{CNOT}$. Here $m=n=1$. The ancilla is the target qubit. This is the case for $X$ measurements using ancilla. As an example, the start point for $e_{0,1}^I$ is $pt(\mathcal{G}^\dagger(II))=00$, and the end point for $e_{0,1}^I$ is $pt(IZ)=01$.
right: Pattern transfer graph for $G=\mathrm{CNOT}$. Here $m=2$, $n=0$. No qubits are measured, so the situation degenerates into the CB with interleaved gates case. This graph can be found in~\cite{chen2023learnability}. Note that the rules are slightly different so the labels are different. See the footnote for explanation.}
\label{fig:pt}
\end{figure}

With pattern transfer graph in hand, we need some tools from graph theory~\cite{harary2018graph,bollobas1998modern} to exploit its power.
A (directed) path is an alternating sequence of vertices and edges, $v_0,e_1,v_1,\ldots,e_n,v_n$ such that $e_i=(v_{i-1},v_i)$.

Let $\epsilon_i\in\mathbb{R}$.
A \emph{$0$-chain} is a formal linear combination of vertices $\sum\epsilon_iv_i$, while a \emph{$1$-chain} is a formal linear combination of edges $\sum\epsilon_ie_i$.
For example, $-\,\Circled{01}+2\,\Circled{10}-\,\Circled{11}$ is a $0$-chain and $e^Z_{1,1}+2e^X_{0,0}$ is a $1$-chain (in this section, all graph theory examples are considered for \cref{fig:pt} left).
We emphasize that for pattern transfer graph, vertices are elements in $\mathbb{Z}_2^{n+m}$ (Pauli weight patterns), but the readers shall not confuse $0$-chain with addition in $\mathbb{Z}_2^{n+m}$ as we never perform addition on Pauli weight patterns in $\mathbb{Z}_2^{n+m}$ sense.
In the previous example of $0$-chain, we have circled the vertices for the distinction.
The \emph{edge space} $C$ is the vector space over $\mathbb{R}$ formed by $1$-chains, together with inner product $\langle\sum\epsilon_ie_i,\sum\epsilon_i'e_i\rangle=\sum\epsilon_i\epsilon_i'$.
For example, $e^Z_{1,1}+2e^X_{0,0}\in C$ and $\langle e^Z_{1,1}+2e^X_{0,0},e^I_{1,1}+e^Z_{1,1}\rangle=1$.

The \emph{boundary operator} $\partial$ is a linear operator that sends $1$-chains to $0$-chains such that if $e=(u,v)$, then $\partial e=v-u$.
For example, $\partial(e^Z_{1,1}+2e^X_{0,0})=(\Circled{11}-\,\Circled{01})+2(\Circled{10}-\,\Circled{11})=-\,\Circled{01}+2\,\Circled{10}-\,\Circled{11}$.
The \emph{coboundary operator} $\delta$ is a linear operator that sends $0$-chains to $1$-chains such that $\delta(v)=\sum\epsilon_ie_i$ where $\epsilon_i=1$ if $e_i=(u,v)$ for some $u\neq v$, $\epsilon_i=-1$ if $e_i=(v,u)$ for some $u\neq v$, and $\epsilon_i=0$ otherwise.
For example, $\delta(\Circled{00})=e^I_{1,0}-e^I_{0,1}$.
A \emph{cycle vector} is a $1$-chain with boundary $0$.
The \emph{cycle space} $Z$ is the subspace of $C$ formed by all cycle vectors.
For example, $\partial(e^I_{1,1}+e^Z_{1,1})=(\Circled{01}-\,\Circled{11})+(\Circled{11}-\,\Circled{01})=0$, so $e^I_{1,1}+e^Z_{1,1}\in Z$.
A \emph{cut vector} is a coboundary of some $0$-chain.
The \emph{cut space} $U$ is the subspace of $C$ formed by all cut vectors.
For example, $\delta(\Circled{00})=e^I_{1,0}-e^I_{0,1}\in U$.
The following lemma describes the relation between $C$, $Z$, and $U$.
\begin{lemma}[\cite{bollobas1998modern}, Sec. II.3, Theorem 9]\label{lem:orth}
The edge space $C$ is the orthogonal direct sum of cycle space $Z$ and cut space $U$.
\end{lemma}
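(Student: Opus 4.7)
The plan is to prove the lemma by exhibiting the boundary and coboundary operators as formal adjoints of one another, and then invoking the standard fact that for a linear map between finite-dimensional inner product spaces, the kernel of the map is the orthogonal complement of the image of its adjoint. This turns the claim into an essentially routine exercise in linear algebra.

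First I would set up the ambient spaces: let $V^\bullet$ denote the $\mathbb{R}$-vector space of $0$-chains, equipped with the inner product $\langle\sum\alpha_i v_i,\sum\beta_i v_i\rangle_{V^\bullet}=\sum\alpha_i\beta_i$, analogous to the inner product on $C$. With these conventions, the boundary operator $\partial:C\to V^\bullet$ and coboundary operator $\delta:V^\bullet\to C$ are linear maps between finite-dimensional real inner-product spaces, and by definition $Z=\ker\partial$ and $U=\mathrm{im}\,\delta$.

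The main step is to verify that $\delta=\partial^{\ast}$, i.e.\ that for every edge $e=(u,v)\in E$ and vertex $w\in V$,
\begin{equation}
\langle\partial e,w\rangle_{V^\bullet}=\langle e,\delta w\rangle_{C}.
\end{equation}
When $u\neq v$ this is a direct unpacking of definitions: the left-hand side equals $\delta_{v,w}-\delta_{u,w}$ (Kronecker delta), while the right-hand side equals the signed indicator of whether $w$ is the head, tail, or neither endpoint of $e$, giving the same value. When $e$ is a self-loop ($u=v$), $\partial e=0$, and the coefficient of $e$ in $\delta w$ is $0$ by the stated convention, so both sides vanish. Extending by bilinearity establishes the adjoint relation on all of $C$ and $V^\bullet$.

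With adjointness in hand, the general identity $(\ker\partial)^{\perp}=\mathrm{im}\,\partial^{\ast}$ for linear operators on finite-dimensional inner product spaces yields $Z^{\perp}=U$. Since $C$ is finite-dimensional, this is equivalent to the orthogonal direct sum decomposition $C=Z\oplus U$, which is the claim. The only delicate point is the treatment of self-loops in the definition of $\delta$; I expect this to be the place where one must be careful to check that the adjoint identity remains valid, but as noted above both sides of the required equation vanish on self-loops, so no inconsistency arises and such edges simply contribute to $Z$ and not to $U$.
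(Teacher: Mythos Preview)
Your proof is correct. The paper, however, does not supply its own proof of this lemma: it is stated with a citation to Bollob\'as, \emph{Modern Graph Theory}, Sec.~II.3, Theorem~9, and used as a black box. Your argument---showing that $\delta$ is the adjoint of $\partial$ with respect to the standard inner products on $0$- and $1$-chains, and then invoking $(\ker\partial)^\perp=\operatorname{im}\partial^\ast$---is in fact essentially the textbook proof, so there is no meaningful methodological difference to discuss. Your care with self-loops is warranted and handled correctly: the paper's definition of $\delta$ explicitly excludes self-loops (the conditions $e_i=(u,v)$ and $e_i=(v,u)$ both require $u\neq v$), which is exactly what is needed for the adjoint identity to hold.
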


\subsection{Protocol Details}\label{sec:protocol}
Now we present our protocol in details.
For this we make use of the pattern transfer graph.
For any given path $v_0,e_{x_1,y_1}^{Q_1},v_1,\ldots,e_{x_l,y_l}^{Q_l},v_l$, since $pt(Q_i\otimes Z^{y_i})=v_i=pt(\mathcal{G}^\dagger(Q_{i+1}\otimes Z^{x_{i+1}}))$, there exists $ H_i\in\mathcal{C}^{\otimes n+m}$ (recall that $\mathcal{C}$ is the group of single qubit Clifford gates, so $H_i$s are tensor products of single qubit Clifford gates) such that
\begin{equation}\label{eq:determineH}
\mathcal{H}_i(Q_i\otimes Z^{y_i})=\mathcal{G}^\dagger(Q_{i+1}\otimes Z^{x_{i+1}}).
\end{equation}
We claim that the following protocol is able to estimate
\begin{equation}
\log\left(\lambda_M^{v_l}\langle\!\langle\mathcal{G}^\dagger(Q_1\otimes Z^{x_1})|\rho\rangle\!\rangle\prod_{i=1}^l\lambda_{x_i,y_j}^{Q_i}\right) \equiv \log\langle\!\langle\mathcal{G}^\dagger(Q_1\otimes Z^{x_1})|\rho\rangle\!\rangle+\sum_{i=1}^l\log\lambda_{x_i,y_i}^{Q_i}+\log\lambda_M^{v_l}.
\end{equation}
\begin{enumerate}
\item Prepare Pauli eigenstate: Prepare an arbitrary but fixed state $\rho$. Preferably, it has a large overlap with the Pauli eigenstate $\frac{I+Q_1\otimes Z^{x_1}}{2^{n+m}}$.
\item Apply repeated MCMs: For $i=1\cdots l$, perform a compiled MCM (we need to apply randomized compiling on the MCM to ensure that it follows our noise model, see \cref{app:MCMrc} for details), record the result $m_i$, and then apply gate $H_i$ (No need to apply gate $H_i$ for $i=l$).
\item Estimate Pauli observable: Perform a compiled terminating measurement (again we need to apply randomized compiling on the terminating measurement to ensure that it follows our noise model, see \cref{app:TMrc} for details) for observable $Q_l\otimes Z^{y_l}$ and record the result $r$.
\item Perform Fourier transform on measurement outcome: Repeat the above procedure multiple times and estimate the expectation value $s= \mathbb E\left[(-1)^{\sum_{i=1}^lm_i\cdot(x_i+y_i)}r\right]$.
\item Obtain sum of log Pauli fidelities: output $\log s$.
\end{enumerate}

\cref{fig:prot} illustrates the circuit used in the protocol.
It turns out to be mathematically convenient if we modify the protocol a little bit by adding another experiment to cancel out the state preparation noise.
The modified protocol can estimate
\begin{equation}
\log\left(\frac{\lambda_M^{v_l}}{\lambda_M^{v_0}}\prod_{i=1}^l\lambda_{x_i,y_j}^{Q_i}\right) \equiv -\log\lambda_M^{v_0}+\sum_{i=1}^l\log\lambda_{x_i,y_i}^{Q_i}+\log\lambda_M^{v_l}.
\end{equation}
For the modification, simply replace step $5$ by
\begin{enumerate}
\setcounter{enumi}{4}
\item$\!\!\!\!\!^*$ Prepare $\rho$ and perform compiled terminating measurement for observable $\mathcal{G}^\dagger(Q_1\otimes Z^{x_1})$. Record the result as $r'$. Repeat multiple times and estimate the expectation value $t=\mathbb E[r']$.
% Prepare multiple copies of $\rho$ and perform compiled terminating measurement for operator $\mathcal{G}^\dagger(Q_1Z^{x_1})$ basis. Estimate the expectation $t$ of the results.
\setcounter{enumi}{5}
\item$\!\!\!\!\!^*$ Output $\log\frac{s}{t}$.
\end{enumerate}

We note that the $x_i$, $y_i$ and $Q_i$s are the input to the algorithm.
$H_i$s can then be calculated via \cref{eq:determineH}.
The circuits are then determined and executed non-adaptively, independent of the measurement outcomes observed.
The final output is calculated based on the measurement outcomes and the inputs.

\begin{figure}[ht]
\centering
\begin{tabular}{c}
\begin{comment}
\Qcircuit @C=1em @R=.7em {
    & /^{m} \qw & \multimeasure{1}{\mathcal{T}} & \multigate{1}{H_1} & \multimeasure{1}{\mathcal{T}} & \multigate{1}{H_2} &\qw & \dots & & \multimeasure{1}{\mathcal{T}} & \multimeasure{1}{\metersymb}\\
    & /^{n} \qw & \ghost{\mathcal{T}} & \ghost{H_1} & \ghost{\mathcal{T}} & \ghost{H_2} & \qw & \dots & & \ghost{\mathcal{T}} & \ghost{\metersymb}\\
    & & \dstick{m_1} \cwx & & \dstick{m_2} \cwx & & & \dstick{\cdots} & & \dstick{m_l} \cwx & \dstick{r} \cwx\\
    \\
}
\end{comment}
\begin{comment}
\Qcircuit @C=1em @R=.7em {
    & & \mathcal{G}^\dagger(Q_1Z^{x_1}) & & Q_1Z^{y_1} & & \mathcal{G}^\dagger(Q_2Z^{x_2}) & & Q_2Z^{y_2} & & & \cdots & & & Q_lZ^{y_l} \\
    \push{\rule{0em}{0.2em}} \\
    & & \Big\downarrow & \lambda_{x_1,y_1}^{Q_1} & \Big\downarrow & & \Big\downarrow & \lambda_{x_2,y_2}^{Q_2} & \Big\downarrow & & & \cdots & & \lambda_{x_l,y_l}^{Q_l} & \Big\downarrow & \lambda_M^{v_l}\\
    & /^{m} \qw & \qw & \multimeasure{1}{\mathcal{T}} & \qw & \multigate{1}{H_1} & \qw & \multimeasure{1}{\mathcal{T}} & \qw & \multigate{1}{H_2} &\qw & \dots & & \multimeasure{1}{\mathcal{T}} & \qw & \multimeasure{1}{\metersymb}\\
    & /^{n} \qw & \qw & \ghost{\mathcal{T}} & \qw & \ghost{H_1} & \qw & \ghost{\mathcal{T}} & \qw & \ghost{H_2} & \qw & \dots & & \ghost{\mathcal{T}} & \qw & \ghost{\metersymb}\\
    & & \push{\rule{1.5em}{0em}} & \dstick{m_1} \cwx & & & \push{\rule{1.5em}{0em}} & \dstick{m_2} \cwx & & & & \dstick{\cdots} & & \dstick{m_n} \cwx & & \dstick{r} \cwx\\\\\\
}
\end{comment}
\Qcircuit @C=1em @R=.7em {
    & & \mathcal{G}^\dagger(Q_1\!\otimes\!Z^{x_1}\!) & & Q_1\!\otimes\!Z^{y_1} & & \mathcal{G}^\dagger(Q_2\!\otimes\!Z^{x_2}\!) & & Q_2\!\otimes\!Z^{y_2} & & & \cdots & & & Q_l\!\otimes\!Z^{y_l} \\
    \push{\rule{0em}{0.2em}} \\
    & & \Big\downarrow & \lambda_{x_1,y_1}^{Q_1} & \Big\downarrow & & \Big\downarrow & \lambda_{x_2,y_2}^{Q_2} & \Big\downarrow & & & \cdots & & \lambda_{x_l,y_l}^{Q_l} & \Big\downarrow & \lambda_M^{v_l}\\
    & /^{m} \qw & \qw & \multimeasure{1}{\mathcal{T}} & \qw & \gate{\phantom{H_1}} & \qw & \multimeasure{1}{\mathcal{T}} & \qw & \gate{\phantom{H_2}} &\qw & \dots & & \multimeasure{1}{\mathcal{T}} & \qw & \multimeasure{1}{\metersymb}\\
    & /^{n} \qw & \qw & \ghost{\mathcal{T}} & \qw & \gate{\phantom{H_1}} & \qw & \ghost{\mathcal{T}} & \qw & \gate{\phantom{H_2}} & \qw & \dots & & \ghost{\mathcal{T}} & \qw & \ghost{\metersymb}\\
    & & \push{\rule{2.5em}{0em}} & \dstick{m_1} \cwx & & \dstick{H_1} & \push{\rule{2.5em}{0em}} & \dstick{m_2} \cwx & & \dstick{H_2} & & \dstick{\cdots} & & \dstick{m_l} \cwx & & \dstick{r} \cwx \gategroup{4}{6}{5}{6}{.7em}{--} \gategroup{4}{10}{5}{10}{.7em}{--}\\\\\\\\
}
\end{tabular}

\begin{tikzpicture}
[node distance=20em,
pt/.style={circle,draw,thick,inner sep=0pt,minimum size=2em, scale=1},
el/.style = {inner sep=5pt, align=left, scale=1}]
\node[pt] (0) at (-2,0) {$v_0$};
\node[pt] (1) at (1,0) {$v_1$};
\node[pt] (2) at (4.5,0) {$v_2$};
\node[pt,draw=none] (dots) at (5.5,0) {$\cdots$};
\node[pt] (l) at (7.5,0) {$v_l$};
\node[pt,draw=none] (ghost) at (9,0) {};
\path[->] (0) edge node[el, above, pos=0.6] {$e_{x_1,y_1}^{Q_1}$} (1);
\path[->] (1) edge node[el, above, pos=0.7] {$e_{x_2,y_2}^{Q_2}$} (2);
\path[->] (2) edge (dots);
\path[->] (dots) edge node[el, above, pos=0.6] {$e_{x_l,y_l}^{Q_l}$} (l);
\end{tikzpicture}
\caption{Main circuit used in the protocol and the corresponding `walk' on the pattern transfer graph.
We interleave MCMs with the precomputed $H_i$s and perform a terminating measurement in the end.
The $\mathcal{G}^\dagger(Q_1\otimes Z^{x_1})$ component of the input state is multiplied by $\lambda^{Q_1}_{x_1,y_1}$ and transformed into the $Q_1\otimes Z^{y_1}$ component after applying the first MCM.
This component is then twisted by $H_1$ into $\mathcal{G}^\dagger(Q_2\otimes Z^{x_2})$ and so on\ldots
In the end the component becomes $Q_l\otimes Z^{y_l}$ which is then estimated by the terminating measurement.
In terms of Pauli weights, this sequence corresponds exactly to the transversal of the nodes $v_0,\ldots,v_l$ along the specified path, with the Pauli fidelities (edges) acquired along the way.
Those unshown components of the evolving underlying density matrix do not affect the result, as they are either undetected, eliminated by randomized compiling, or averaged out during data processing.}
\label{fig:prot}
\end{figure}

In the following, we consider the modified version of the protocol.
First we prove that the output is indeed the desired quantity.
\begin{theorem}[Main result]\label{thm:prot}
For any path $v_0,e_{x_1,y_1}^{Q_1},v_1,\ldots,e_{x_l,y_l}^{Q_l},v_l$, the function $-v_0+\sum_{i=1}^le_{x_i,y_i}^{Q_i}+v_l$ is learnable.
\end{theorem}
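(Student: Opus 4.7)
The plan is to compute the two expectations $s$ and $t$ produced by the modified protocol in the dual space using \cref{lem:unif}, and show that the additional Fourier transform on the measurement record, combined with Pauli orthogonality and \cref{eq:determineH}, isolates the single desired path. Starting from
\begin{equation*}
s = \lambda_M^{v_l}\!\!\sum_{m_1,\ldots,m_l}(-1)^{\sum_i m_i\cdot (x_i+y_i)}\lbra{Q_l\otimes Z^{y_l}}\mc T_{m_l}\mc H_{l-1}\cdots \mc H_1 \mc T_{m_1}\lket{\rho},
\end{equation*}
I would substitute \cref{eq:USId} for every $\mc T_{m_i}$, pulling all $m_i$-dependence into the scalar factors. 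Each Fourier sum then collapses,
\begin{equation*}
\sum_{m_i\in\mathbb Z_2^n}(-1)^{m_i\cdot[(x_i+y_i)+(x_i'+y_i')]}=2^n\,\delta_{x_i'+y_i',\,x_i+y_i},
\end{equation*}
enforcing the first filtering constraint $x_i'+y_i'=x_i+y_i$ at each site.

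The second ingredient is the chain of inner products $\lbra{\mc G^\dagger(Q'_{i+1}\otimes Z^{x'_{i+1}})}\mc H_i\lket{Q'_i\otimes Z^{y'_i}}$ that appear between consecutive $\mc T$'s. Since Clifford gates send Paulis to signed Paulis and distinct Paulis are orthogonal, each such inner product vanishes unless $\mc H_i(Q'_i\otimes Z^{y'_i})=\pm\mc G^\dagger(Q'_{i+1}\otimes Z^{x'_{i+1}})$. I would then run a backward induction: the terminal observable forces $(Q'_l,y'_l)=(Q_l,y_l)$ (and then $x'_l=x_l$ by the filter), and at each step the matching constraint together with \cref{eq:determineH} and $\mc H_i$'s bijectivity on the Pauli group uniquely pins down $(Q'_i,x'_i,y'_i)=(Q_i,x_i,y_i)$. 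All other combinations cancel.

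What remains is bookkeeping. The powers of two combine as $2^{-l(2n+m)}\cdot 2^{l(n+m)}\cdot 2^{ln}=1$ (from the prefactors in \cref{eq:USId}, the Pauli inner products $\lbra{P}P\rangle\!\rangle = 2^{n+m}$, and the $l$ Fourier sums), and the residual sign is $+1$ because the matching equalities $\mc H_i(Q_i\otimes Z^{y_i})=\mc G^\dagger(Q_{i+1}\otimes Z^{x_{i+1}})$ hold on the nose by the defining choice of $H_i$. Hence
\begin{equation*}
s=\lambda_M^{v_l}\,\lbra{\mc G^\dagger(Q_1\otimes Z^{x_1})}\rho\rangle\!\rangle\prod_{i=1}^l\lambda_{x_i,y_i}^{Q_i}.
\end{equation*}
A parallel but simpler calculation for the reference experiment of step~$5^*$ gives $t=\lambda_M^{v_0}\lbra{\mc G^\dagger(Q_1\otimes Z^{x_1})}\rho\rangle\!\rangle$, so the unknown state-preparation overlap cancels in the ratio and $\log(s/t)=-\log\lambda_M^{v_0}+\sum_i\log\lambda_{x_i,y_i}^{Q_i}+\log\lambda_M^{v_l}$, which is precisely the $1$-chain $-v_0+\sum_i e_{x_i,y_i}^{Q_i}+v_l$, establishing learnability.

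The main obstacle is the backward-induction argument: one must verify that the Fourier filter $x_i'+y_i'=x_i+y_i$ together with Pauli orthogonality and \cref{eq:determineH} leaves exactly one surviving tuple at every intermediate site, and that the global sign produced by chaining the (possibly signed) Clifford actions of the $H_i$'s multiplies to $+1$. These are finite, mechanical checks, but both require care to avoid confusing a Pauli operator with its negative when translating between the physical-space definition of $\mc T_k$ in \cref{eq:USIp} and the dual-space form in \cref{eq:USId}.
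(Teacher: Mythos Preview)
Your proposal is correct and follows essentially the same approach as the paper: substitute the dual-space form \cref{eq:USId} for each $\mc T_{m_i}$, collapse the Fourier sums over $m_i$, then run a backward induction on the chain of Pauli inner products together with \cref{eq:determineH} to isolate the unique surviving path. The paper's induction uses dummy variables $(a_i,b_i,P_i)$ packaged into a function $f\in\{0,\pm1\}$ where you use $(x_i',y_i',Q_i')$, but the argument and the resolution of the global sign via the on-the-nose equality in \cref{eq:determineH} are identical.
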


\noindent We note that our protocol outputs the path with its two end points (SPAM errors).
When the provided path forms a directed cycle (i.e., $v_0=v_l$), one can immediately see that the output $\sum_{i=1}^l\log\lambda_{x_i,y_i}^{Q_i}$ does not contain SPAM error parameters since they have been canceled.
In such cases, one may also concatenate the directed cycle with itself $L$ times and run the protocol on the extended path $v_0,e_{x_1,y_1}^{Q_1},v_1,\ldots,e_{x_l,y_l}^{Q_l},v_0,e_{x_1,y_1}^{Q_1},v_1,\ldots,e_{x_l,y_l}^{Q_l},\ldots,v_0$.
(In fact, as mentioned in \cref{sec:CBrecap}, this is the usual method used in CB-type algorithms.)
Then the protocol will output $L\sum_{i=1}^l\log\lambda_{x_i,y_i}^{Q_i}$, from which one can retrieve the product of fidelities.
One can further perform the experiment for varying $L$ and run a regression to get the original product of fidelities.

\begin{proof}[Proof of Theorem~\ref{thm:prot}]
It suffices to prove the correctness of our protocol.
The probability of observing MCM results $m_1,\ldots,m_l$ is the trace of resulting unnormalized density matrix
\begin{equation}
p_{m_1,\ldots,m_l}=\langle\!\langle I|\mathcal{T}_{m_l}H_{l-1}\cdots H_1\mathcal{T}_{m_1}|\rho\rangle\!\rangle.
\end{equation}
Here $I$ means the identity matrix.
Conditioned on observing this result, the quantum state is
\begin{equation}
|\sigma_{m_1,\ldots,m_l}\rangle\!\rangle=\frac{1}{p_{m_1,\ldots,m_l}}\mathcal{T}_{m_l}H_{l-1}\cdots H_1\mathcal{T}_{m_1}|\rho\rangle\!\rangle.
\end{equation}
Hence by \cref{lem:term}, the expectation of $r$ conditioned on observing the measurement results is
\begin{equation}\label{eq:naiveConcat}
\operatornamewithlimits{\mathbb{E}}\left[r\middle|m_1,\ldots,m_l\right]=\lambda_M^{v_l}\langle\!\langle Q_l\otimes Z^{y_l}|\sigma_{m_1,\ldots,m_l}\rangle\!\rangle.
\end{equation}
Thus
\begin{align}
&\operatornamewithlimits{\mathbb{E}}\left[(-1)^{\sum_{i=1}^lm_i\cdot(x_i+y_i)}r\right]\\
% =&\sum_{m_1,\ldots,m_l\in\mathbb{Z}_2^n}p_{m_1,\ldots,m_n}\operatornamewithlimits{\mathbb{E}}\left[(-1)^{\sum_{i=1}^lm_i\cdot(x_i+y_i)}r\middle|m_1,\ldots,m_n\right]\\
=&\sum_{m_1,\ldots,m_l\in\mathbb{Z}_2^n}p_{m_1,\ldots,m_l}(-1)^{\sum_{i=1}^lm_i\cdot(x_i+y_i)}\operatornamewithlimits{\mathbb{E}}\left[r\middle|m_1,\ldots,m_l\right]\\
=&\sum_{m_1,\ldots,m_l\in\mathbb{Z}_2^n}\lambda_M^{v_l}(-1)^{\sum_{i=1}^lm_i\cdot(x_i+y_i)}\langle\!\langle Q_l\otimes Z^{y_l}|\mathcal{T}_{m_l}H_{l-1}\cdots H_1\mathcal{T}_{m_1}|\rho\rangle\!\rangle\\
=&\frac{\lambda_M^{v_l}}{2^{(2n+m)l}}\sum_{\substack{m_1,\ldots,m_l\in\mathbb{Z}_2^n,a_1,\ldots,a_l\in\mathbb{Z}_2^n\\b_1,\ldots,b_l\in\mathbb{Z}_2^n,P_1,\ldots,P_l\in\mathcal{P}^m}}(-1)^{\sum_{i=1}^lm_i\cdot(x_i+y_i+a_i+b_i)}\prod_{i=1}^l\lambda_{a_i,b_i}^{P_i}\nonumber\\
&\langle\!\langle Q_l\otimes Z^{y_l}|P_l\otimes Z^{b_l}\rangle\!\rangle\langle\!\langle \mathcal{G}^\dagger(P_l\otimes Z^{a_l})|\mathcal{H}_{l-1}(P_{l-1}\otimes Z^{b_{l-1}})\rangle\!\rangle\cdots\langle\!\langle \mathcal{G}^\dagger(P_1\otimes Z^{a_1})|\rho\rangle\!\rangle\label{eq:channelsubstitute}\\
=&\lambda_M^{v_l}\sum_{\substack{a_1,\ldots,a_l\in\mathbb{Z}_2^n\\P_1,\ldots,P_l\in\mathcal{P}^m}}\prod_{i=1}^l\lambda_{a_i,x_i+y_i+a_i}^{P_i}f(a_1,\ldots,a_l,P_1,\ldots,P_l)\langle\!\langle\mathcal{G}^\dagger(P_1\otimes Z^{a_1})|\rho\rangle\!\rangle,
\end{align}
%\cs{I cannot follow the last line...}
where function $f$ takes value $0,\pm1$ depending on the inner products
\begin{align}
&2^{(n+m)l}f(a_1,\ldots,a_l,P_1,\ldots,P_l)\nonumber\\
=&\langle\!\langle Q_l\otimes Z^{y_l}|P_l\otimes Z^{x_l+y_l+a_l}\rangle\!\rangle\langle\!\langle\mathcal{G}^\dagger(P_l\otimes Z^{a_l})|\mathcal{H}_{l-1}(P_{l-1}\otimes Z^{x_{l-1}+y_{l-1}+a_{l-1}})\rangle\!\rangle\\
&\cdots\langle\!\langle\mathcal{G}^\dagger(P_2\otimes Z^{a_2})|\mathcal{H}_1(P_1\otimes Z^{x_1+y_1+a_1})\rangle\!\rangle.\nonumber
\end{align}
\cref{eq:channelsubstitute} uses \cref{lem:unif}.

%\yunchao{what's the utility of using vs not using $m_i$?}

Next, we prove that only one term is left in the summation, that is, $f(a_1,\ldots,a_l,P_1,\ldots,P_l)=\prod_{i=1}^l\delta_{a_i,x_i}\delta_{P_i,Q_i}$.
Here, $\delta$ is the Kronecker delta function that takes value $1$ only when its two subscripts are equal, and takes value $0$ otherwise.
We prove this inductively.
If $f$ is non-zero, then from the first inner product we have $P_l=Q_l$, $a_l=x_l$.
Since $\mathcal{H}_{l-1}(Q_{l-1}\otimes Z^{y_{l-1}})=\mathcal{G}^\dagger(Q_l\otimes Z^{x_l})$, from the second inner product we have $P_{l-1}=Q_{l-1}$, $a_{l-1}=x_{l-1}$\dots
In the end we have $P_1=Q_1$, $a_1=x_1$.
Hence $f$ is non-zero only when $P_i=Q_i$ and $a_i=x_i$, and it takes value $1$ in this case.
Putting this result back we get
\begin{equation}
\operatornamewithlimits{\mathbb{E}}\left[(-1)^{\sum_{i=1}^lm_i\cdot(x_i+y_i)}r\right]=\lambda_M^{v_l}\langle\!\langle\mathcal{G}^\dagger(Q_1\otimes Z^{x_1})|\rho\rangle\!\rangle\prod_{i=1}^l\lambda_{x_i,y_i}^{Q_i}.
\end{equation}
% \cs{Check if we need a more intuitive explanation for the conditional expectation?}
This is the expected value of $s$.
On the other hand, by \cref{lem:term} the expected value of $t$ is $\lambda_M^{v_0}\langle\!\langle\mathcal{G}^\dagger(Q_1\otimes Z^{x_1})|\rho\rangle\!\rangle$, hence $\log\frac{s}{t}$ is an estimate of the desired quantity.
\end{proof}
\begin{comment}
Now we may come back to \cref{fig:virtSpaceComp} to examine it more closely.
The details in \cref{fig:virtSpaceComp} are generated from the composition of noisy Clifford gates and noisy MCMs (for $G=\mathrm{CNOT}$, $m=n=1$) respectively, with no interleaving single qubit Clifford gates (the blank gates are identity).
Each column corresponds to a vectorization of a Pauli operator.
Let us focus on the subfigure (b).
One can see that only $4^m2^n=8$ of the columns are starting point of some edge.
These corresponds to the operators that can be written in the form of $\mathcal{G}^\dagger(Q\otimes Z^x)$.
Each node in such columns has $2^n=2$ outgoing edges.
Some of the edges (fidelties) cannot form consecutive paths, indicating that they have zero contribution in the composition, so if we want to learn them we should change the interleaving single qubit Clifford gates to make them consecutive.
Moreover, fix a starting point and an end point, there are multiple paths connecting them, as highlighted in blue dash dots.
This corresponds to the fact that in this case, the value in \cref{eq:naiveConcat} is a summation of multiple terms, hence is not the desired final result.
As can be seen from the proof, the Fourier transformation on measured results $m$ taken during the data processing perfectly disentangles these paths.
After the Fourier transform, only the desired product of Pauli fidelities is left.
\end{comment}

\section{Learnability of MCMs}\label{sec:learnability}
We have seen in Theorem~\ref{thm:prot} that certain combinations of Pauli fidelities can be learned using our generalized CB algorithm. Now we show that, under our noise model these turn out to be all information that can be SPAM-robustly learned about noisy MCMs \emph{via any algorithm}. For this purpose, we develop a theory on the learnability of MCMs, generalizing the framework from Ref.~\cite{chen2023learnability} about the learnability of noisy Clifford gates.

\medskip
To start with, let us formally summarize our noise model assumptions:
\begin{enumerate}
    \item All single-qubit gates can be noiselessly implemented.
    \item A set of multi-qubit Clifford gates $\{\mc G_i\}$ can be implemented followed by (unknown) gate-dependent Pauli noise channels $\{\Lambda_{\mc G_{i}}\}$.
    \item A noisy MCM $\{\mc T_k\}$ with (unknown) Pauli fidelities $\{\lambda_{x,y}^{Q}\}$ can be applied.
    \item An unknown but fixed initial state $\rho$ can be prepared.\footnote{We note that regarding the initial state, though it does not affect our learnability theory, for our protocol the overlap $\langle\!\langle\mathcal{G}^\dagger(Q_1\otimes Z^{x_1})|\rho\rangle\!\rangle$ do affect the sample complexity. It is desirable to choose and prepare an initial state with high overlap $\langle\!\langle\mathcal{G}^\dagger(Q_1\otimes Z^{x_1})|\rho\rangle\!\rangle$.}
    \item Any POVM can be measured following an unknown (symmetric) Pauli noise channel $\Lambda_M$. See Appendix~\ref{app:TMrc} for details.
    \item All Pauli fidelities of noise channels are strictly positive. All noise channels are not at the boundary of the completely-positive polytope.
\end{enumerate}
Assumption 1 is standard for randomized-compiling-based protocols~\cite{wallman2016noise,hashim2020randomized,beale2023randomized}, and can be relaxed such that all layers of single-qubit gates have \emph{gate-independent noise}. Assumptions 2-4 can be enforced using randomized compiling. 
Note that, if we exclude assumption 3, the noise model reduces to the standard Pauli noise model with Clifford gates, for which the learnability has been studied in Ref.~\cite{chen2023learnability}.
We also remark that, it is straightforward to generalize our learnability theory to allow multiple distinct noisy MCM gadgets. We omit that for conciseness.
% Note that, here we allow multiple distinct MCM gadgets. We also allow $\{\mc T_k^{(i)}\}$ to have trivial measurement, in which case it degrades to an $(m+n)$-qubit Clifford gate followed by a Pauli channel, same as Ref.~\cite{chen2023learnability}.
% We also remark that, it is straightforward to generalize our results to include multiple distinct MCM gadgets, similar to Ref.~\cite{chen2023learnability}. 
Assumption 6 is mostly for mathematical convenience: the first part basically says that the noise is not overwhelmingly large; the second part ensures that all noise parameters can be perturbed without making the noise model nonphysical.

To learn the noise parameters in the noise model $\mc N = \{\rho, \{\Lambda_{\mc G_i}\}, \{\lambda_{x,y}^Q\},\Lambda_M\}$, the most general form of experiments one can perform is to prepare the initial state, apply a sequence of gates and MCM gadgets, and perform a terminating measurement.
Any experiment maps a (realization of the) noise model to a probability distribution over the measurement outcomes (from both the MCMs and the terminating measurements).
We say two noise models $\mathcal{N}_1,\mathcal{N}_2$ are \emph{indistinguishable} if for every possible experiment, they yield the same probability distribution. 
Otherwise they are \emph{distinguishable}.

We are interested in which parameters of the noise model are learnable from experiments. Formally, a function $f$ on noise models maps a noise model $\mc N$ to a real number, denoted as $f(\mc N)$.
For example, a $0$-chain can be viewed as a function of noise models that reflects terminating measurement noises $\left(\sum\epsilon_iv_i\right)(\mathcal{N})=\sum\epsilon_i\log\lambda_M^{v_i}$.
Similarly, a $1$-chain can be viewed as a function of noise models that reflects MCM noises $\left(\sum\epsilon_{x,y}^Qe_{x,y}^Q\right)(\mathcal{N})=\sum\epsilon_{x,y}^Q\log\lambda_{x,y}^Q$.
%We will see in the next subsection that the state preparation noises can be cancelled out easily, so we do not have to worry about them.

A function $f$ is called \emph{learnable}~\cite{chen2023learnability} if
% \[\forall\mathcal{N}_1,\mathcal{N}_2:~f(\mathcal{N}_1)\neq f(\mathcal{N}_2)\Rightarrow\mathcal{N}_1,\mathcal{N}_2\text{ are distinguishable.}\]
\[\forall\mathcal{N}_1,\mathcal{N}_2:~
\mathcal{N}_1~\text{is indistinguishable from}~\mathcal{N}_2~\Rightarrow~f(\mathcal{N}_1)= f(\mathcal{N}_2).\]
Otherwise, $f$ is \emph{unlearnable}. This definition is as expected, because the ability to learn an unlearnable function would imply the ability to distinguish indistinguishable noise models, which leads to a contradiction.
From this definition alone, $f$ being learnable is only a necessary condition for the existence of an experiment to actually learn its value.
Here we will prove that our learning algorithm can indeed learn any of such learnable functions within arbitrary precision. We also remark that a function being unlearnable is a fundamental limitation for \emph{any learning protocols}, not just specific to CB-type protocols.

\subsection{Learnability of Pauli fidelities}\label{sec:dLearn}
The protocol presented in \cref{sec:protocol} is important in helping us understand the learnability of Pauli fidelities.
To make a complete characterization of the learnable information, first we need the following lemma.
\begin{lemma}[Lemma 1, Supplementary of Ref.~\cite{chen2023learnability}]
Denote the set of all learnable $1$-chains by $F_L$. Then $F_L$ forms a linear subspace of the edge space $C$.
\end{lemma}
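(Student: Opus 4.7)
The plan is to verify the three subspace axioms directly from the definition of learnability, exploiting the fact that the evaluation map $c \mapsto c(\mathcal{N})$ sending a $1$-chain $c = \sum \epsilon_{x,y}^Q e_{x,y}^Q$ to the scalar $\sum \epsilon_{x,y}^Q \log\lambda_{x,y}^Q$ is linear in $c$ for each fixed $\mathcal{N}$. Because learnability is phrased purely as an equality constraint of the form $f(\mathcal{N}_1) = f(\mathcal{N}_2)$ for all indistinguishable pairs, it behaves well under linear combinations of $f$.

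First I would record that the zero $1$-chain is trivially learnable, since it induces the zero function on noise models, so $F_L$ is nonempty. Next, given two learnable $1$-chains $c_1, c_2 \in F_L$ and any two indistinguishable noise models $\mathcal{N}_1, \mathcal{N}_2$, I would write
\begin{equation}
(c_1 + c_2)(\mathcal{N}_1) = c_1(\mathcal{N}_1) + c_2(\mathcal{N}_1) = c_1(\mathcal{N}_2) + c_2(\mathcal{N}_2) = (c_1+c_2)(\mathcal{N}_2),
\end{equation}
where the middle equality uses the definition of learnability applied separately to $c_1$ and $c_2$. Hence $c_1 + c_2 \in F_L$. Scalar closure is analogous: for $\alpha \in \mathbb{R}$ and $c \in F_L$, one has $(\alpha c)(\mathcal{N}_1) = \alpha\, c(\mathcal{N}_1) = \alpha\, c(\mathcal{N}_2) = (\alpha c)(\mathcal{N}_2)$, so $\alpha c \in F_L$.

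The two points to handle carefully are (i) confirming that the evaluation map really is linear, which is immediate from $c(\mathcal{N}) = \sum \epsilon_{x,y}^Q \log\lambda_{x,y}^Q$ and the convention (guaranteed by Assumption~6) that all Pauli fidelities are strictly positive, so the logarithms are well defined on the entire set of admissible noise models; and (ii) noting that the indistinguishability relation is a property of pairs of noise models and does not depend on the particular function we are testing, so the same pair $(\mathcal{N}_1,\mathcal{N}_2)$ can be plugged into the learnability condition for both $c_1$ and $c_2$ simultaneously. With these observations the proof is essentially a one-line calculation, and I do not anticipate any real obstacle; this lemma is stated mainly as a structural preliminary so that in the next subsection one can speak of the dimension of $F_L$ and compare it against the cycle space of the pattern transfer graph.
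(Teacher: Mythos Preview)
Your proposal is correct and follows essentially the same approach as the paper: both argue directly from the definition of learnability, using linearity of the evaluation $c\mapsto c(\mathcal N)$ to show that if $c_1(\mathcal N_1)=c_1(\mathcal N_2)$ and $c_2(\mathcal N_1)=c_2(\mathcal N_2)$ for all indistinguishable pairs, then the same holds for any linear combination. The paper bundles addition and scalar multiplication into a single step $\mu_1+\alpha\mu_2$, whereas you treat them separately and add the (helpful but not strictly necessary) remarks about nonemptiness and well-definedness of the logarithms via Assumption~6; these are cosmetic differences only.
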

The lemma states that learnability defined above is closed under linear operations.
For completeness we present a proof here.
\begin{proof}
Given any $\mu_1,\mu_2\in F_L$, $\forall\mathcal{N}_1,\mathcal{N}_2$, $\forall\alpha\neq0$,
\begin{equation}
\begin{aligned}
\mathcal{N}_1~\text{is indistinguishable from}~\mathcal{N}_2&\Rightarrow
\mu_1(\mathcal{N}_1)=\mu_1(\mathcal{N}_2)\text{ and }\mu_2(\mathcal{N}_1)=\mu_2(\mathcal{N}_2)\\
&\Rightarrow(\mu_1+\alpha\mu_2)(\mathcal{N}_1)=(\mu_1+\alpha\mu_2)(\mathcal{N}_2).
\end{aligned}
\end{equation}
% \begin{align}
% (\mu_1+\alpha\mu_2)(\mathcal{N}_1)\neq(\mu_1+\alpha\mu_2)(\mathcal{N}_2)&\Rightarrow\mu_1(\mathcal{N}_1)\neq\mu_1(\mathcal{N}_2)\text{ or }\mu_2(\mathcal{N}_1)\neq\mu_2(\mathcal{N}_2)\\
% &\Rightarrow\mathcal{N}_1\neq\mathcal{N}_2.
% \end{align}
Thus $\mu_1+\alpha\mu_2\in F_L$. Note that the last line uses the linearity of $1$-chains.
\end{proof}

\begin{corollary}\label{cor:chainl}
For any $1$-chain $\mu$, $\mu+\partial\mu$ is learnable.
\end{corollary}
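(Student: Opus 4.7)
The plan is to derive this as a direct consequence of the main theorem (Theorem~\ref{thm:prot}) combined with the fact that learnable $1$-chains form a linear subspace. The key observation is that Theorem~\ref{thm:prot} already gives exactly the statement $\mu + \partial\mu$ is learnable, but only in the special case when $\mu$ is (the chain associated to) a directed path. Indeed, for a path $v_0, e_{x_1,y_1}^{Q_1}, v_1, \ldots, e_{x_l,y_l}^{Q_l}, v_l$, set $\mu = \sum_{i=1}^{l} e_{x_i,y_i}^{Q_i}$; then by linearity of the boundary operator, $\partial\mu = \sum_{i=1}^{l}(v_i - v_{i-1}) = v_l - v_0$, so Theorem~\ref{thm:prot} asserts exactly that $\mu + \partial\mu$ is learnable.

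The next step is to reduce the general case to this special case. The cleanest way is to apply Theorem~\ref{thm:prot} to each edge individually: for any single edge $e = (u,v) \in E$, the trivial length-one path $u, e, v$ satisfies $\partial e = v - u$, so $e + \partial e$ is learnable by the theorem. Since every $1$-chain admits a unique decomposition $\mu = \sum_{e \in E} \epsilon_e \, e$, we can then invoke the preceding lemma (which says that the set of learnable $1$-chains is a linear subspace of $C$) to conclude that
\begin{equation}
    \sum_{e \in E} \epsilon_e (e + \partial e) = \sum_{e \in E} \epsilon_e e + \partial\!\left(\sum_{e \in E} \epsilon_e e\right) = \mu + \partial\mu
\end{equation}
is learnable, where the first equality uses linearity of $\partial$.

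There is essentially no obstacle here; the statement is just the linear-algebraic closure of the path-level result. The only subtle point worth being explicit about is the convention that vertices (elements of $\mathbb Z_2^{n+m}$) appearing in $0$-chains are formal symbols carrying the scalar $\log\lambda_M^{v}$ under the evaluation map on noise models, so that addition of a $0$-chain to a $1$-chain is well-defined as a function on noise models; the linear subspace lemma was stated for $1$-chains but carries over verbatim once we view $\mu + \partial\mu$ as the corresponding combined function. This makes the extension from single-edge learnability to arbitrary $\mu$ a one-line application of linearity.
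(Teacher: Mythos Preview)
Your proof is correct and follows essentially the same approach as the paper: apply Theorem~\ref{thm:prot} to the length-one path given by a single edge to get that $e+\partial e$ is learnable for every edge, then extend to an arbitrary $1$-chain $\mu$ by linearity. Your added remark about the mixed $0$-chain/$1$-chain interpretation is a reasonable clarification, but otherwise the argument matches the paper's exactly.
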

\begin{proof}
As a special case of \cref{thm:prot}, for any edge $e$, $e+\partial e$ is learnable.
Hence by linearility, $\mu+\partial\mu$ is learnable.
\end{proof}

\begin{corollary}\label{cor:cycl}
$1$-chains in the cycle space $Z$ are learnable.
\end{corollary}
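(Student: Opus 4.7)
The plan is to derive this immediately from \cref{cor:chainl} by noting that the defining property of the cycle space, $\partial\mu = 0$, kills the boundary term. Concretely, I would first recall that a $1$-chain $\mu$ and its boundary $\partial\mu$ are interpreted as real-valued functions on noise models (via $\mu(\mathcal{N}) = \sum \epsilon_{x,y}^Q \log \lambda_{x,y}^Q$ and $(\partial\mu)(\mathcal{N}) = \sum \epsilon_i \log\lambda_M^{v_i}$), so their sum $\mu + \partial\mu$ is well defined as a function on $\mathcal{N}$. If $\mu \in Z$, then the $0$-chain $\partial\mu$ is zero, meaning that for every noise model $\mathcal{N}$ we have $(\partial\mu)(\mathcal{N}) = 0$. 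Therefore $\mu(\mathcal{N}) = (\mu + \partial\mu)(\mathcal{N})$ as functions on noise models.

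Next, I would invoke \cref{cor:chainl}: for any $1$-chain $\mu$, the function $\mu + \partial\mu$ is learnable. Combining this with the identity above, $\mu$ itself is learnable whenever $\mu \in Z$. Since learnability was shown to be closed under linear combinations (so $F_L$ is a subspace of $C$), this handles arbitrary elements of the cycle space in one shot, with no case analysis needed. There is no real obstacle here; the only conceptual point to make explicit is that ``$\partial\mu=0$'' in the formal $0$-chain sense means the corresponding function on noise models is identically zero, so the SPAM-boundary contribution in the output of the protocol of \cref{thm:prot} cancels out exactly when $\mu$ is a cycle vector.
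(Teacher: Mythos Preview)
Your proposal is correct and matches the paper's own proof essentially verbatim: invoke \cref{cor:chainl} to get that $\mu+\partial\mu$ is learnable, then use $\partial\mu=0$ for $\mu\in Z$ to conclude that $\mu$ itself is learnable. The extra remarks you make about interpreting $\partial\mu$ as a function on noise models are fine clarifications but not needed beyond what the paper already sets up.
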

\begin{proof}
For $1$-chains $\mu\in Z$, by \cref{cor:chainl}, $\mu+\partial\mu$ is learnable.
By definition of cycles, $\partial\mu=0$, thus $\mu$ is learnable.
\end{proof}
We remark that our focus is on MCMs ($1$-chains), and we are not interested in terminating measurement errors ($0$-chains).
Thus the implication of \cref{cor:cycl} is that cycle space can be learned unaffected by state preparation and terminating measurement noises, that is, SPAM robustly.
%\cs{and state-preparation noise}

Furthermore, we will show that the cycle space is all the information that can be learned SPAM robustly, and thus our protocol learns all the information that can be learned.
\begin{theorem}\label{thm:lear}
The protocol in \cref{thm:prot} is complete in the sense that the space of learnable information $F_L$ is equal to the cycle space $Z$.
\end{theorem}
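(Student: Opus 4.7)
The plan is to prove both containments of the equality $F_L = Z$. The containment $Z \subseteq F_L$ is immediate from Corollary 2, since every cycle vector is exhibited as a learnable quantity by the protocol of Theorem 1.

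For the reverse $F_L \subseteq Z$, I would invoke the orthogonal direct sum $C = Z \oplus U$ (Lemma 3) to reduce the claim to $F_L \cap U = \{0\}$: any $\mu \in F_L$ decomposes uniquely as $\mu = \mu_Z + \mu_U$ with $\mu_Z \in Z$ and $\mu_U \in U$, and since $\mu_Z \in Z \subseteq F_L$, the linearity of $F_L$ forces $\mu_U = \mu - \mu_Z \in F_L \cap U$. Hence it suffices to show that no nonzero cut vector is learnable.

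The key technical step is to exhibit, for each nonzero $\delta(c) \in U$, a one-parameter family $\{\mathcal{N}(\alpha)\}$ of noise models that are experimentally indistinguishable from a fixed reference $\mathcal{N}$ but on which $\delta(c)$ takes distinct values. I would define a gauge transformation parameterized by the 0-chain $c$ and a scalar $\alpha$: shift the log terminating-measurement fidelities $\log \lambda_M^v \mapsto \log \lambda_M^v - \alpha c_v$; shift the log MCM fidelities $\log \lambda_{x,y}^Q \mapsto \log \lambda_{x,y}^Q + \alpha(c_{v_t} - c_{v_s})$, where $v_s = pt(\mathcal{G}^\dagger(Q \otimes Z^x))$ and $v_t = pt(Q \otimes Z^y)$ are the endpoints of $e_{x,y}^Q$; and make a matching compensating modification to the initial state, scaling the Pauli-basis component $\langle\!\langle P|\rho\rangle\!\rangle$ by $e^{\alpha c_{pt(P)}}$ (with appropriate renormalization). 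For $|\alpha|$ small, assumption 6 ensures all perturbed parameters remain in the physical region and $\rho$ remains a valid density matrix. A direct computation shows that under this transformation, an arbitrary 1-chain $\mu$ shifts by $\alpha \langle \mu, \delta(c)\rangle$; the orthogonality $Z \perp U$ makes this shift vanish on cycles, while on $\delta(c)$ itself it equals $\alpha\|\delta(c)\|^2 \neq 0$.

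The main obstacle is verifying indistinguishability across \emph{all} experiments, not merely the CB-type ones from our protocol. For this I would appeal to the dual-space representation of Lemma 1: the probability of any outcome sequence of an arbitrary gate-and-measurement experiment expands as a sum indexed by directed paths in the pattern transfer graph, each contributing (up to signs) the product of the initial-state overlap at the starting vertex, the edge fidelities along the path, and the terminating-measurement fidelity at the final vertex. Under the gauge transformation, the cumulative edge shift along a path with endpoints $v_0, v_l$ is $\alpha(c_{v_l} - c_{v_0})$, which telescopes with the endpoint shifts $+\alpha c_{v_0}$ (from the state modification) and $-\alpha c_{v_l}$ (from $\lambda_M$). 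Every path contribution, and hence every outcome probability, is invariant, establishing experimental indistinguishability and completing the proof.
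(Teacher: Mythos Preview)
Your approach matches the paper's core idea---a gauge transformation indexed by a $0$-chain---but there is a genuine gap. The noise model (assumption~2) also includes noisy multi-qubit Clifford gates $\{\Lambda_{\mathcal{G}_i}\mathcal{G}_i\}$, and experiments may freely interleave these with MCMs. Your transformation modifies only the initial state, the MCM fidelities, and the terminating-measurement noise, leaving the Clifford-gate noise untouched. The resulting pair of models is then \emph{not} indistinguishable: already an experiment of the form state--Clifford--measure detects the difference, because the pattern-dependent rescaling you apply to $\rho$ and $\Lambda_M$ does not commute through a general multi-qubit $\mathcal{G}_i$. Your path-expansion telescoping argument correspondingly breaks along any segment carried by a Clifford gate rather than an MCM edge. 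The fix---also shift each Clifford Pauli fidelity by the analogous boundary term---is easy, but it must be stated and checked to lie within the model class.

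The paper sidesteps this entirely by packaging the gauge as a single Pauli-diagonal operator $\mathcal{D}(P)=\eta^{\epsilon_{pt(P)}}P$ and conjugating \emph{every} circuit element uniformly: $\rho\mapsto\mathcal{D}(\rho)$, $\mathcal{C}\mapsto\mathcal{D}\mathcal{C}\mathcal{D}^{-1}$ for single-qubit layers, multi-qubit Cliffords, and MCMs alike, and $E_j\mapsto(\mathcal{D}^{-1})^\dagger(E_j)$. Indistinguishability over all experiments is then automatic by cancellation---no path expansion needed---and the remaining work is only to verify that each conjugated component stays in its model class (e.g.\ that $\mathcal{D}\Lambda_{\mathcal{G}_i}\mathcal{G}_i\mathcal{D}^{-1}$ is still an ideal $\mathcal{G}_i$ followed by a Pauli channel, and that $\{\mathcal{D}\mathcal{T}_k\mathcal{D}^{-1}\}$ is still a uniform stochastic instrument). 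This framing also disposes of your renormalization concern: taking the $0$-chain to vanish on the pattern $0\cdots 0$ (legitimate since $\delta(\sum_v v)=0$) gives $\mathcal{D}(I)=I$, so trace is preserved automatically. Finally, the paper reads off directly that any $\mu\in F_L$ satisfies $\langle\delta(\nu),\mu\rangle=0$ for every cut vector, giving $F_L\perp U$ and hence $F_L\subseteq Z$ without the detour through $F_L\cap U=\{0\}$.
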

\begin{proof}
$Z\subseteq F_L$ is already proved in \cref{cor:cycl}, so it remains to show that $F_L\subseteq Z$.
By \cref{lem:orth}, it suffices to show that $F_L$ is orthogonal to the cut space $U$.
% We prove this by constructing ``gauge transformations'' from  that convert between indistinguishable models which assign different values to functions in the cut space.
% Since learnable information should be invariant under gauge transformations, we then conclude that they should be orthogonal to the cut space.
We will show that every cut vector induces a gauge transformation that convert one noise model to another indistinguishable noise model. Since learnable functions should be invariant under such transformation by definition, we can conclude that they should be orthogonal to the cut space.

Recall that a generic experiment begins with an initial state $\rho$, followed by a sequence of MCMs interleaved by single-qubit or multi-qubit Clifford gates $\mathcal{C}_i$ (if we extend our noise model to include non-Clifford multi-qubit gates, they can appear in the circuit, too), and finally a terminating measurement. Assume that there are $d$ layers of MCMs with outcome denoted by $m_1,\cdots,m_d$, and that the outcome of the terminating measurement is denoted by $m_*$, the probability distribution of the outcomes is given by
% some gates $\mathcal{C}_i$ interleaved with MCMs $\mathcal{M}$, and end with a POVM $\{E_j\}$.
% Then the probability of getting MCM results $m_1,\ldots$ and terminating measurement result $m_*$ is
% \begin{equation}
% \Tr(E_{m_0}\cdot(\cdots\mathcal{M}_{m_1}\mathcal{G}_1)(\rho)).
% \end{equation}
\begin{equation}
    \Pr[m_1,\cdots,m_d,m_*] = \Tr(E_{m_*}\left(\mc C_d\circ \mc T_{m_d}^{} \circ\cdots\circ \mc C_1\circ\mc T_{m_1}^{}\circ\mc C_0(\rho)\right)).
\end{equation}
Here, each $\mc C_i$ can be a layer of noiseless single-qubit gates, or a noisy multi-qubit Clifford gates, or a concatenation of both. $E_{m^*}$ is the POVM element of the noisy terminating measurement.

Thus we can see that, for any invertible linear map $\mathcal{D}$, the following gauge transformation does not change the probability distribution.
\begin{equation}
\left\{
\begin{aligned}
&\rho\mapsto\mathcal{D}{(\rho)}\\
&\mathcal{C}_i\mapsto\mathcal{D}\mathcal{C}_i\mathcal{D}^{-1}\\
&\mathcal{T}_k\mapsto\mathcal{D}\mathcal{T}_k\mathcal{D}^{-1}\\
&E_j\mapsto(\mathcal{D}^{-1})^\dagger(E_j)
\end{aligned}
\right..
\end{equation}
For a noise model $\mc N_1$, we can use this to construct an indistinguishable model $\mathcal{N}_2$ from $\mathcal{N}_1$.

Given any cut vector $\delta(\nu)\in U$, since $\delta(\sum v)=0$, wlog assume $\nu=\sum\epsilon_{v_i}v_i$ has coefficient $0$ on vertex $\Circled{0_{m+n}}$.
Let $\mathcal{D}$ be the Pauli diagonal map defined by
\begin{equation}
\forall P\in\mathcal{P}^{n+m},~\mathcal{D}(P)=\eta^{\epsilon_{pt(P)}}P
\end{equation}
for a positive $\eta\neq1$.
%Since $\mathcal{D}$ is trace-preserving, Hermicity-preserving, and we have assumed that in $\mathcal{N}_1$ states are not at the boundary of physical density matrices and channels are not at the boundary of physical channels, as long as $\eta$ is sufficiently close to $1$, $\mathcal{N}_2$ is still physical.
We hence need to verify that $\mathcal{N}_2$ still satisfies all of our model assumptions:
\begin{enumerate}
    \item If $\mc C$ is a layer of single-qubit gate, it preserves the pattern of any input Pauli operator\footnote{
    More precisely, since $\mc C$ is allowed to be non-Clifford, it can map an input Pauli to multiple output Paulis, each of which has the same pattern as the input.
    }, thus $\mc C_i = \mc D\mc C_i\mc D^{-1}$, i.e., single-qubit gates remains noiseless.
    \item If $\mc C$ is a multi-qubit Clifford gate $\mc G_i$ followed by a Pauli noise channel $\Lambda_{\mc G_i}$, we have \begin{equation}
    \mc D\Lambda_{\mc G_i}\mc G_i\mc D^{-1} = \left(\mc D\Lambda_{\mc G_i}\mc G_i\mc D^{-1}\mc G_i^{\dagger}\right)\mc G_i = \Lambda'_{\mc G_i}\mc G_i,
    \end{equation}
    which is still the same Clifford gate followed by a Pauli channel.
    \item From \cref{lem:unif}, we can easily see that $\{\mc D\mc T_k\mc D^{-1}\}$ remains a compiled MCM (i.e., $\{\mc D\mc U_k\mc D^{-1}\}$ remains a uniform stochastic instrument), with Pauli fidelities changed according to
    \begin{equation}\label{eq:gauge_MCM}
    \begin{aligned}
    \log\lambda_{x,y}^Q& \mapsto \log\lambda_{x,y}^Q+(\epsilon_{pt(Q\otimes Z^y)}-\epsilon_{pt(\mc G^\dagger( Q\otimes Z^x))})\log\eta.\\
    &= \log\lambda_{x,y}^Q+\expval{\delta(\nu),e_{x,y}^Q}\log\eta
    \end{aligned}
    \end{equation}
    Note that since $e_{0,0}^I$ is always a self loop on vertex $0$, the trace preserving condition $\log\lambda_{0,0}^I=0$ is preserved.
    \item The initial state transforms from $\rho$ to $\mc D(\rho)$. For the terminating measurement, if the ideal POVM is $\{F_j\}$, its noisy implementation is $\{\Lambda_M^\dagger(F_j)\}$ where $\Lambda_M$ is a symmetric Pauli channel. The transformation results in $\Lambda_M\mapsto\Lambda_M\mc D^{-1}$ which is still a symmetric Pauli channel.
    \item Finally, it is not hard to see Assumption 5 still holds under the transformation, as long as $\eta$ is sufficiently close to $1$.
\end{enumerate}

% Since single qubit unitaries does not change the pattern, they commute with $\mathcal{D}$, so they remain noiseless after transformation.
% In our assumption, if we measure POVM $\{F_j\}$, the POVM actually measured is $\{(\Lambda^M)^\dagger(F_j)\}$.
% Hence $\Lambda^M\mapsto\Lambda^M\mathcal{D}^{-1}$, $\lambda_M^v\mapsto\eta^{-\epsilon_v}\lambda_M^v$, a valid compiling result.
%\cs{In this part, $E$ is viewed as a general POVM element, so we should not write $(I+E)/2$?}

Thus $\mathcal{N}_2$ satisfies all of our assumptions.
Moreover, for any $1$-chain $\mu$, thanks to Eq.~\eqref{eq:gauge_MCM} we have
\begin{equation}
\mu(\mathcal{N}_2)=\mu(\mathcal{N}_1)+\langle\delta(\nu),\mu\rangle\log\eta.    
\end{equation}
Hence if $\mu\in F_L$, by the definition of learnable functions, we must have $\langle\delta(\nu),\mu\rangle=0$.
This completes the proof.
\end{proof}

\subsection{Learnability of Pauli error rates}\label{sec:pLearn}
\cref{thm:lear} gives a complete classification for the learnable Pauli fidelities.
Since Pauli error rates have a clearer physical meaning, naturally we would ask for a classification for the learnable Pauli error rates.
In this subsection, we give a partial classification under low noise rate assumption.
Under low noise rate assumption, the Pauli fidelities are close to $1$, thus we can make approximation
\begin{align}
p_{a,b}^P&=\frac{1}{4^{n+m}}\sum_{x,y\in\mathbb{Z}_2^n,Q\in\mathcal{P}^m}(-1)^{a\cdot x+b\cdot y+\langle P,Q\rangle}\lambda_{x,y}^Q\\
&\approx\frac{1}{4^{n+m}}\sum_{x,y\in\mathbb{Z}_2^n,Q\in\mathcal{P}^m}(-1)^{a\cdot x+b\cdot y+\langle P,Q\rangle}(1+\log\lambda_{x,y}^Q)\label{eq:fidelityapprox}\\
&=\frac{1}{4^{n+m}}\sum_{x,y\in\mathbb{Z}_2^n,Q\in\mathcal{P}^m}(-1)^{a\cdot x+b\cdot y+\langle P,Q\rangle}\log\lambda_{x,y}^Q+\delta_{a,0}\delta_{b,0}\delta_{P,I}.\label{eq:errorrate}
\end{align}
\cref{eq:fidelityapprox} uses the fact that when $x\approx1$, $x\approx1+\log x$, and the $\delta$ in \cref{eq:errorrate} means the Kronecker delta function.

As an example, the learnable Pauli error rates for $n=m=1$ MCMs with $G=\mathrm{CNOT}$ are
\begin{gather*}
p_{1,1}^I,p_{1,1}^Z,p_{1,1}^X,p_{1,1}^Y\\
p_{0,0}^I+p_{0,0}^Z,p_{0,1}^I+p_{1,0}^Z+p_{0,1}^Z+p_{1,0}^I\\
p_{0,0}^X-p_{0,0}^Y,p_{1,0}^X-p_{1,0}^Y,p_{0,1}^X-p_{0,1}^Y,p_{0,1}^X-p_{0,1}^Z\\
p_{0,1}^I+p_{1,0}^I-p_{0,0}^I,p_{0,1}^X+p_{1,0}^X+p_{0,0}^X,p_{0,0}^I+p_{1,0}^I+p_{0,1}^X.
\end{gather*}
More generally, we have the following propositions.
Proofs are deferred to~\cref{sec:defer}.
\begin{proposition}\label{prop:1}
$\forall G,P$, $\forall a\neq0,b\neq0$, $p_{a,b}^P$ is learnable.
\end{proposition}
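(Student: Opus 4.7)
The plan is to show that, under the low-noise-rate linearization in \cref{eq:errorrate}, $p_{a,b}^P$ is represented by a $1$-chain on the pattern transfer graph that lies in the cycle space $Z$, so that \cref{cor:cycl} (equivalently \cref{thm:lear}) immediately yields learnability. Since we assume $a \neq 0$ and $b \neq 0$, the Kronecker-delta term in \cref{eq:errorrate} drops out, leaving
\begin{equation*}
p_{a,b}^P \approx \mu_{a,b}^P(\mathcal{N}), \qquad \mu_{a,b}^P \coloneqq \frac{1}{4^{n+m}}\sum_{x,y\in\mathbb{Z}_2^n, Q\in\mathcal{P}^m}(-1)^{a\cdot x+b\cdot y+\langle P,Q\rangle}\, e_{x,y}^Q.
\end{equation*}

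Next I would apply the boundary operator and use the defining property $\partial e_{x,y}^Q = pt(Q\otimes Z^y) - pt(\mathcal{G}^\dagger(Q\otimes Z^x))$ to write
\begin{equation*}
\partial\mu_{a,b}^P = \frac{1}{4^{n+m}}\sum_{x,y,Q}(-1)^{a\cdot x+b\cdot y+\langle P,Q\rangle}\Big(pt(Q\otimes Z^y) - pt(\mathcal{G}^\dagger(Q\otimes Z^x))\Big).
\end{equation*}
The key observation is that the two summands decouple: $pt(Q\otimes Z^y)$ does not depend on $x$, and $pt(\mathcal{G}^\dagger(Q\otimes Z^x))$ does not depend on $y$. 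Performing the $x$-sum in the first piece gives a factor $\sum_{x\in\mathbb{Z}_2^n}(-1)^{a\cdot x} = 2^n\delta_{a,0}$, which vanishes because $a\neq 0$; similarly the $y$-sum in the second piece produces $2^n\delta_{b,0}=0$ since $b\neq 0$. Hence $\partial\mu_{a,b}^P = 0$, i.e. $\mu_{a,b}^P\in Z$, and \cref{cor:cycl} concludes that $p_{a,b}^P$ is learnable.

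There is no real obstacle: the proof is a one-line boundary computation once one recognizes that the Fourier phases $(-1)^{a\cdot x}$ and $(-1)^{b\cdot y}$ kill the two halves of $\partial\mu_{a,b}^P$ independently whenever $a,b\neq 0$. The only thing worth flagging is a minor conceptual point that I would make explicit in the write-up: the proposition is a statement about the low-noise-rate linearization used to \emph{define} the learnable combinations of error rates in this subsection, so ``learnable'' here should be interpreted in the sense of being a linear functional of $\{\log\lambda_{x,y}^Q\}$ lying in the learnable subspace $F_L=Z$, consistent with the framework set up in \cref{sec:dLearn}.
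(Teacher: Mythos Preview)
Your proposal is correct and matches the paper's own proof essentially line for line: both show the $1$-chain $\mu_{a,b}^P=\frac{1}{4^{n+m}}\sum_{x,y,Q}(-1)^{a\cdot x+b\cdot y+\langle P,Q\rangle}e_{x,y}^Q$ lies in the cycle space by computing its boundary, splitting into the two pattern terms, and killing each with the Fourier sum $\sum_x(-1)^{a\cdot x}=0$ (resp.\ $\sum_y(-1)^{b\cdot y}=0$) since $a,b\neq0$. Your added remark clarifying that ``learnable'' here is meant in the linearized sense of \cref{sec:pLearn} is a welcome touch the paper leaves implicit.
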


The proof is existential, but Corollary~\ref{cor:chainl} also suggests a concrete way to learn this quantity. Specifically, one can learn each $\log\lambda_{x,y}^Q$ as in Eq.~\eqref{eq:errorrate} together with its end point using our Protocol. Thanks to \cref{cor:chainl}, the end points will cancel as we sum up all the estimators. 
Alternatively, one can try to directly decompose Eq.~\eqref{eq:errorrate} into sum of cycles.
As a concrete example, we give the protocol for learning $p_{1,1}^I$ when $G=\mathrm{CNOT}$ (the setting for \cref{fig:pt}).
$p_{1,1}^I$ can be decomposed as
\begin{align}
&16p_{1,1}^I\approx\nonumber\\
&\log\lambda_{0,0}^I+(\log\lambda_{1,1}^X+\log\lambda_{1,1}^Y+\log\lambda_{0,0}^Y+\log\lambda_{0,1}^Z+\log\lambda_{0,0}^X+\log\lambda_{0,0}^Z+\log\lambda_{0,1}^Z+\log\lambda_{1,1}^I+\log\lambda_{1,1}^Z)\nonumber\\
&-(\log\lambda_{0,1}^Y+\log\lambda_{1,0}^X+\log\lambda_{0,1}^Z+\log\lambda_{0,1}^X+\log\lambda_{1,0}^Y+\log\lambda_{0,1}^Z+\log\lambda_{1,0}^I+\log\lambda_{0,1}^I+\log\lambda_{1,0}^Z+\log\lambda_{0,1}^Z).\label{eq:decompexp}
\end{align}
We note that the first term, $\log\lambda_{0,0}^I$ is fixed to be $0$, so we only need to apply our protocol $2$ times to learn the second and the last term.

An important application for MCMs is syndrome measurement.
In this case we measure the stabilizer $S$ by introducing an ancilla, perform a Hadamard gate on the ancilla, apply controlled-$S$ using the ancilla as control, and then apply another Hadamard transformation on the ancilla and measure the ancilla.
See \cref{fig:syndrome} for an illustration of the ideal Clifford gate $G$ in the case when we want to measure stabilizers $S_1,\ldots,S_n$ simultaneously.
Similar noise extraction under syndrome measurements settings have been studied in Refs.~\cite{wagner2022pauli,wagner2023learning}, though under different noise assumptions.

The following two propositions consider this case.
For $k\in\mathbb{Z}_2^n$, define $S^k=\prod S_i^{k_i}$.
Denote the set of Pauli operators that commute with all stabilizers by $C_{\mathcal{P}^m}(\mathcal{S})$.
\begin{figure}
\centering
\begin{tabular}{c}
\Qcircuit @C=1em @R=.7em {
    & /^m \qw & \qw & \gate{S_1} & \qw & \gate{S_2} & \qw & \qw & & & & /^m \qw & \multigate{2}{G} & \qw \\
    & \qw & \gate{H} & \ctrl{-1} & \gate{H} & \qw & \qw & \qw & & \!\!\!\!\!\!\Longleftrightarrow & & \qw & \ghost{G} & \qw \\
    & \qw & \qw & \qw & \gate{H} & \ctrl{-2} & \gate{H} & \qw & & & & \qw &\ghost{G} & \qw
}
\end{tabular}
\caption{Illustration of the ideal Clifford gate $G$ for syndrome measurements. Here we take $n=2$.}
\label{fig:syndrome}
\end{figure}

\begin{proposition}\label{prop:2}
$\forall P\in C_{\mathcal{P}^m}(\mathcal{S})$, $\sum_{k\in\mathbb{Z}_2^n}p_{0,0}^{S^kP}$ is learnable.
\end{proposition}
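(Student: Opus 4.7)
The plan is to Fourier-expand $\sum_k p_{0,0}^{S^kP}$, recognize it (modulo an irrelevant constant) as a $1$-chain in the pattern transfer graph, and then verify this $1$-chain lies in the cycle space $Z$, so that \cref{cor:cycl} delivers learnability. Starting from \cref{eq:USId2p} with $a=b=0$ and summing over $k$, the $k$-sum factorizes as
\begin{equation*}
\sum_{k\in\mathbb{Z}_2^n}(-1)^{\langle S^kP,Q\rangle}=(-1)^{\langle P,Q\rangle}\prod_{i=1}^n\sum_{k_i\in\mathbb{Z}_2}(-1)^{k_i\langle S_i,Q\rangle},
\end{equation*}
which equals $2^n(-1)^{\langle P,Q\rangle}$ when $Q\in C_{\mathcal{P}^m}(\mathcal{S})$ and vanishes otherwise. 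Hence
\begin{equation*}
\sum_k p_{0,0}^{S^kP}=\frac{2^n}{4^{n+m}}\sum_{x,y\in\mathbb{Z}_2^n,\,Q\in C_{\mathcal{P}^m}(\mathcal{S})}(-1)^{\langle P,Q\rangle}\lambda_{x,y}^Q.
\end{equation*}
Under the low-noise approximation (\cref{eq:fidelityapprox,eq:errorrate}), learning this quantity reduces to learning the $1$-chain
\begin{equation*}
\mu:=\sum_{x,y\in\mathbb{Z}_2^n,\,Q\in C_{\mathcal{P}^m}(\mathcal{S})}(-1)^{\langle P,Q\rangle}e_{x,y}^Q,
\end{equation*}
and by \cref{thm:lear} it suffices to show $\mu\in Z$, i.e., $\partial\mu=0$.

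The second step is a structural computation of $\mathcal{G}^\dagger(Q\otimes Z^x)$ for $Q\in C_{\mathcal{P}^m}(\mathcal{S})$. Tracing Paulis through the syndrome-extraction circuit of \cref{fig:syndrome}, the system Pauli $Q$ passes through the ancilla Hadamards trivially and through each controlled-$S_i$ undisturbed (because $[Q,S_i]=0$), while the ancilla $Z^x$ is rotated into $X^x$, anticommutes with each controlled-$S_i$ in the support of $x$ and thereby deposits a factor of $S_i$ on the system, and is rotated back into $Z^x$. Thus $\mathcal{G}^\dagger(Q\otimes Z^x)=QS^x\otimes Z^x$ up to sign, so the edge $e_{x,y}^Q$ runs from the vertex $(pt(QS^x),x)$ to $(pt(Q),y)$.

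The third step is to check $\partial\mu=0$ vertex by vertex. For $v=(v_S,v_A)$, the endpoint slice ($y=v_A$, $x$ free) contributes $+2^n\sum_{Q\in C_{\mathcal{P}^m}(\mathcal{S}),\,pt(Q)=v_S}(-1)^{\langle P,Q\rangle}$, while the startpoint slice ($x=v_A$, $y$ free) contributes $-2^n\sum_{Q\in C_{\mathcal{P}^m}(\mathcal{S}),\,pt(QS^{v_A})=v_S}(-1)^{\langle P,Q\rangle}$. Substituting $Q'=QS^{v_A}$ in the second sum—a bijection on $C_{\mathcal{P}^m}(\mathcal{S})$ since $S^{v_A}\in C_{\mathcal{P}^m}(\mathcal{S})$—and using bilinearity of $\langle\cdot,\cdot\rangle$ together with $P\in C_{\mathcal{P}^m}(\mathcal{S})$,
\begin{equation*}
(-1)^{\langle P,Q\rangle}=(-1)^{\langle P,Q'S^{v_A}\rangle}=(-1)^{\langle P,Q'\rangle}(-1)^{\langle P,S^{v_A}\rangle}=(-1)^{\langle P,Q'\rangle},
\end{equation*}
so the two slices match and cancel, giving $\partial\mu=0$; then $\mu\in Z$ and \cref{cor:cycl} yields the claim.

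The main subtlety lies in the structural identity $\mathcal{G}^\dagger(Q\otimes Z^x)=QS^x\otimes Z^x$ for $Q$ commuting with every stabilizer: it is this identity that couples the startpoint's system label to the ancilla offset via $S^x$, which in turn makes the bijection $Q\mapsto QS^{v_A}$ carry startpoints to endpoints correctly. The hypothesis $P\in C_{\mathcal{P}^m}(\mathcal{S})$ enters exactly to trivialize the extra sign $(-1)^{\langle P,S^{v_A}\rangle}$ introduced by that bijection, and is the only place the assumption is used.
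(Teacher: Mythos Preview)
Your proof is correct and follows essentially the same route as the paper: reduce to the $1$-chain supported on $Q\in C_{\mathcal{P}^m}(\mathcal{S})$ via the character sum $\sum_k(-1)^{\langle S^k,Q\rangle}$, invoke the structural identity $\mathcal{G}^\dagger(Q\otimes Z^x)=QS^x\otimes Z^x$, and cancel the boundary by the bijection $Q\mapsto QS^{v_A}$ together with $P\in C_{\mathcal{P}^m}(\mathcal{S})$. The only cosmetic difference is that you check $\partial\mu=0$ vertex by vertex, whereas the paper manipulates the global sum directly; the underlying substitution and use of the hypothesis are identical.
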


The summation over $k\in\mathbb{Z}_2^n$ can be intuitively understood as the following.
Since we are measuring stabilizers, it makes no effect if we randomly apply some stabilizers before the measurement.
Wlog suppose we indeed applied the random stabilizers.
Then errors that differ by stabilizers becomes equivalent, thus only their average error rates can be learnable.

We note that when $S_1=\cdots=S_n=I$, or equivalently, $G=I$ and we are implementing a subsystem measurement, this proposition shows the learnability of $p_{0,0}^I$, which is the process fidelity between the uniform stochastic instrument and the ideal subsystem measurement~\cite{mclaren2023stochastic}.

As another special case, when $n=0$, no stabilizers are measured.
By the requirements on $G$, $G$ must be identity.
Now $C_{\mathcal{P}^m}(\mathcal{S})=\mathcal{P}^m$, and this proposition degenerates into the fact that for an isolated Pauli channel, every parameter is learnable.

\begin{proposition}\label{prop:3}
$\forall P\in C_{\mathcal{P}^m}(\mathcal{S})$, $\forall a\neq0$, $\sum_{k\in\mathbb{Z}_2^n}p_{0,a}^{S^kP}+p_{a,0}^{S^kP}$ is learnable.
\end{proposition}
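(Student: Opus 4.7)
The plan is to invoke the learnability theorem (\cref{thm:lear}): it suffices to exhibit, in the low-noise regime, a $1$-chain $\mu$ in the cycle space $Z$ whose value on the noise model is $\sum_{k}(p_{0,a}^{S^kP}+p_{a,0}^{S^kP})$. Starting from the linearization \cref{eq:errorrate}, one has
\begin{align*}
    \sum_{k\in\mathbb Z_2^n}\!\!\left(p_{0,a}^{S^kP}+p_{a,0}^{S^kP}\right)
    \approx \frac{1}{4^{n+m}}\sum_{x,y\in\mathbb Z_2^n,\,Q\in\mc P^m}\!\!\bigl[(-1)^{a\cdot y}+(-1)^{a\cdot x}\bigr]\log\lambda_{x,y}^{Q}\cdot\sum_{k\in\mathbb Z_2^n}(-1)^{\langle S^kP,Q\rangle}.
\end{align*}
The character sum over $k$ equals $2^n(-1)^{\langle P,Q\rangle}$ when $Q\in C_{\mc P^m}(\mc S)$ and vanishes otherwise; this is the first simplification. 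Thus the resulting $1$-chain is
\[
    \mu\;=\;\frac{2^n}{4^{n+m}}\sum_{x,y,\,Q\in C_{\mc P^m}(\mc S)}\!\bigl[(-1)^{a\cdot y}+(-1)^{a\cdot x}\bigr](-1)^{\langle P,Q\rangle}\,e_{x,y}^{Q}.
\]

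Next, I would establish the key structural fact that for the syndrome-measurement circuit $G$ in \cref{fig:syndrome} and any $Q\in C_{\mc P^m}(\mc S)$,
\[
    \mc G^{\dagger}(Q\otimes Z^{x})\;=\;QS^{x}\otimes Z^{x},\qquad S^x:=\prod_i S_i^{x_i}.
\]
This comes from direct Heisenberg evolution: $\mc{G}^\dagger(Q\otimes I)=Q\otimes I$ because $Q$ commutes with every controlled-$S_i$ (as $[Q,S_i]=0$), while $\mc G^\dagger(I\otimes Z_i)=S_i\otimes Z_i$ from sandwiching $I\otimes X_i$ through $H\,\mathrm{CS}_i\,H$; multiplicativity and mutual commutativity of the $S_i$'s give the general formula. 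Consequently, for $Q\in C_{\mc P^m}(\mc S)$, the edge $e_{x,y}^{Q}$ runs between the vertices $\bigl(pt(QS^x),\,x\bigr)$ and $\bigl(pt(Q),\,y\bigr)$ (writing data and ancilla patterns separately).

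I would then compute $\partial\mu$ vertex by vertex. At a vertex $(w,z)$, the ``incoming'' contribution sums the $c_{x,y}^{Q}$'s over $x$ and over $Q\in C_{\mc P^m}(\mc S)$ with $pt(Q)=w$ and $y=z$; the ``outgoing'' contribution sums over $y$ and over $Q\in C_{\mc P^m}(\mc S)$ with $pt(QS^z)=w$ and $x=z$. Because $a\neq 0$, the sum $\sum_x(-1)^{a\cdot x}=0$ and similarly in $y$, so in each case exactly one of the two terms $(-1)^{a\cdot y}+(-1)^{a\cdot x}$ survives, each leaving a factor $2^n(-1)^{a\cdot z}$. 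The incoming sum then reads $2^n(-1)^{a\cdot z}\!\sum_{Q:\,pt(Q)=w,\,Q\in C(\mc S)}(-1)^{\langle P,Q\rangle}$, while the outgoing sum is the same expression but with the constraint $pt(QS^z)=w$. The substitution $Q'=QS^z$ is an involution on $C_{\mc P^m}(\mc S)$ (since $S^z\in C(\mc S)$ and $(S^z)^2=I$), and since $P\in C_{\mc P^m}(\mc S)$ we have $(-1)^{\langle P,Q\rangle}=(-1)^{\langle P,Q'\rangle}$; so the two sums coincide and $\partial\mu=0$.

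Hence $\mu\in Z$ and by \cref{cor:cycl} it is learnable, which by the linearization is (to first order) the claimed quantity. The main obstacles are (i) verifying the Heisenberg formula $\mc G^\dagger(Q\otimes Z^x)=QS^x\otimes Z^x$ carefully---this requires tracking that both the Hadamards and the controlled-$S_j$'s for $j\neq i$ commute with the relevant intermediate Paulis---and (ii) matching the two contributions to the boundary through the involution $Q\mapsto QS^z$, which is exactly the point where $P\in C_{\mc P^m}(\mc S)$ is essential (otherwise a residual sign $(-1)^{\langle P,S^z\rangle}$ would obstruct cancellation). The assumption $a\neq 0$ is what reduces the two-term bracket to a single surviving contribution at each vertex; the case $a=0$ would give a non-cycle, in agreement with Proposition~2 capturing the $a=0$ piece differently.
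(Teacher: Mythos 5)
Your proposal is correct and follows essentially the same route as the paper's proof: linearize via \cref{eq:errorrate}, use the character sum over $k$ to restrict to $Q\in C_{\mathcal{P}^m}(\mathcal{S})$, exploit $pt(\mathcal{G}^\dagger(Q\otimes Z^x))=pt(S^xQ\otimes Z^x)$, kill one bracket term with $\sum_x(-1)^{a\cdot x}=0$, and cancel the two boundary contributions via the substitution $Q\mapsto QS^x$ together with $P\in C_{\mathcal{P}^m}(\mathcal{S})$. The only differences are cosmetic: you organize the boundary computation vertex by vertex rather than as a single global sum, and you explicitly verify the Heisenberg-evolution identity $\mathcal{G}^\dagger(Q\otimes Z^x)=QS^x\otimes Z^x$ that the paper uses implicitly.
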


The summation over $k\in\mathbb{Z}_2^n$ originates from the random stabilizer intuition.
The intuitive meaning of summing $p_{0,a}^P$ and $p_{a,0}^P$ together is that we can not distinguish between wrong but consistent with post-measurement state result and a correct but inconsistent with post-measurement state result.
These two possibilities can only be learned as a unity.

\section{Application: testing the independence of measurement and state preparation}\label{sec:application}
In \cref{sec:MPformal} we described the measure and prepare instrument model.
The key assumption we made is that the measurement and the subsequent state preparation are independent and suffers from uncorrelated noises.
In experiment, we can test whether the assumption is true or not by checking the additional structures of the model.
To see why we can do this, we need more understanding on the structure.
\begin{lemma}\label{lem:indepequiv}
$\exists\zeta,\xi$ s.t. $\forall x,y\in\mathbb{Z}_2^n,Q\in\mathcal{P}^m$, $\lambda'^Q_{x,y}=\zeta_x^Q\xi_y^Q$ if and only if $\forall x_1,x_2,y_1,y_2\in\mathbb{Z}_2^n,Q\in\mathcal{P}^m$, the correlation $c^Q_{x_1,x_2,y_1,y_2}\coloneqq\log\lambda'^Q_{x_1,y_1}+\log\lambda'^Q_{x_2,y_2}-\log\lambda'^Q_{x_2,y_1}-\log\lambda'^Q_{x_1,y_2}=0$.
\end{lemma}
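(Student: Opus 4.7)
The statement is the standard fact that a strictly positive function of two variables factorizes as a product of single-variable functions iff its logarithm is a sum of single-variable functions. My plan is to prove the two directions separately. Throughout I will use the standing assumption from the noise model that all Pauli fidelities are strictly positive, so that $\log\lambda'^Q_{x,y}$ is well-defined.

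The forward direction ($\Rightarrow$) is immediate: if $\lambda'^Q_{x,y}=\zeta_x^Q\xi_y^Q$, then $\log\lambda'^Q_{x,y}=\log\zeta_x^Q+\log\xi_y^Q$, and substituting into the definition of $c^Q_{x_1,x_2,y_1,y_2}$ makes the four terms cancel in pairs, yielding $0$.

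For the backward direction ($\Leftarrow$), I will construct explicit $\zeta$ and $\xi$ by anchoring at the reference point $x=y=0$. Assume $c^Q_{x_1,x_2,y_1,y_2}=0$ for every choice of indices. Setting $x_2=0$ and $y_2=0$ in the defining relation yields
\begin{equation}
\log\lambda'^Q_{x,y}=\log\lambda'^Q_{x,0}+\log\lambda'^Q_{0,y}-\log\lambda'^Q_{0,0}
\end{equation}
for all $x,y,Q$. This suggests the definitions
\begin{equation}
\zeta_x^Q\coloneqq\lambda'^Q_{x,0},\qquad \xi_y^Q\coloneqq\frac{\lambda'^Q_{0,y}}{\lambda'^Q_{0,0}},
\end{equation}
which are well-defined since $\lambda'^Q_{0,0}>0$ by the strict positivity assumption. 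Exponentiating the displayed identity then gives $\lambda'^Q_{x,y}=\zeta_x^Q\xi_y^Q$, as required.

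There is no substantive obstacle here; the only subtlety worth flagging is the positivity requirement that justifies taking logarithms and dividing by $\lambda'^Q_{0,0}$, which is guaranteed by Assumption~6 of the noise model. Note also that the factorization $(\zeta,\xi)$ is not unique, as one may rescale by $\zeta_x^Q\mapsto\alpha^Q\zeta_x^Q$, $\xi_y^Q\mapsto(\alpha^Q)^{-1}\xi_y^Q$ for any nonzero scalars $\alpha^Q$; this gauge freedom will be relevant when relating the factorized fidelities back to the probability distributions $q$ and $r$ in Section~\ref{sec:application}.
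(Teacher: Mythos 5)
Your proof is correct and follows essentially the same route as the paper's: the forward direction by direct cancellation, and the backward direction by anchoring at $x=y=0$ to obtain $\log\lambda'^Q_{x,y}=\log\lambda'^Q_{x,0}+\log\lambda'^Q_{0,y}-\log\lambda'^Q_{0,0}$ and reading off an explicit factorization. The only cosmetic difference is how the normalization $\lambda'^Q_{0,0}$ is distributed between $\zeta$ and $\xi$ (the paper splits it symmetrically via $\sqrt{\lambda'^Q_{0,0}}$, you assign it entirely to $\xi$), which is exactly the gauge freedom you already note.
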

\begin{proof}
The only if part is easy.
Suppose $\exists\zeta,\xi$ s.t. $\forall x,y\in\mathbb{Z}_2^n,Q\in\mathcal{P}^m$, $\lambda'^Q_{x,y}=\zeta_x^Q\xi_y^Q$, then for $\forall x_1,x_2,y_1,y_2\in\mathbb{Z}_2^n,Q\in\mathcal{P}^m$,
\begin{align}
&\log\lambda'^Q_{x_1,y_1}+\log\lambda'^Q_{x_2,y_2}-\log\lambda'^Q_{x_2,y_1}-\log\lambda'^Q_{x_1,y_2}\\
=&\log\zeta_{x_1}^Q+\log\xi_{y_1}^Q+\log\zeta_{x_2}^Q+\log\xi_{y_2}^Q-\log\zeta_{x_2}^Q-\log\xi_{y_1}^Q-\log\zeta_{x_1}^Q-\log\xi_{y_2}^Q=0.
\end{align}
Conversely, for the if part, set $x_1=y_1=0$, $x_2=x$, $y_2=y$ we have
\begin{equation}
\log\lambda_{x,y}'^Q=\log\lambda_{x,0}'^Q+\log\lambda_{0,y}'^Q-\log\lambda_{0,0}'^Q.
\end{equation}
Thus we can set $\zeta_x^Q=\frac{\lambda_{x,0}'^Q}{\sqrt{\lambda_{0,0}'^Q}}$, $\xi_y^Q=\frac{\lambda_{0,y}'^Q}{\sqrt{\lambda_{0,0}'^Q}}$, proving the result.
\end{proof}
From \cref{lem:indepequiv} we see that to test the independence of measurement and state preparation, we can equivalently test whether the correlations are all zero.
Furthermore, these conditions are learnable information and hence are indeed verifiable.
\begin{lemma}\label{lem:indepLearnable}
$\forall x,y\in\mathbb{Z}_2^n,Q\in\mathcal{P}^m$, $\log\lambda'^Q_{x_1,y_1}+\log\lambda'^Q_{x_2,y_2}-\log\lambda'^Q_{x_2,y_1}-\log\lambda'^Q_{x_1,y_2}$ is learnable.
\end{lemma}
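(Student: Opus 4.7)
The plan is to recognize the quantity in question as the evaluation of a $1$-chain on the noise model, and then invoke the learnability characterization (Theorem~\ref{thm:lear}, or more directly Corollary~\ref{cor:cycl}): a $1$-chain is learnable whenever it lies in the cycle space $Z$, i.e., its boundary vanishes. Thus the whole proof reduces to checking that a specific $1$-chain built from four edges has boundary zero in the pattern transfer graph.

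Concretely, define the $1$-chain
\[
c \;\coloneqq\; e_{x_1,y_1}^{Q} + e_{x_2,y_2}^{Q} - e_{x_2,y_1}^{Q} - e_{x_1,y_2}^{Q}.
\]
By the interpretation of $1$-chains as functions of the noise model introduced in Section~\ref{sec:learnability}, we have $c(\mathcal N)=\log\lambda_{x_1,y_1}^{Q}+\log\lambda_{x_2,y_2}^{Q}-\log\lambda_{x_2,y_1}^{Q}-\log\lambda_{x_1,y_2}^{Q}$, which is exactly the quantity we want to show is learnable. (Here I treat $\lambda'$ as the Pauli fidelities of the underlying uniform stochastic instrument; if the measure-and-prepare factorization holds they coincide with $\zeta_x^Q\xi_y^Q$, but the learnability statement is a statement about the fidelities directly and does not presuppose factorization.)

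Next I would compute $\partial c$ using the definition $\partial e_{x,y}^{Q} = pt(Q\otimes Z^{y}) - pt(\mathcal G^\dagger(Q\otimes Z^{x}))$. The four endpoints that arise are
\[
pt(Q\otimes Z^{y_1}),\quad pt(Q\otimes Z^{y_2}),\quad pt(\mathcal G^\dagger(Q\otimes Z^{x_1})),\quad pt(\mathcal G^\dagger(Q\otimes Z^{x_2})),
\]
and each appears once with $+$ and once with $-$ in $\partial c$: for instance $pt(Q\otimes Z^{y_1})$ comes with $+$ from $e_{x_1,y_1}^{Q}$ and with $-$ from $e_{x_2,y_1}^{Q}$, while $pt(\mathcal G^\dagger(Q\otimes Z^{x_1}))$ comes with $-$ from $e_{x_1,y_1}^{Q}$ and with $+$ from $-e_{x_1,y_2}^{Q}$. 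The remaining two endpoints cancel analogously. Hence $\partial c = 0$, so $c\in Z$, and by Corollary~\ref{cor:cycl} the function $c(\mathcal N)$ is learnable.

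There is no real obstacle here: the conceptual content is entirely in Theorem~\ref{thm:lear}, which identifies the learnable subspace with the cycle space. The role of the $2\times 2$ ``difference of differences'' structure in the statement is precisely to guarantee that the start and end vertex contributions cancel, so the $1$-chain is automatically closed. If desired, one can also give a constructive learning recipe by noting that $c$ is a sum of two directed cycles, e.g.\ $(e_{x_1,y_1}^{Q}-e_{x_1,y_2}^{Q})+(e_{x_2,y_2}^{Q}-e_{x_2,y_1}^{Q})$, each of the form $e_{x,y}^{Q}-e_{x,y'}^{Q}$, which has vanishing boundary and can be estimated directly via the protocol of Theorem~\ref{thm:prot}.
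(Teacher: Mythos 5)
Your main argument is correct and is exactly the paper's proof: form the $1$-chain $e^Q_{x_1,y_1}+e^Q_{x_2,y_2}-e^Q_{x_2,y_1}-e^Q_{x_1,y_2}$, check that its boundary vanishes by pairwise cancellation of the four endpoint patterns, and invoke the cycle-space characterization of learnability. One caveat on your closing aside: the decomposition $\bigl(e^Q_{x_1,y_1}-e^Q_{x_1,y_2}\bigr)+\bigl(e^Q_{x_2,y_2}-e^Q_{x_2,y_1}\bigr)$ is \emph{not} a sum of two cycles, since $\partial\bigl(e^Q_{x,y}-e^Q_{x,y'}\bigr)=pt(Q\otimes Z^{y})-pt(Q\otimes Z^{y'})\neq0$ for $y\neq y'$; only the full four-term combination is closed. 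The constructive recipe the paper actually uses is to estimate each $\log\lambda^Q_{x_i,y_j}$ together with its endpoints via Corollary~\ref{cor:chainl} and let the endpoints cancel in the signed sum.
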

\begin{proof}
\begin{align}
\partial\left(e^Q_{x_1,y_1}+e^Q_{x_2,y_2}-e^Q_{x_2,y_1}-e^Q_{x_1,y_2}\right)=&-pt(\mathcal{G}^\dagger(Q\otimes Z^x_1))+pt(Q\otimes Z^y_1)-pt(\mathcal{G}^\dagger(Q\otimes Z^x_2))+pt(Q\otimes Z^y_2)\nonumber\\
&-pt(\mathcal{G}^\dagger(Q\otimes Z^x_2))+pt(Q\otimes Z^y_1)-pt(\mathcal{G}^\dagger(Q\otimes Z^x_1))+pt(Q\otimes Z^y_2)=0
\end{align}
\end{proof}
Since $c^Q_{x_1,x_2,y_1,y_2}=-c^Q_{x_2,x_1,y_1,y_2}=-c^Q_{x_1,x_2,y_2,y_1}$, in experiments we only need to estimate for $x_1<x_2$ and $y_1<y_2$ (comparison under alphabetical order).

\section{Numerical simulations}
\label{sec:numerical}
In this section, we perform numerical simulations for the simple example of $G=\mathrm{CNOT}$ and $m=n=1$, which is a basic gadget for syndrome measurement.

\paragraph{Learning Pauli fidelities.}\label{par:fid}
We randomly generate a noisy quantum instrument and simulate our (modified) protocol on it.
For each of the cycle basis, the way we learn it is to concatenate the cycle multiple times in the main circuit (Steps $1-4$) so that the cycle becomes of length $\ell=12$ (i.e. $12$ MCMs in each of the main circuit).
We assign values for the random variables in the randomized compilings $100$ times.
This results in $100$ random circuits, which we call compiled circuits, and for each compiled circuit we sample $100$ shots.
We take the same number of shots for the auxiliary circuit (Step $5$), so in total $N=20000$ shots are taken for each data point.
We then process the data to get the geometric mean of the Pauli fidelities in the cycle basis.
The standard deviations of the obtained results are calculated through bootstrapping.
Our results are shown in \cref{fig:learnedCycle}.

\begin{figure}[ht]
\centering
\includegraphics[width=0.99\textwidth]{"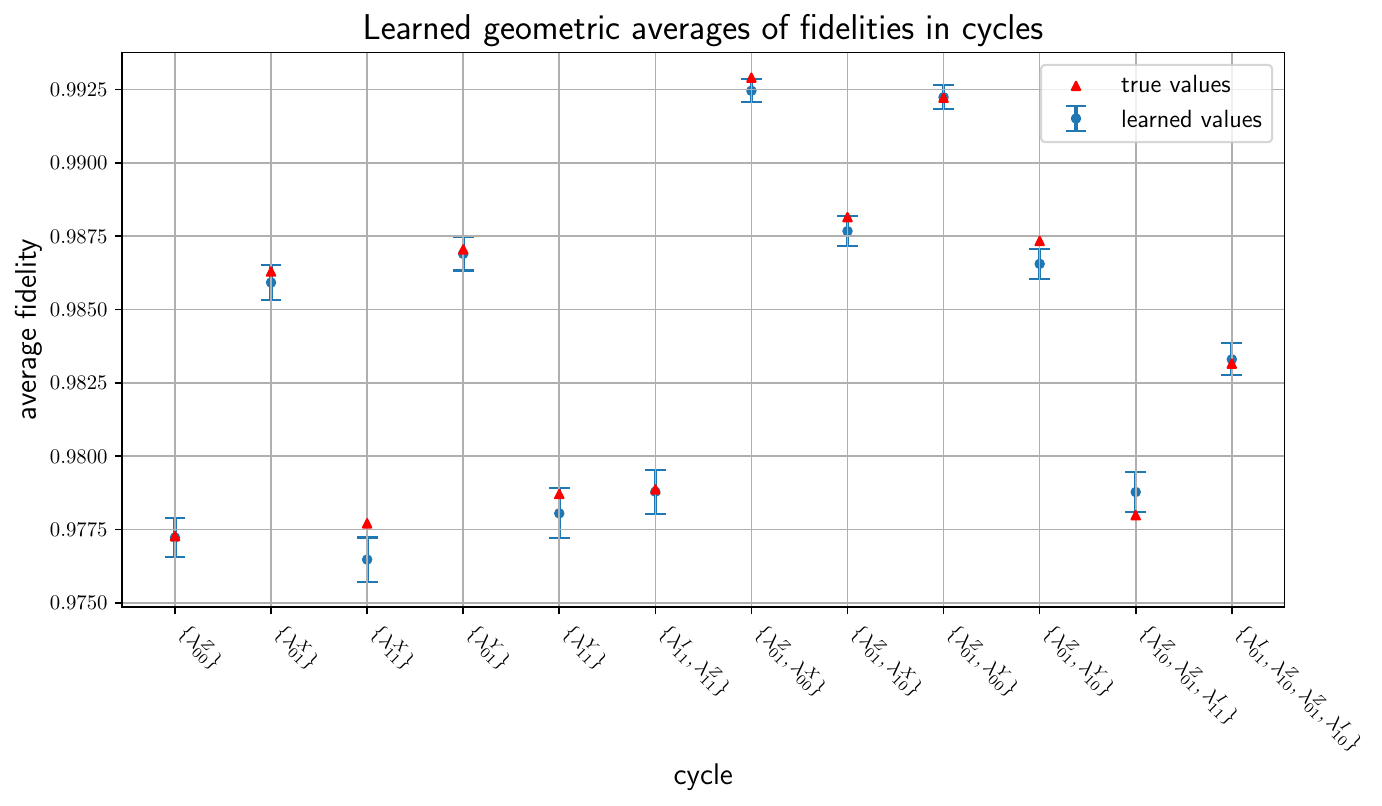"}
\caption{Simulation results in \hyperref[par:fid]{Part a}. There are $13$ learnable degrees of freedom, each corresponds to a cycle. The x-axis shows the corresponding cycles while the y-axis shows the learned geometric average of the fidelities in the cycle. The learnable information $\lambda^I_{00}$ is omitted here since it is always $1$. The error bars represent one standard deviation.}
\label{fig:learnedCycle}
\end{figure}

In the setting above, the variance can also be estimated theoretically.
Suppose $s,t,\lambda$ are all sufficiently close to $1$, the variance is roughly given by\footnote{It is worth noting that a rigorous formula for the variance is unattainable as our protocol has some small probability of failure.
In extreme cases where either $s$ or $t$ in the protocol deviates significantly from their expected value such that $\frac{s}{t}$ becomes negative, the output of the protocol is undefined.
However, assuming that the number of shots taken is sufficient so that such extreme cases can be ignored and that the noise rate is low, the variance of the results can be estimated.}
\begin{align}
\operatorname{Var}\left[e^{\frac{1}{\ell}\log\frac{s}{t}}\right]&\approx\operatorname{Var}\left[\frac{s-t}{\ell}\right]=\frac{\operatorname{Var}(s)+\operatorname{Var}(t)}{\ell^2}=\frac{1-\mathbb{E}[s]^2+1-\mathbb{E}[t]^2}{\ell^2 N/2}\nonumber\\
&=\frac{1-\lambda_0^2\lambda^{2\ell}+1-\lambda_0^2}{\ell^2N/2}\approx\frac{2(1-\lambda_0^2)}{\ell^2N/2}+\frac{\lambda_0^2(1-\lambda^2)}{\ell N/2}.\label{eq:std}
\end{align}
Here $\ell=lL$ is the number of MCMs used in each shot, $N$ is the total number of shots used for estimating $s$ and $t$, and $\lambda_0=\langle\!\langle\mathcal{G}^\dagger(Q_1\otimes Z^{x_1})|\rho\rangle\!\rangle\lambda_M^{v_l}$ is a constant related only to SPAM errors.
We have used the assumption that the noise rate is small, so $s$ and $t$ concentrates around values close to $1$.

\paragraph{Testing independence of measurement and state preparation.}\label{par:test}
We then simulate the experiment of testing the independence of the measurement and state preparation, as described in \cref{sec:application}.
We randomly construct a quantum instrument for a noisy MCM and apply randomized compiling on it.
A noisy MCM with independent Pauli channel noise before and after measurement (a measure and prepare instrument) is used for comparison.
The two MCMs have roughly the same noise level and we simulate our test for them.
Specifically, we learn the correlations $c^Q_{0,1,0,1}$ for $Q\in\{I, X, Y, Z\}$.
Because of the anti-symmetry of the correlations, these are all the correlations that we need to test.
We use an approach slightly different to the previous part to learn the correlations.
For each correlation, we learn the $4$ log fidelities separately using the (modified) protocol and add/subtract them together.
By \cref{lem:indepLearnable} we know that the boundaries will cancel out.
For compiled MCM, $100$ compiled circuits are used for the main circuit, each with $20000$ shots.
$2000000$ shots are used for the auxiliary circuit, so one log fidelity is learned using $4000000$ shots.
The same number of shots are used for measure and prepare instrument.
The standard deviation is estimated via bootstrapping.
Our results are shown in \cref{fig:learnedIndep}.
One can see that our criteria judges the independence nicely.
We note that the true values are always non-negative because of the positivity requirement of the quantum instruments.

\begin{figure}[t]
\centering
\includegraphics[width=0.99\textwidth]{"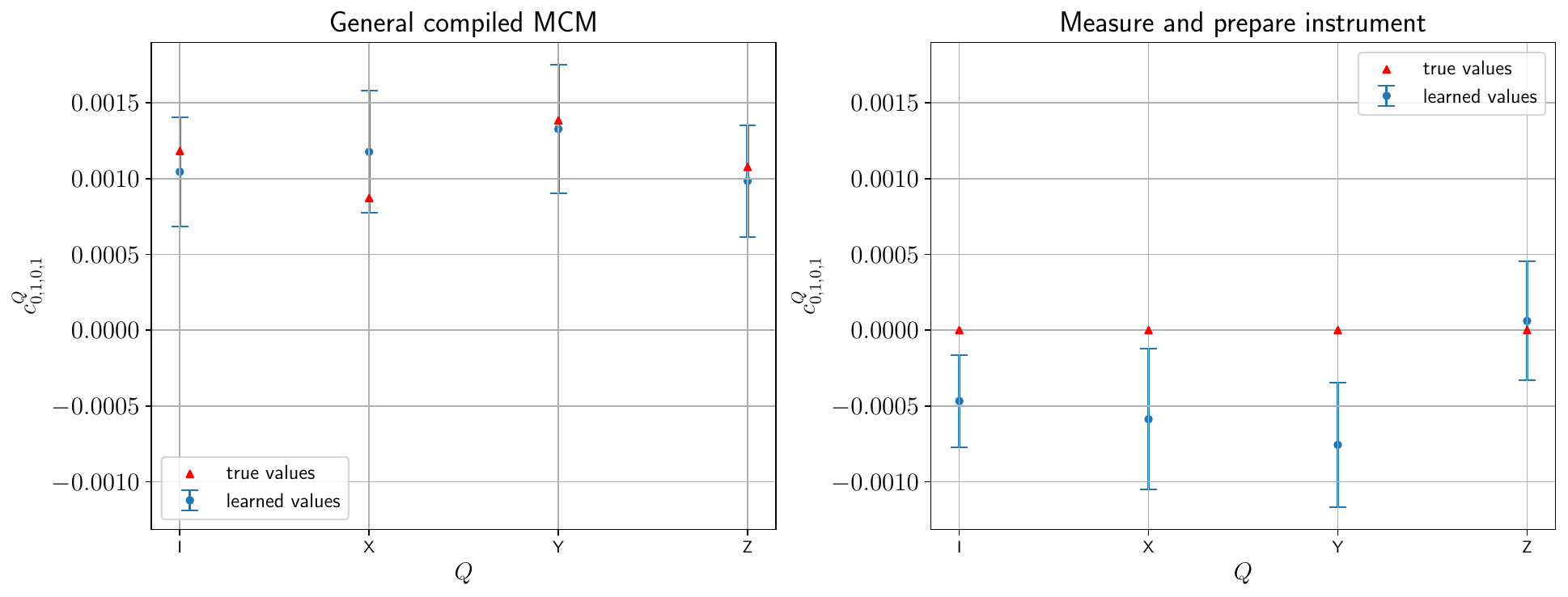"}
\caption{Simulation results in \hyperref[par:test]{Part b}. The left side shows the estimated correlations of a general compiled MCM while the right side shows that of a measure and prepare channel. The $y$-axis represents the learned values of the correlation $c^Q_{0,1,0,1}$. The $x$-axis indicates the corresponding $Q$. The error bars represent one standard deviation. Logarithms are in base $e$.}
\label{fig:learnedIndep}
\end{figure}

\paragraph{Learning Pauli error rate.}
Finally we learn the $p_{1,1}^I$ of a random quantum instrument.
The method has already been used as an example in \cref{eq:decompexp}.
We learn the two terms in \cref{eq:decompexp} by the (modified) protocol, each with $100$ compiled circuits for the main circuit and we take $10$ shots from each of the compiled circuits.
$1000$ shots are taken for the auxiliary circuit.
We repeat the above experiment $10$ times and use their mean as output, so in total $40000$ shots are taken.
The standard deviation is estimated from repetitions.
In our simulation, the true value for $p^I_{1,1}$ is $2.79\times10^{-4}$, while our learned value is $2.60\times10^{-4}$ with estimated standard deviation $0.55\times10^{-4}$.

\section{Discussion and outlook}

In this work, we conduct a comprehensive investigation on learning noisy MCMs. Using graph-theoretic tools, we can determine all the learnable degrees of freedom for a compiled MCM, and design a CB-type protocol to learn all learnable information. We demonstrate our learning protocol in a numerical example. As an application, we illustrate how our protocol can be used to test the independence between measurement and state preparation for an MCM.

Our results put forward a fundamental limit on characterizing randomly compiled MCMs. Namely, we show that certain parameters are coupled with gauge transformations, thus cannot be precisely learned SPAM-robustly. In practice, it could be desirable to obtain knowledge about those ``unlearnable'' degrees of freedom. For example, one might want to know whether the dominant error is in the measurement readout or in the post-measurement state, which might provide insight on how to improve the hardware design. Solving such problem would require additional assumptions to anchor the gauge, e.g., assuming noiseless state preparation or terminating measurements, or use the physicality constraints as investigated in Ref.~\cite{chen2023learnability}.
How to characterize MCMs efficiently under such physical assumptions requires dedicated exploration in the future.

% SPAM robustness is a key feature of our protocol.
% It makes our protocol robust against gauge transformations.
% It's important to acknowledge, however, that this comes with its own set of trade-offs.
% The information garnered through our protocol lack the capacity to differentiate among indistinguishable models.
% While this might not appear problematic at first glance, there are many scenarios in which the capability to distinguish between such models is highly desirable: the primary objective of a measurement is to extract valuable information about states.
% Yet, the state varies under the undetectable gauge transformations.
% Navigating this problem requires additional assumptions to anchor the gauge, for example assuming noiseless terminating measurements.
% How to characterize mid-circuit measurements efficiently under such assumptions requires dedicated exploration in the future.

Moreover, the concatenation technique discussed at the end of \cref{sec:protocol} may not always be applicable.
In \cref{sec:dLearn} we proved that all the learnable information are cycles.
For CB with interleaved gates, we can prove that the connected components of the pattern transfer graph are strongly connected.
As a consequence, it permits a directed cycle basis, meaning that all the learnable information can be decomposed into directed cycles and thus concatenated.
However, we cannot obtain similar results for our protocol.
Whether the entire cycle space can be learned through concatenation (i.e. whether the pattern transfer graph admits a circuit basis) remains an open question.

Finally, although we have given a complete characterization of the learnability of noisy MCMs in terms of Pauli fidelities, we are only able to give a partial characterization for the learnability of Pauli error rates with $3$ special cases in \cref{sec:pLearn}. For noisy Clifford gate, in contrast, the learnability of Pauli error rates and Pauli fidelities are basically the same, thanks to the fact that the cycle space is invariant under Walsh-Hadamard transform (see \cref{sec:invar} for details). Such property breaks down when we include MCMs. Therefore, a more comprehensive and physically-motivating understanding of the learnability for Pauli error rates is left for future investigation.

\section*{Code availability}
The code for numerical simulation can be found at \url{https://github.com/zhihan-z/MCM_Learnability}.

\begin{acknowledgments}
We thank Dongling Deng, Dong Yuan, Weiyuan Gong, Qi Ye, and Alireza Seif for helpful discussions.
S.C. and L.J. acknowledge support from the ARO (W911NF-23-1-0077), ARO MURI (W911NF-21-1-0325), AFOSR MURI (FA9550-19-1-0399, FA9550-21-1-0209, FA9550-23-1-0338), DARPA (HR0011-24-9-0359, HR0011-24-9-0361), NSF (OMA-1936118, ERC-1941583, OMA-2137642, OSI-2326767, CCF-2312755), NTT Research, Samsung GRO, Packard Foundation (2020-71479). Y.L.~is supported by DOE Grant No. DE-SC0024124, NSF Grant No. 2311733, and DOE Quantum Systems Accelerator. 

Note added -- Shortly after we posted our manuscript, an independent work proposed a similar algorithm~\cite{hines2024pauli}.
\end{acknowledgments}

\bibliography{ref}

%apsrev4-2.bst 2019-01-14 (MD) hand-edited version of apsrev4-1.bst
%Control: key (0)
%Control: author (8) initials jnrlst
%Control: editor formatted (1) identically to author
%Control: production of article title (0) allowed
%Control: page (0) single
%Control: year (1) truncated
%Control: production of eprint (0) enabled
\begin{thebibliography}{29}%
\makeatletter
\providecommand \@ifxundefined [1]{%
 \@ifx{#1\undefined}
}%
\providecommand \@ifnum [1]{%
 \ifnum #1\expandafter \@firstoftwo
 \else \expandafter \@secondoftwo
 \fi
}%
\providecommand \@ifx [1]{%
 \ifx #1\expandafter \@firstoftwo
 \else \expandafter \@secondoftwo
 \fi
}%
\providecommand \natexlab [1]{#1}%
\providecommand \enquote  [1]{``#1''}%
\providecommand \bibnamefont  [1]{#1}%
\providecommand \bibfnamefont [1]{#1}%
\providecommand \citenamefont [1]{#1}%
\providecommand \href@noop [0]{\@secondoftwo}%
\providecommand \href [0]{\begingroup \@sanitize@url \@href}%
\providecommand \@href[1]{\@@startlink{#1}\@@href}%
\providecommand \@@href[1]{\endgroup#1\@@endlink}%
\providecommand \@sanitize@url [0]{\catcode `\\12\catcode `\$12\catcode `\&12\catcode `\#12\catcode `\^12\catcode `\_12\catcode `\%12\relax}%
\providecommand \@@startlink[1]{}%
\providecommand \@@endlink[0]{}%
\providecommand \url  [0]{\begingroup\@sanitize@url \@url }%
\providecommand \@url [1]{\endgroup\@href {#1}{\urlprefix }}%
\providecommand \urlprefix  [0]{URL }%
\providecommand \Eprint [0]{\href }%
\providecommand \doibase [0]{https://doi.org/}%
\providecommand \selectlanguage [0]{\@gobble}%
\providecommand \bibinfo  [0]{\@secondoftwo}%
\providecommand \bibfield  [0]{\@secondoftwo}%
\providecommand \translation [1]{[#1]}%
\providecommand \BibitemOpen [0]{}%
\providecommand \bibitemStop [0]{}%
\providecommand \bibitemNoStop [0]{.\EOS\space}%
\providecommand \EOS [0]{\spacefactor3000\relax}%
\providecommand \BibitemShut  [1]{\csname bibitem#1\endcsname}%
\let\auto@bib@innerbib\@empty
%</preamble>
\bibitem [{\citenamefont {Ryan-Anderson}\ \emph {et~al.}(2021)\citenamefont {Ryan-Anderson}, \citenamefont {Bohnet}, \citenamefont {Lee}, \citenamefont {Gresh}, \citenamefont {Hankin}, \citenamefont {Gaebler}, \citenamefont {Francois}, \citenamefont {Chernoguzov}, \citenamefont {Lucchetti}, \citenamefont {Brown} \emph {et~al.}}]{ryan2021realization}%
  \BibitemOpen
  \bibfield  {author} {\bibinfo {author} {\bibfnamefont {C.}~\bibnamefont {Ryan-Anderson}}, \bibinfo {author} {\bibfnamefont {J.~G.}\ \bibnamefont {Bohnet}}, \bibinfo {author} {\bibfnamefont {K.}~\bibnamefont {Lee}}, \bibinfo {author} {\bibfnamefont {D.}~\bibnamefont {Gresh}}, \bibinfo {author} {\bibfnamefont {A.}~\bibnamefont {Hankin}}, \bibinfo {author} {\bibfnamefont {J.}~\bibnamefont {Gaebler}}, \bibinfo {author} {\bibfnamefont {D.}~\bibnamefont {Francois}}, \bibinfo {author} {\bibfnamefont {A.}~\bibnamefont {Chernoguzov}}, \bibinfo {author} {\bibfnamefont {D.}~\bibnamefont {Lucchetti}}, \bibinfo {author} {\bibfnamefont {N.~C.}\ \bibnamefont {Brown}}, \emph {et~al.},\ }\bibfield  {title} {\bibinfo {title} {Realization of real-time fault-tolerant quantum error correction},\ }\href {https://journals.aps.org/prx/pdf/10.1103/PhysRevX.11.041058} {\bibfield  {journal} {\bibinfo  {journal} {Physical Review X}\ }\textbf {\bibinfo {volume} {11}},\ \bibinfo {pages} {041058} (\bibinfo {year} {2021})}\BibitemShut
  {NoStop}%
\bibitem [{\citenamefont {Singh}\ \emph {et~al.}(2023)\citenamefont {Singh}, \citenamefont {Bradley}, \citenamefont {Anand}, \citenamefont {Ramesh}, \citenamefont {White},\ and\ \citenamefont {Bernien}}]{singh2023mid}%
  \BibitemOpen
  \bibfield  {author} {\bibinfo {author} {\bibfnamefont {K.}~\bibnamefont {Singh}}, \bibinfo {author} {\bibfnamefont {C.~E.}\ \bibnamefont {Bradley}}, \bibinfo {author} {\bibfnamefont {S.}~\bibnamefont {Anand}}, \bibinfo {author} {\bibfnamefont {V.}~\bibnamefont {Ramesh}}, \bibinfo {author} {\bibfnamefont {R.}~\bibnamefont {White}},\ and\ \bibinfo {author} {\bibfnamefont {H.}~\bibnamefont {Bernien}},\ }\bibfield  {title} {\bibinfo {title} {Mid-circuit correction of correlated phase errors using an array of spectator qubits},\ }\href {https://www.science.org/doi/abs/10.1126/science.ade5337} {\bibfield  {journal} {\bibinfo  {journal} {Science}\ }\textbf {\bibinfo {volume} {380}},\ \bibinfo {pages} {1265} (\bibinfo {year} {2023})}\BibitemShut {NoStop}%
\bibitem [{\citenamefont {Aleiner}\ \emph {et~al.}(2023)\citenamefont {Aleiner}, \citenamefont {Allen}, \citenamefont {Andersen}, \citenamefont {Ansmann}, \citenamefont {Arute}, \citenamefont {Arya}, \citenamefont {Asfaw}, \citenamefont {Atalaya}, \citenamefont {Babbush}, \citenamefont {Bacon} \emph {et~al.}}]{google2023suppressing}%
  \BibitemOpen
  \bibfield  {author} {\bibinfo {author} {\bibfnamefont {I.}~\bibnamefont {Aleiner}}, \bibinfo {author} {\bibfnamefont {R.}~\bibnamefont {Allen}}, \bibinfo {author} {\bibfnamefont {T.~I.}\ \bibnamefont {Andersen}}, \bibinfo {author} {\bibfnamefont {M.}~\bibnamefont {Ansmann}}, \bibinfo {author} {\bibfnamefont {F.}~\bibnamefont {Arute}}, \bibinfo {author} {\bibfnamefont {K.}~\bibnamefont {Arya}}, \bibinfo {author} {\bibfnamefont {A.}~\bibnamefont {Asfaw}}, \bibinfo {author} {\bibfnamefont {J.}~\bibnamefont {Atalaya}}, \bibinfo {author} {\bibfnamefont {R.}~\bibnamefont {Babbush}}, \bibinfo {author} {\bibfnamefont {D.}~\bibnamefont {Bacon}}, \emph {et~al.},\ }\bibfield  {title} {\bibinfo {title} {Suppressing quantum errors by scaling a surface code logical qubit},\ }\href {https://www.nature.com/articles/s41586-022-05434-1} {\bibfield  {journal} {\bibinfo  {journal} {Nature}\ }\textbf {\bibinfo {volume} {614}},\ \bibinfo {pages} {676} (\bibinfo {year} {2023})}\BibitemShut {NoStop}%
\bibitem [{\citenamefont {Bluvstein}\ \emph {et~al.}(2024)\citenamefont {Bluvstein}, \citenamefont {Evered}, \citenamefont {Geim}, \citenamefont {Li}, \citenamefont {Zhou}, \citenamefont {Manovitz}, \citenamefont {Ebadi}, \citenamefont {Cain}, \citenamefont {Kalinowski}, \citenamefont {Hangleiter} \emph {et~al.}}]{bluvstein2024logical}%
  \BibitemOpen
  \bibfield  {author} {\bibinfo {author} {\bibfnamefont {D.}~\bibnamefont {Bluvstein}}, \bibinfo {author} {\bibfnamefont {S.~J.}\ \bibnamefont {Evered}}, \bibinfo {author} {\bibfnamefont {A.~A.}\ \bibnamefont {Geim}}, \bibinfo {author} {\bibfnamefont {S.~H.}\ \bibnamefont {Li}}, \bibinfo {author} {\bibfnamefont {H.}~\bibnamefont {Zhou}}, \bibinfo {author} {\bibfnamefont {T.}~\bibnamefont {Manovitz}}, \bibinfo {author} {\bibfnamefont {S.}~\bibnamefont {Ebadi}}, \bibinfo {author} {\bibfnamefont {M.}~\bibnamefont {Cain}}, \bibinfo {author} {\bibfnamefont {M.}~\bibnamefont {Kalinowski}}, \bibinfo {author} {\bibfnamefont {D.}~\bibnamefont {Hangleiter}}, \emph {et~al.},\ }\bibfield  {title} {\bibinfo {title} {Logical quantum processor based on reconfigurable atom arrays},\ }\href {https://www.nature.com/articles/s41586-023-06927-3} {\bibfield  {journal} {\bibinfo  {journal} {Nature}\ }\textbf {\bibinfo {volume} {626}},\ \bibinfo {pages} {58} (\bibinfo {year} {2024})}\BibitemShut {NoStop}%
\bibitem [{\citenamefont {Wallman}\ and\ \citenamefont {Emerson}(2016)}]{wallman2016noise}%
  \BibitemOpen
  \bibfield  {author} {\bibinfo {author} {\bibfnamefont {J.~J.}\ \bibnamefont {Wallman}}\ and\ \bibinfo {author} {\bibfnamefont {J.}~\bibnamefont {Emerson}},\ }\bibfield  {title} {\bibinfo {title} {Noise tailoring for scalable quantum computation via randomized compiling},\ }\href {https://journals.aps.org/pra/abstract/10.1103/PhysRevA.94.052325} {\bibfield  {journal} {\bibinfo  {journal} {Physical Review A}\ }\textbf {\bibinfo {volume} {94}},\ \bibinfo {pages} {052325} (\bibinfo {year} {2016})}\BibitemShut {NoStop}%
\bibitem [{\citenamefont {Hashim}\ \emph {et~al.}(2020)\citenamefont {Hashim}, \citenamefont {Naik}, \citenamefont {Morvan}, \citenamefont {Ville}, \citenamefont {Mitchell}, \citenamefont {Kreikebaum}, \citenamefont {Davis}, \citenamefont {Smith}, \citenamefont {Iancu}, \citenamefont {O'Brien} \emph {et~al.}}]{hashim2020randomized}%
  \BibitemOpen
  \bibfield  {author} {\bibinfo {author} {\bibfnamefont {A.}~\bibnamefont {Hashim}}, \bibinfo {author} {\bibfnamefont {R.~K.}\ \bibnamefont {Naik}}, \bibinfo {author} {\bibfnamefont {A.}~\bibnamefont {Morvan}}, \bibinfo {author} {\bibfnamefont {J.-L.}\ \bibnamefont {Ville}}, \bibinfo {author} {\bibfnamefont {B.}~\bibnamefont {Mitchell}}, \bibinfo {author} {\bibfnamefont {J.~M.}\ \bibnamefont {Kreikebaum}}, \bibinfo {author} {\bibfnamefont {M.}~\bibnamefont {Davis}}, \bibinfo {author} {\bibfnamefont {E.}~\bibnamefont {Smith}}, \bibinfo {author} {\bibfnamefont {C.}~\bibnamefont {Iancu}}, \bibinfo {author} {\bibfnamefont {K.~P.}\ \bibnamefont {O'Brien}}, \emph {et~al.},\ }\bibfield  {title} {\bibinfo {title} {Randomized compiling for scalable quantum computing on a noisy superconducting quantum processor},\ }\href {https://arxiv.org/pdf/2010.00215} {\bibfield  {journal} {\bibinfo  {journal} {arXiv:2010.00215}\ } (\bibinfo {year} {2020})}\BibitemShut {NoStop}%
\bibitem [{\citenamefont {Flammia}\ and\ \citenamefont {Wallman}(2020)}]{flammia2020efficient}%
  \BibitemOpen
  \bibfield  {author} {\bibinfo {author} {\bibfnamefont {S.~T.}\ \bibnamefont {Flammia}}\ and\ \bibinfo {author} {\bibfnamefont {J.~J.}\ \bibnamefont {Wallman}},\ }\bibfield  {title} {\bibinfo {title} {Efficient estimation of pauli channels},\ }\href {https://dl.acm.org/doi/abs/10.1145/3408039} {\bibfield  {journal} {\bibinfo  {journal} {ACM Transactions on Quantum Computing}\ }\textbf {\bibinfo {volume} {1}},\ \bibinfo {pages} {1} (\bibinfo {year} {2020})}\BibitemShut {NoStop}%
\bibitem [{\citenamefont {Erhard}\ \emph {et~al.}(2019)\citenamefont {Erhard}, \citenamefont {Wallman}, \citenamefont {Postler}, \citenamefont {Meth}, \citenamefont {Stricker}, \citenamefont {Martinez}, \citenamefont {Schindler}, \citenamefont {Monz}, \citenamefont {Emerson},\ and\ \citenamefont {Blatt}}]{erhard2019characterizing}%
  \BibitemOpen
  \bibfield  {author} {\bibinfo {author} {\bibfnamefont {A.}~\bibnamefont {Erhard}}, \bibinfo {author} {\bibfnamefont {J.~J.}\ \bibnamefont {Wallman}}, \bibinfo {author} {\bibfnamefont {L.}~\bibnamefont {Postler}}, \bibinfo {author} {\bibfnamefont {M.}~\bibnamefont {Meth}}, \bibinfo {author} {\bibfnamefont {R.}~\bibnamefont {Stricker}}, \bibinfo {author} {\bibfnamefont {E.~A.}\ \bibnamefont {Martinez}}, \bibinfo {author} {\bibfnamefont {P.}~\bibnamefont {Schindler}}, \bibinfo {author} {\bibfnamefont {T.}~\bibnamefont {Monz}}, \bibinfo {author} {\bibfnamefont {J.}~\bibnamefont {Emerson}},\ and\ \bibinfo {author} {\bibfnamefont {R.}~\bibnamefont {Blatt}},\ }\bibfield  {title} {\bibinfo {title} {Characterizing large-scale quantum computers via cycle benchmarking},\ }\href {https://www.nature.com/articles/s41467-019-13068-7} {\bibfield  {journal} {\bibinfo  {journal} {Nature communications}\ }\textbf {\bibinfo {volume} {10}},\ \bibinfo {pages} {5347} (\bibinfo {year} {2019})}\BibitemShut {NoStop}%
\bibitem [{\citenamefont {Emerson}\ \emph {et~al.}(2007)\citenamefont {Emerson}, \citenamefont {Silva}, \citenamefont {Moussa}, \citenamefont {Ryan}, \citenamefont {Laforest}, \citenamefont {Baugh}, \citenamefont {Cory},\ and\ \citenamefont {Laflamme}}]{emerson2007symmetrized}%
  \BibitemOpen
  \bibfield  {author} {\bibinfo {author} {\bibfnamefont {J.}~\bibnamefont {Emerson}}, \bibinfo {author} {\bibfnamefont {M.}~\bibnamefont {Silva}}, \bibinfo {author} {\bibfnamefont {O.}~\bibnamefont {Moussa}}, \bibinfo {author} {\bibfnamefont {C.}~\bibnamefont {Ryan}}, \bibinfo {author} {\bibfnamefont {M.}~\bibnamefont {Laforest}}, \bibinfo {author} {\bibfnamefont {J.}~\bibnamefont {Baugh}}, \bibinfo {author} {\bibfnamefont {D.~G.}\ \bibnamefont {Cory}},\ and\ \bibinfo {author} {\bibfnamefont {R.}~\bibnamefont {Laflamme}},\ }\bibfield  {title} {\bibinfo {title} {Symmetrized characterization of noisy quantum processes},\ }\href {https://www.science.org/doi/10.1126/science.1145699} {\bibfield  {journal} {\bibinfo  {journal} {Science}\ }\textbf {\bibinfo {volume} {317}},\ \bibinfo {pages} {1893} (\bibinfo {year} {2007})}\BibitemShut {NoStop}%
\bibitem [{\citenamefont {Flammia}(2022)}]{flammia2022averaged}%
  \BibitemOpen
  \bibfield  {author} {\bibinfo {author} {\bibfnamefont {S.~T.}\ \bibnamefont {Flammia}},\ }\bibfield  {title} {\bibinfo {title} {{Averaged Circuit Eigenvalue Sampling}},\ }in\ \href {https://doi.org/10.4230/LIPIcs.TQC.2022.4} {\emph {\bibinfo {booktitle} {17th Conference on the Theory of Quantum Computation, Communication and Cryptography (TQC 2022)}}},\ \bibinfo {series} {Leibniz International Proceedings in Informatics (LIPIcs)}, Vol.\ \bibinfo {volume} {232},\ \bibinfo {editor} {edited by\ \bibinfo {editor} {\bibfnamefont {F.}~\bibnamefont {Le~Gall}}\ and\ \bibinfo {editor} {\bibfnamefont {T.}~\bibnamefont {Morimae}}}\ (\bibinfo  {publisher} {Schloss Dagstuhl -- Leibniz-Zentrum f{\"u}r Informatik},\ \bibinfo {address} {Dagstuhl, Germany},\ \bibinfo {year} {2022})\ pp.\ \bibinfo {pages} {4:1--4:10}\BibitemShut {NoStop}%
\bibitem [{\citenamefont {Van Den~Berg}\ \emph {et~al.}(2023)\citenamefont {Van Den~Berg}, \citenamefont {Minev}, \citenamefont {Kandala},\ and\ \citenamefont {Temme}}]{van2023probabilistic}%
  \BibitemOpen
  \bibfield  {author} {\bibinfo {author} {\bibfnamefont {E.}~\bibnamefont {Van Den~Berg}}, \bibinfo {author} {\bibfnamefont {Z.~K.}\ \bibnamefont {Minev}}, \bibinfo {author} {\bibfnamefont {A.}~\bibnamefont {Kandala}},\ and\ \bibinfo {author} {\bibfnamefont {K.}~\bibnamefont {Temme}},\ }\bibfield  {title} {\bibinfo {title} {Probabilistic error cancellation with sparse pauli--lindblad models on noisy quantum processors},\ }\href {https://www.nature.com/articles/s41567-023-02042-2} {\bibfield  {journal} {\bibinfo  {journal} {Nature Physics}\ ,\ \bibinfo {pages} {1}} (\bibinfo {year} {2023})}\BibitemShut {NoStop}%
\bibitem [{\citenamefont {Carignan-Dugas}\ \emph {et~al.}(2023)\citenamefont {Carignan-Dugas}, \citenamefont {Dahlen}, \citenamefont {Hincks}, \citenamefont {Ospadov}, \citenamefont {Beale}, \citenamefont {Ferracin}, \citenamefont {Skanes-Norman}, \citenamefont {Emerson},\ and\ \citenamefont {Wallman}}]{carignan2023error}%
  \BibitemOpen
  \bibfield  {author} {\bibinfo {author} {\bibfnamefont {A.}~\bibnamefont {Carignan-Dugas}}, \bibinfo {author} {\bibfnamefont {D.}~\bibnamefont {Dahlen}}, \bibinfo {author} {\bibfnamefont {I.}~\bibnamefont {Hincks}}, \bibinfo {author} {\bibfnamefont {E.}~\bibnamefont {Ospadov}}, \bibinfo {author} {\bibfnamefont {S.~J.}\ \bibnamefont {Beale}}, \bibinfo {author} {\bibfnamefont {S.}~\bibnamefont {Ferracin}}, \bibinfo {author} {\bibfnamefont {J.}~\bibnamefont {Skanes-Norman}}, \bibinfo {author} {\bibfnamefont {J.}~\bibnamefont {Emerson}},\ and\ \bibinfo {author} {\bibfnamefont {J.~J.}\ \bibnamefont {Wallman}},\ }\bibfield  {title} {\bibinfo {title} {The error reconstruction and compiled calibration of quantum computing cycles},\ }\href {https://arxiv.org/abs/2303.17714} {\bibfield  {journal} {\bibinfo  {journal} {arXiv:2303.17714}\ } (\bibinfo {year} {2023})}\BibitemShut {NoStop}%
\bibitem [{\citenamefont {Govia}\ \emph {et~al.}(2023)\citenamefont {Govia}, \citenamefont {Jurcevic}, \citenamefont {Wood}, \citenamefont {Kanazawa}, \citenamefont {Merkel},\ and\ \citenamefont {McKay}}]{govia2023randomized}%
  \BibitemOpen
  \bibfield  {author} {\bibinfo {author} {\bibfnamefont {L.}~\bibnamefont {Govia}}, \bibinfo {author} {\bibfnamefont {P.}~\bibnamefont {Jurcevic}}, \bibinfo {author} {\bibfnamefont {C.}~\bibnamefont {Wood}}, \bibinfo {author} {\bibfnamefont {N.}~\bibnamefont {Kanazawa}}, \bibinfo {author} {\bibfnamefont {S.}~\bibnamefont {Merkel}},\ and\ \bibinfo {author} {\bibfnamefont {D.}~\bibnamefont {McKay}},\ }\bibfield  {title} {\bibinfo {title} {A randomized benchmarking suite for mid-circuit measurements},\ }\href {https://iopscience.iop.org/article/10.1088/1367-2630/ad0e19/pdf} {\bibfield  {journal} {\bibinfo  {journal} {New Journal of Physics}\ }\textbf {\bibinfo {volume} {25}},\ \bibinfo {pages} {123016} (\bibinfo {year} {2023})}\BibitemShut {NoStop}%
\bibitem [{\citenamefont {Rudinger}\ \emph {et~al.}(2022)\citenamefont {Rudinger}, \citenamefont {Ribeill}, \citenamefont {Govia}, \citenamefont {Ware}, \citenamefont {Nielsen}, \citenamefont {Young}, \citenamefont {Ohki}, \citenamefont {Blume-Kohout},\ and\ \citenamefont {Proctor}}]{rudinger2022characterizing}%
  \BibitemOpen
  \bibfield  {author} {\bibinfo {author} {\bibfnamefont {K.}~\bibnamefont {Rudinger}}, \bibinfo {author} {\bibfnamefont {G.~J.}\ \bibnamefont {Ribeill}}, \bibinfo {author} {\bibfnamefont {L.~C.}\ \bibnamefont {Govia}}, \bibinfo {author} {\bibfnamefont {M.}~\bibnamefont {Ware}}, \bibinfo {author} {\bibfnamefont {E.}~\bibnamefont {Nielsen}}, \bibinfo {author} {\bibfnamefont {K.}~\bibnamefont {Young}}, \bibinfo {author} {\bibfnamefont {T.~A.}\ \bibnamefont {Ohki}}, \bibinfo {author} {\bibfnamefont {R.}~\bibnamefont {Blume-Kohout}},\ and\ \bibinfo {author} {\bibfnamefont {T.}~\bibnamefont {Proctor}},\ }\bibfield  {title} {\bibinfo {title} {Characterizing midcircuit measurements on a superconducting qubit using gate set tomography},\ }\href {https://journals.aps.org/prapplied/abstract/10.1103/PhysRevApplied.17.014014} {\bibfield  {journal} {\bibinfo  {journal} {Physical Review Applied}\ }\textbf {\bibinfo {volume} {17}},\ \bibinfo {pages} {014014} (\bibinfo {year} {2022})}\BibitemShut {NoStop}%
\bibitem [{\citenamefont {Chen}\ \emph {et~al.}(2023)\citenamefont {Chen}, \citenamefont {Liu}, \citenamefont {Otten}, \citenamefont {Seif}, \citenamefont {Fefferman},\ and\ \citenamefont {Jiang}}]{chen2023learnability}%
  \BibitemOpen
  \bibfield  {author} {\bibinfo {author} {\bibfnamefont {S.}~\bibnamefont {Chen}}, \bibinfo {author} {\bibfnamefont {Y.}~\bibnamefont {Liu}}, \bibinfo {author} {\bibfnamefont {M.}~\bibnamefont {Otten}}, \bibinfo {author} {\bibfnamefont {A.}~\bibnamefont {Seif}}, \bibinfo {author} {\bibfnamefont {B.}~\bibnamefont {Fefferman}},\ and\ \bibinfo {author} {\bibfnamefont {L.}~\bibnamefont {Jiang}},\ }\bibfield  {title} {\bibinfo {title} {The learnability of pauli noise},\ }\href {https://doi.org/10.1038/s41467-022-35759-4} {\bibfield  {journal} {\bibinfo  {journal} {Nature Communications}\ }\textbf {\bibinfo {volume} {14}},\ \bibinfo {pages} {52} (\bibinfo {year} {2023})}\BibitemShut {NoStop}%
\bibitem [{\citenamefont {Nielsen}\ \emph {et~al.}(2021)\citenamefont {Nielsen}, \citenamefont {Gamble}, \citenamefont {Rudinger}, \citenamefont {Scholten}, \citenamefont {Young},\ and\ \citenamefont {Blume-Kohout}}]{nielsen2021gate}%
  \BibitemOpen
  \bibfield  {author} {\bibinfo {author} {\bibfnamefont {E.}~\bibnamefont {Nielsen}}, \bibinfo {author} {\bibfnamefont {J.~K.}\ \bibnamefont {Gamble}}, \bibinfo {author} {\bibfnamefont {K.}~\bibnamefont {Rudinger}}, \bibinfo {author} {\bibfnamefont {T.}~\bibnamefont {Scholten}}, \bibinfo {author} {\bibfnamefont {K.}~\bibnamefont {Young}},\ and\ \bibinfo {author} {\bibfnamefont {R.}~\bibnamefont {Blume-Kohout}},\ }\bibfield  {title} {\bibinfo {title} {Gate set tomography},\ }\href {https://quantum-journal.org/papers/q-2021-10-05-557/} {\bibfield  {journal} {\bibinfo  {journal} {Quantum}\ }\textbf {\bibinfo {volume} {5}},\ \bibinfo {pages} {557} (\bibinfo {year} {2021})}\BibitemShut {NoStop}%
\bibitem [{\citenamefont {B{\"a}umer}\ \emph {et~al.}(2023)\citenamefont {B{\"a}umer}, \citenamefont {Tripathi}, \citenamefont {Wang}, \citenamefont {Rall}, \citenamefont {Chen}, \citenamefont {Majumder}, \citenamefont {Seif},\ and\ \citenamefont {Minev}}]{baumer2023efficient}%
  \BibitemOpen
  \bibfield  {author} {\bibinfo {author} {\bibfnamefont {E.}~\bibnamefont {B{\"a}umer}}, \bibinfo {author} {\bibfnamefont {V.}~\bibnamefont {Tripathi}}, \bibinfo {author} {\bibfnamefont {D.~S.}\ \bibnamefont {Wang}}, \bibinfo {author} {\bibfnamefont {P.}~\bibnamefont {Rall}}, \bibinfo {author} {\bibfnamefont {E.~H.}\ \bibnamefont {Chen}}, \bibinfo {author} {\bibfnamefont {S.}~\bibnamefont {Majumder}}, \bibinfo {author} {\bibfnamefont {A.}~\bibnamefont {Seif}},\ and\ \bibinfo {author} {\bibfnamefont {Z.~K.}\ \bibnamefont {Minev}},\ }\bibfield  {title} {\bibinfo {title} {Efficient long-range entanglement using dynamic circuits},\ }\href {https://arxiv.org/pdf/2308.13065} {\bibfield  {journal} {\bibinfo  {journal} {arXiv:2308.13065}\ } (\bibinfo {year} {2023})}\BibitemShut {NoStop}%
\bibitem [{\citenamefont {Hashim}\ \emph {et~al.}(2023)\citenamefont {Hashim}, \citenamefont {Carignan-Dugas}, \citenamefont {Chen}, \citenamefont {Juenger}, \citenamefont {Fruitwala}, \citenamefont {Xu}, \citenamefont {Huang}, \citenamefont {Wallman},\ and\ \citenamefont {Siddiqi}}]{hashim2023quasi}%
  \BibitemOpen
  \bibfield  {author} {\bibinfo {author} {\bibfnamefont {A.}~\bibnamefont {Hashim}}, \bibinfo {author} {\bibfnamefont {A.}~\bibnamefont {Carignan-Dugas}}, \bibinfo {author} {\bibfnamefont {L.}~\bibnamefont {Chen}}, \bibinfo {author} {\bibfnamefont {C.}~\bibnamefont {Juenger}}, \bibinfo {author} {\bibfnamefont {N.}~\bibnamefont {Fruitwala}}, \bibinfo {author} {\bibfnamefont {Y.}~\bibnamefont {Xu}}, \bibinfo {author} {\bibfnamefont {G.}~\bibnamefont {Huang}}, \bibinfo {author} {\bibfnamefont {J.}~\bibnamefont {Wallman}},\ and\ \bibinfo {author} {\bibfnamefont {I.}~\bibnamefont {Siddiqi}},\ }\bibfield  {title} {\bibinfo {title} {Quasi-probabilistic readout correction of mid-circuit measurements for adaptive feedback via measurement randomized compiling},\ }\href {https://arxiv.org/pdf/2312.14139v2} {\bibfield  {journal} {\bibinfo  {journal} {arXiv:2312.14139}\ } (\bibinfo {year} {2023})}\BibitemShut {NoStop}%
\bibitem [{\citenamefont {McEwen}\ \emph {et~al.}(2021)\citenamefont {McEwen}, \citenamefont {Kafri}, \citenamefont {Chen}, \citenamefont {Atalaya}, \citenamefont {Satzinger}, \citenamefont {Quintana}, \citenamefont {Klimov}, \citenamefont {Sank}, \citenamefont {Gidney}, \citenamefont {Fowler} \emph {et~al.}}]{mcewen2021removing}%
  \BibitemOpen
  \bibfield  {author} {\bibinfo {author} {\bibfnamefont {M.}~\bibnamefont {McEwen}}, \bibinfo {author} {\bibfnamefont {D.}~\bibnamefont {Kafri}}, \bibinfo {author} {\bibfnamefont {Z.}~\bibnamefont {Chen}}, \bibinfo {author} {\bibfnamefont {J.}~\bibnamefont {Atalaya}}, \bibinfo {author} {\bibfnamefont {K.}~\bibnamefont {Satzinger}}, \bibinfo {author} {\bibfnamefont {C.}~\bibnamefont {Quintana}}, \bibinfo {author} {\bibfnamefont {P.~V.}\ \bibnamefont {Klimov}}, \bibinfo {author} {\bibfnamefont {D.}~\bibnamefont {Sank}}, \bibinfo {author} {\bibfnamefont {C.}~\bibnamefont {Gidney}}, \bibinfo {author} {\bibfnamefont {A.}~\bibnamefont {Fowler}}, \emph {et~al.},\ }\bibfield  {title} {\bibinfo {title} {Removing leakage-induced correlated errors in superconducting quantum error correction},\ }\href {https://www.nature.com/articles/s41467-021-21982-y} {\bibfield  {journal} {\bibinfo  {journal} {Nature communications}\ }\textbf {\bibinfo {volume} {12}},\ \bibinfo {pages} {1761} (\bibinfo {year} {2021})}\BibitemShut
  {NoStop}%
\bibitem [{\citenamefont {Beale}\ and\ \citenamefont {Wallman}(2023)}]{beale2023randomized}%
  \BibitemOpen
  \bibfield  {author} {\bibinfo {author} {\bibfnamefont {S.~J.}\ \bibnamefont {Beale}}\ and\ \bibinfo {author} {\bibfnamefont {J.~J.}\ \bibnamefont {Wallman}},\ }\bibfield  {title} {\bibinfo {title} {Randomized compiling for subsystem measurements},\ }\href {https://arxiv.org/pdf/2304.06599} {\bibfield  {journal} {\bibinfo  {journal} {arXiv:2304.06599}\ } (\bibinfo {year} {2023})}\BibitemShut {NoStop}%
\bibitem [{\citenamefont {Greenbaum}(2015)}]{greenbaum2015introduction}%
  \BibitemOpen
  \bibfield  {author} {\bibinfo {author} {\bibfnamefont {D.}~\bibnamefont {Greenbaum}},\ }\bibfield  {title} {\bibinfo {title} {Introduction to quantum gate set tomography},\ }\href {https://arxiv.org/pdf/1509.02921} {\bibfield  {journal} {\bibinfo  {journal} {arXiv:1509.02921}\ } (\bibinfo {year} {2015})}\BibitemShut {NoStop}%
\bibitem [{\citenamefont {Wilde}(2011)}]{wilde2011classical}%
  \BibitemOpen
  \bibfield  {author} {\bibinfo {author} {\bibfnamefont {M.~M.}\ \bibnamefont {Wilde}},\ }\bibfield  {title} {\bibinfo {title} {From classical to quantum shannon theory},\ }\href {https://arxiv.org/abs/1106.1445} {\bibfield  {journal} {\bibinfo  {journal} {arXiv:1106.1445}\ } (\bibinfo {year} {2011})}\BibitemShut {NoStop}%
\bibitem [{\citenamefont {Harary}(2018)}]{harary2018graph}%
  \BibitemOpen
  \bibfield  {author} {\bibinfo {author} {\bibfnamefont {F.}~\bibnamefont {Harary}},\ }\href@noop {} {\emph {\bibinfo {title} {Graph Theory (on Demand Printing Of 02787)}}}\ (\bibinfo  {publisher} {CRC Press},\ \bibinfo {year} {2018})\BibitemShut {NoStop}%
\bibitem [{\citenamefont {Bollob{\'a}s}(1998)}]{bollobas1998modern}%
  \BibitemOpen
  \bibfield  {author} {\bibinfo {author} {\bibfnamefont {B.}~\bibnamefont {Bollob{\'a}s}},\ }\href@noop {} {\emph {\bibinfo {title} {Modern graph theory}}},\ Vol.\ \bibinfo {volume} {184}\ (\bibinfo  {publisher} {Springer Science \& Business Media},\ \bibinfo {year} {1998})\BibitemShut {NoStop}%
\bibitem [{\citenamefont {Wagner}\ \emph {et~al.}(2022)\citenamefont {Wagner}, \citenamefont {Kampermann}, \citenamefont {Bru{\ss}},\ and\ \citenamefont {Kliesch}}]{wagner2022pauli}%
  \BibitemOpen
  \bibfield  {author} {\bibinfo {author} {\bibfnamefont {T.}~\bibnamefont {Wagner}}, \bibinfo {author} {\bibfnamefont {H.}~\bibnamefont {Kampermann}}, \bibinfo {author} {\bibfnamefont {D.}~\bibnamefont {Bru{\ss}}},\ and\ \bibinfo {author} {\bibfnamefont {M.}~\bibnamefont {Kliesch}},\ }\bibfield  {title} {\bibinfo {title} {Pauli channels can be estimated from syndrome measurements in quantum error correction},\ }\href {https://quantum-journal.org/papers/q-2022-09-19-809/pdf/} {\bibfield  {journal} {\bibinfo  {journal} {Quantum}\ }\textbf {\bibinfo {volume} {6}},\ \bibinfo {pages} {809} (\bibinfo {year} {2022})}\BibitemShut {NoStop}%
\bibitem [{\citenamefont {Wagner}\ \emph {et~al.}(2023)\citenamefont {Wagner}, \citenamefont {Kampermann}, \citenamefont {Bru{\ss}},\ and\ \citenamefont {Kliesch}}]{wagner2023learning}%
  \BibitemOpen
  \bibfield  {author} {\bibinfo {author} {\bibfnamefont {T.}~\bibnamefont {Wagner}}, \bibinfo {author} {\bibfnamefont {H.}~\bibnamefont {Kampermann}}, \bibinfo {author} {\bibfnamefont {D.}~\bibnamefont {Bru{\ss}}},\ and\ \bibinfo {author} {\bibfnamefont {M.}~\bibnamefont {Kliesch}},\ }\bibfield  {title} {\bibinfo {title} {Learning logical pauli noise in quantum error correction},\ }\href {https://journals.aps.org/prl/abstract/10.1103/PhysRevLett.130.200601} {\bibfield  {journal} {\bibinfo  {journal} {Physical review letters}\ }\textbf {\bibinfo {volume} {130}},\ \bibinfo {pages} {200601} (\bibinfo {year} {2023})}\BibitemShut {NoStop}%
\bibitem [{\citenamefont {McLaren}\ \emph {et~al.}(2023)\citenamefont {McLaren}, \citenamefont {Graydon},\ and\ \citenamefont {Wallman}}]{mclaren2023stochastic}%
  \BibitemOpen
  \bibfield  {author} {\bibinfo {author} {\bibfnamefont {D.}~\bibnamefont {McLaren}}, \bibinfo {author} {\bibfnamefont {M.~A.}\ \bibnamefont {Graydon}},\ and\ \bibinfo {author} {\bibfnamefont {J.~J.}\ \bibnamefont {Wallman}},\ }\bibfield  {title} {\bibinfo {title} {Stochastic errors in quantum instruments},\ }\href {https://arxiv.org/pdf/2306.07418} {\bibfield  {journal} {\bibinfo  {journal} {arXiv:2306.07418}\ } (\bibinfo {year} {2023})}\BibitemShut {NoStop}%
\bibitem [{\citenamefont {Hines}\ and\ \citenamefont {Proctor}(2024)}]{hines2024pauli}%
  \BibitemOpen
  \bibfield  {author} {\bibinfo {author} {\bibfnamefont {J.}~\bibnamefont {Hines}}\ and\ \bibinfo {author} {\bibfnamefont {T.}~\bibnamefont {Proctor}},\ }\bibfield  {title} {\bibinfo {title} {Pauli noise learning for mid-circuit measurements},\ }\href {https://arxiv.org/pdf/2406.09299} {\bibfield  {journal} {\bibinfo  {journal} {arXiv:2406.09299}\ } (\bibinfo {year} {2024})}\BibitemShut {NoStop}%
\bibitem [{\citenamefont {Fulton}\ and\ \citenamefont {Harris}(2013)}]{fulton2013representation}%
  \BibitemOpen
  \bibfield  {author} {\bibinfo {author} {\bibfnamefont {W.}~\bibnamefont {Fulton}}\ and\ \bibinfo {author} {\bibfnamefont {J.}~\bibnamefont {Harris}},\ }\href@noop {} {\emph {\bibinfo {title} {Representation theory: a first course}}},\ Vol.\ \bibinfo {volume} {129}\ (\bibinfo  {publisher} {Springer Science \& Business Media},\ \bibinfo {year} {2013})\BibitemShut {NoStop}%
\end{thebibliography}%

\appendix
\section{Randomized compiling}
\subsection{Randomized compiling for MCMs}
\label{app:MCMrc}
In this section, we explain how randomized compiling techniques can reduce a general quantum instrument $\mathcal{M}=\{\mathcal{M}_k\}$ into an instrument $\mathcal{T}=\{\mathcal{T}_k\}$ that fits our model.
The scheme we adopt is a simple variation of the scheme proposed in~\cite{beale2023randomized}.
Fig.~\ref{fig:comp} illustrates our randomized compiling scheme.

\begin{figure}[h]
    \centering
    \begin{tabular}{c}
\Qcircuit @C=1em @R=.7em {
    & \push{\rule{3em}{0em}} & & /^{m} \qw & \gate{P_1} & \multimeasure{1}{\mathcal{M}} & \gate{P} & \qw & \qw & & & & /^{m} \qw & \multimeasure{1}{\mathcal{T}} & \qw\\
    & \mathbb{E}_{P,x,y,z} & & /^{n} \qw & \gate{P_2} & \ghost{\mathcal{M}} & \gate{Z^\gamma} & \gate{X^\alpha} & \qw & & = & & /^{n} \qw & \ghost{\mathcal{T}} & \qw\\
    & \push{\rule{3em}{0em}} & & & & \control \cwx &\cgate{add~\alpha} & \cw & \cw & \lket{k} & & & & \control \cwx & \cw & \lket{k}\\
    & & & & & k+\alpha\\
}
\end{tabular}
\caption{Illustration of the randomized compiling scheme.}
\label{fig:comp}
\end{figure}

Specifically, we choose $P\in\mathcal{P}^m$, $\alpha,\beta,\gamma\in\mathbb{Z}_2^n$ uniformly randomly and define operator $P_1\otimes P_2=\mathcal{G}^\dagger(P\otimes Z^\beta X^\alpha)$.
We apply gate $P_1\otimes P_2$, followed by the noisy MCM $\mathcal{M}$, and then gate $P\otimes X^\alpha Z^\gamma$.
The classical output from the MCM is added by $\alpha$ to form the measurement result of the compiled channel.
We make the assumption that all single qubit unitaries can be implemented noiselessly.
This is justified by the fact that noises of multi-qubit operations are usually much higher than that of single-qubit operations.
Then, after the compiling, the effective instrument becomes
\begin{align}
\mathcal{T}_k&=\operatornamewithlimits{\mathbb{E}}_{P,\alpha,\beta,\gamma}\left[(\mathcal{P}\otimes \mathcal{X}^\alpha\mathcal{Z}^\gamma)\mathcal{M}_{k-\alpha}(\mathcal{G}^\dagger(\mathcal{P}\otimes \mathcal{Z}^\beta\mathcal{X}^\alpha)\mathcal{G})\right]\\
&=\operatornamewithlimits{\mathbb{E}}_{P,\alpha,\beta,\gamma}\left[(\mathcal{P}\otimes \mathcal{X}^\alpha\mathcal{Z}^\gamma)\mathcal{M}_{k-\alpha}\mathcal{G}^\dagger(\mathcal{P}\otimes \mathcal{Z}^\beta\mathcal{X}^\alpha)\right]\mathcal{G}.
\end{align}
Here, gates and channels are used interchangeably.
To simplify, we need a lemma from Ref.~\cite{beale2023randomized}.
\begin{lemma}[Theorem 2, Ref.~\cite{beale2023randomized}]\label{lem:comp}
The following randomized compiling twirls a general quantum instrument $\mathcal{M}=\{\mathcal{M}_k\}$ to a uniform stochastic instrument $\{\mathcal{U}_k\}$:
\begin{equation}
\operatornamewithlimits{\mathbb{E}}_{P,\alpha,\beta,\gamma}\left[(\mathcal{P}\otimes \mathcal{X}^\alpha\mathcal{Z}^\gamma)\mathcal{M}_{k-\alpha}\mathcal{G}^\dagger(\mathcal{P}\otimes \mathcal{Z}^\beta\mathcal{X}^\alpha)\right]=\mathcal{U}_k.
\end{equation}
\end{lemma}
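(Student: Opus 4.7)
The strategy is to directly evaluate the averaged expression by expanding the unknown instrument element $\mathcal{M}_{k'}$ in a generic operator basis, computing how the Pauli twirls transform each basis element, and then performing the four averages in sequence to isolate exactly the components compatible with the USI form.

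First I set up the parametrization. Any linear map on $(n+m)$-qubit operators admits the expansion
\begin{equation*}
\mathcal{N}_{k'}(\rho) = \sum_{A,B\in\mathcal{P}^m;\,c,c',d,d'\in\mathbb{Z}_2^n} \mu_{k'}^{A,B,c,c',d,d'}\,\bigl(A\otimes\ketbra{c}{c'}\bigr)\rho\bigl(B^\dagger\otimes\ketbra{d'}{d}\bigr),
\end{equation*}
and since $\mathcal{G}^\dagger$ is a unitary channel (which merely relabels the basis), I absorb it into $\mathcal{M}$ at the outset by setting $\mathcal{N}_{k'} := \mathcal{M}_{k'}\mathcal{G}^\dagger$. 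The task reduces to showing that $\mathbb{E}_{P,\alpha,\beta,\gamma}\bigl[(\mathcal{P}\otimes\mathcal{X}^\alpha\mathcal{Z}^\gamma)\,\mathcal{N}_{k-\alpha}\,(\mathcal{P}\otimes\mathcal{Z}^\beta\mathcal{X}^\alpha)\bigr]$ takes the USI form given in \cref{eq:USIp}.

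Second, I compute how each basis element transforms under the Pauli sandwich. Using $PAP=(-1)^{\langle P,A\rangle}A$ on the $m$-qubit side together with $X^\alpha Z^\gamma\ketbra{c}{c'}Z^\beta X^\alpha=(-1)^{\gamma\cdot c+\beta\cdot c'}\ketbra{c+\alpha}{c'+\alpha}$ on the ancilla side, each basis element acquires the overall phase $(-1)^{\langle P,A\rangle+\langle P,B\rangle+\gamma\cdot(c+d)+\beta\cdot(c'+d')}$ and its ancilla indices are shifted by $\alpha$ on both sides. Executing the three Pauli-twirl orthogonality identities
\begin{equation*}
\mathbb{E}_P\bigl[(-1)^{\langle P,A\rangle+\langle P,B\rangle}\bigr]=\delta_{A,B},\quad \mathbb{E}_\gamma\bigl[(-1)^{\gamma\cdot(c+d)}\bigr]=\delta_{c,d},\quad \mathbb{E}_\beta\bigl[(-1)^{\beta\cdot(c'+d')}\bigr]=\delta_{c',d'}
\end{equation*}
simultaneously forces $A=B$, $c=d$, $c'=d'$, collapsing the six-index sum to a three-index sum over $(A,c,c')$. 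Averaging over $\alpha$ and reindexing via $k'=k-\alpha$, $b=c+\alpha$, $a=c'+\alpha$, the result exactly matches $\mathcal{U}_k=\sum_{A,a,b}p_{a,b}^A\,\mathcal{A}\otimes\lket{k+b}\lbra{k+a}$ with
\begin{equation*}
p_{a,b}^A = \frac{1}{2^n}\sum_{k'\in\mathbb{Z}_2^n}\mu_{k'}^{A,A,b+k',a+k',b+k',a+k'},
\end{equation*}
which is manifestly independent of $k$---the defining feature of a USI.

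The main technical obstacle is pure bookkeeping: one must track left-right sign conventions of Pauli conjugation consistently, and verify that the $\alpha$-shift induced by $X^\alpha$ on the ancilla registers precisely offsets the classical outcome shift $k\mapsto k-\alpha$ so that the $\alpha$-average leaves behind a coefficient that depends only on $a,b,A$ and not on $k$. Once this reindexing is done correctly, non-negativity and normalization of $\{p_{a,b}^A\}$ are inherited immediately from complete positivity and trace-preservation of $\{\mathcal{M}_k\}$, so no additional argument is needed to certify that the output is a valid USI.
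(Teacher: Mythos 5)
Your argument is essentially correct, but note that the paper itself does not prove this lemma at all: it is imported verbatim as Theorem~2 of Ref.~\cite{beale2023randomized}, accompanied only by the one-line intuition that averaging over $\alpha$ decouples the error distribution from the absolute outcome $k$. You have therefore supplied a self-contained derivation where the paper offers a citation, and your route --- expand $\mathcal{N}_{k'}=\mathcal{M}_{k'}\mathcal{G}^\dagger$ in the $\chi$-matrix form over the operator basis $\{A\otimes\ketbra{c}{c'}\}$, kill the off-diagonal terms with the three orthogonality identities for the $P$-, $\gamma$-, and $\beta$-averages, then absorb the $\alpha$-average into the outcome relabeling --- is the standard twirling computation and matches the spirit of the cited proof. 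Two small points to tighten. First, your change of variables is off by $k$: with $k'=k-\alpha$ one has $\alpha = k+k'$, so matching the left Kraus index $c+\alpha$ to $k+b$ forces $b=c+k'$ (not $b=c+\alpha$); your final formula $p_{a,b}^A=\frac{1}{2^n}\sum_{k'}\mu_{k'}^{A,A,b+k',a+k',b+k',a+k'}$ is the one produced by the correct substitution and is indeed manifestly $k$-independent, so this is only a slip in the prose, not in the result. Second, the claim that non-negativity of $p_{a,b}^A$ is ``inherited immediately'' from complete positivity deserves one more sentence: it holds because the surviving Kraus operators $A\otimes\ketbra{k+b}{k+a}$ are pairwise orthogonal in Hilbert--Schmidt inner product, so the Choi matrix of $\mathcal{U}_k$ is diagonal in that basis and CP forces each diagonal entry, hence each $p_{a,b}^A$, to be non-negative; normalization then follows from trace preservation of $\sum_k\mathcal{M}_k$ by summing over $k$.
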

Intuitively, the random $\alpha$ averages over different absolute measurement outcomes, making the $\Lambda_{a,b}$ in $\mathcal{U}_k$ (\cref{eq:USIc}) independent of $k$ and depends instead only on the relative differences $a$ and $b$.

From \cref{lem:comp}, we immediately see that $\mathcal{T}_k=\mathcal{U}_k\mathcal{G}$ and thus justifies our model.
We note that this compiling scheme is slightly different from Figure 5 of Ref.~\cite{beale2023randomized} as we are not allowed to insert twirling gates between the Clifford gate $G$ and the subsystem measurement since we assume that they are implemented as a whole.

\subsection{Randomized compiling for terminating measurements}\label{app:TMrc}
We model a terminating measurement as a (special) Pauli channel $\Lambda_M$, followed by an ideal terminating measurement.
The Pauli channel here has the form
\begin{equation}
\Lambda_M=\sum_{F\in\mathcal{P}^{n+m}}\lambda_M^{pt(F)}|F\rangle\!\rangle\langle\!\langle F|.
\end{equation}
One can immediately see that this Pauli channel is special in the sense that the Pauli fidelities are indexed by Pauli weight patterns instead of Pauli operators.
We note that this is for mathematical convenience, and our proof also work for general Pauli channels.

Again, our noise model can be realized via randomized compiling.
Assume the noisy terminating measurements implemented can be viewed as an arbitrary channel $\mathcal{E}^M$ followed by an ideal terminating measurement.
Then we can use the standard randomized compiling~\cite{wallman2016noise} techniques to twirl the noise into a special Pauli channel $\Lambda^M$.
More specifically, suppose we want to measure observable $E$, then we shall select a random $H\in\mathcal{C}^{\otimes n+m}$, apply gate $H$ and then measure in $\mathcal{H}(E)$ basis.
\begin{lemma}\label{lem:term}
If we use the above compiled terminating measurements to measure observable $E\in\mathcal{P}^{n+m}$ on $\rho$, then the expectation of the result is $\lambda_M^{pt(E)}\langle\!\langle E|\rho\rangle\!\rangle$.
Here $\lambda_M$ are actually Pauli fidelities of the terminating measurement noise.
It is indexed by Pauli weight patterns.
\end{lemma}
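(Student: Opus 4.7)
The plan is to rewrite the expected outcome of the protocol as a single matrix element $\langle\!\langle E|\Lambda_M|\rho\rangle\!\rangle$, where $\Lambda_M$ is the Clifford-twirled version of the raw measurement noise $\mathcal{E}^M$, and then to verify that $\Lambda_M$ is Pauli-diagonal with eigenvalues depending only on Pauli weight patterns.

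First, I would handle the conditional expectation for a fixed rotation $H\in\mathcal{C}^{\otimes n+m}$. Since Clifford channels satisfy $\mathcal{H}^\dagger=\mathcal{H}^{-1}$, one has $\langle\!\langle\mathcal{H}(E)|\,\mathcal{H}=\langle\!\langle E|$, so the expected value of the noisy estimator for the observable $\mathcal{H}(E)$ applied after the rotation equals $\langle\!\langle\mathcal{H}(E)|\mathcal{E}^M\mathcal{H}|\rho\rangle\!\rangle = \langle\!\langle E|\mathcal{H}^{-1}\mathcal{E}^M\mathcal{H}|\rho\rangle\!\rangle$. Averaging over the uniform random $H$ then yields $\langle\!\langle E|\Lambda_M|\rho\rangle\!\rangle$ with $\Lambda_M:=\mathbb{E}_H[\mathcal{H}^{-1}\mathcal{E}^M\mathcal{H}]$, so it suffices to show $\Lambda_M|E\rangle\!\rangle=\lambda_M^{pt(E)}|E\rangle\!\rangle$.

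Next, I would establish the structural properties of $\Lambda_M$ in two substeps. (a) The group $\mathcal{C}^{\otimes n+m}$ contains the full $(n+m)$-qubit Pauli group as a subgroup (up to global phase), so the Clifford average subsumes a standard Pauli twirl; this is well known to annihilate every off-diagonal Pauli-transfer-matrix entry, giving $\Lambda_M|F\rangle\!\rangle=\lambda^F|F\rangle\!\rangle$ for every Pauli $F$. (b) The single-qubit Clifford group, acting by conjugation, fixes $\pm I$ and acts transitively on $\{\pm X,\pm Y,\pm Z\}$, with each of the six targets appearing equally often. Because $H=h_1\otimes\cdots\otimes h_{n+m}$ tensors across qubits, $\mathcal{H}(E)$ has the form $\epsilon_H F_H$ with $\epsilon_H\in\{\pm1\}$ and $F_H$ a Pauli whose pattern matches $pt(E)$, uniformly distributed over such $F_H$ with an independent random sign on each non-identity qubit. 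Since the sign cancels in the diagonal quadratic form $\langle\!\langle\epsilon_H F_H|\mathcal{E}^M|\epsilon_H F_H\rangle\!\rangle$, the averaged diagonal entry depends on $E$ only through $pt(E)$; denote it $\lambda_M^{pt(E)}$. Substituting back into Step~1 delivers the claimed expectation $\lambda_M^{pt(E)}\langle\!\langle E|\rho\rangle\!\rangle$.

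The main obstacle I anticipate is the careful bookkeeping of Pauli phases in the transitivity step: writing $\mathcal{H}(F)=F'$ conflates the true $\pm$ sign and can lead to spurious cancellations. The cleanest workaround is to carry out the entire argument in the Pauli-transfer-matrix formalism, where phases appear only as explicit $\pm 1$ scalars that can be tracked and shown to cancel in pairs, and to invoke the orbit structure of single-qubit Cliffords on Paulis symbolically rather than enumerating group elements one by one.
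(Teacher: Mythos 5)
Your proposal is correct and follows essentially the same route as the paper: both rewrite the expectation as $\langle\!\langle E|\Lambda_M|\rho\rangle\!\rangle$ with $\Lambda_M=\mathbb{E}_H[\mathcal{H}^{-1}\mathcal{E}^M\mathcal{H}]$ and then use the representation-theoretic structure of the local Clifford twirl. The paper simply cites ``Schur's lemma as in standard randomized compiling'' for the final step, whereas you unpack that citation into its standard two-part argument (diagonalization via the Pauli subgroup, then pattern-dependence of the diagonal entries via transitivity of single-qubit Cliffords on the non-identity Paulis with sign cancellation in the quadratic form); the content is the same.
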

\begin{proof}
The expectation value is
\begin{align}
\operatornamewithlimits{\mathbb{E}}_{H\in\mathcal{C}^{\otimes n+m}}\left[\Tr(\mathcal{H}(E)\cdot\mathcal{E}\circ \mathcal{H}(\rho))\right]=tr\left(E\operatornamewithlimits{\mathbb{E}}_{H\in\mathcal{C}^{\otimes n+m}}\left[\mathcal{H}^\dagger\circ\mathcal{E}\circ \mathcal{H}(\rho)\right]\right)=\lambda_M^{pt(E)}\langle\!\langle E|\rho\rangle\!\rangle.
\end{align}
The second equality is the direct consequence of Schur's lemma~\cite{fulton2013representation} as in standard randomized compiling.
\end{proof}

\section{Structure of the measure-and-prepare instruments}\label{sec:refresh}
In this section, we formalize the definition of measure-and-prepare instrument and express it in the dual space.
The process is shown in \cref{fig:MaPIllu} and follows the description in \cref{sec:USIRIntro}.

We start with a uniform stochastic instrument (we use the dual space form \cref{eq:fidelityform} here).
Without the Clifford gate $G$, a uniform stochastic instrument is
\begin{equation}
\mathcal{U}_k=\frac{1}{2^{2n+m}}\sum_{x,y\in\mathbb{Z}_2^n,Q\in\mathcal{P}^m}(-1)^{k\cdot(x+y)}\lambda_{x,y}^Q|Q\otimes Z^y\rangle\!\rangle\langle\!\langle Q\otimes Z^x|.
\end{equation}
Suppose we obtain the measurement outcome $k$ which corresponds to $\mc U_k$ being applied. A measure-and-prepare instrument corresponds to tracing out the ancilla register, followed by a preparation of fresh ancilla with $X$ correction (\cref{fig:MaPIllu}). Our goal is to derive the dual-space form of this process.

First, tracing out the ancilla register in $\mc U_k$, we get
\begin{equation}
\frac{1}{2^{n+m}}\sum_{x\in\mathbb{Z}_2^n,Q\in\mathcal{P}^m}(-1)^{k\cdot x}\lambda_{x,0}^Q|Q\rangle\!\rangle\langle\!\langle Q\otimes Z^x|=\frac{1}{2^{n+m}}\sum_{x\in\mathbb{Z}_2^n,Q\in\mathcal{P}^m}(-1)^{k\cdot x}\zeta_x^Q|Q\rangle\!\rangle\langle\!\langle Q\otimes Z^x|.
\end{equation}
Here we have defined $\zeta_x^Q=\lambda_{x,0}^Q$. Next, we model the state-preparation procedure as preparing a perfect initial state $\lket{0^n}$ (the computational basis state where all qubits are zero) followed by a Pauli noise channel, so the next step is to put the refreshed ancillas in, 
\begin{equation}
\frac{1}{2^{2n+m}}\sum_{x,y\in\mathbb{Z}_2^n,Q\in\mathcal{P}^m}(-1)^{k\cdot x}\zeta_x^Q|Q\otimes Z^y\rangle\!\rangle\langle\!\langle Q\otimes Z^x|.
\end{equation}
For the Pauli noise channel, denote its Pauli fidelity for $Q\otimes Z^y$ by $\xi_y^Q$.
Applying this noise, we get
\begin{equation}
\frac{1}{2^{2n+m}}\sum_{x,y\in\mathbb{Z}_2^n,Q\in\mathcal{P}^m}(-1)^{k\cdot x}\zeta_x^Q\xi_y^Q|Q\otimes Z^y\rangle\!\rangle\langle\!\langle Q\otimes Z^x|.
\end{equation}
Apply the NOT gates $X^k$, we get the effective channel for measurement result $k$
\begin{equation}\label{eq:USIRd}
\mathcal{U}'_k=\frac{1}{2^{2n+m}}\sum_{x,y\in\mathbb{Z}_2^n,Q\in\mathcal{P}^m}(-1)^{k\cdot (x+y)}\zeta_x^Q\xi_y^Q|Q\otimes Z^y\rangle\!\rangle\langle\!\langle Q\otimes Z^x|.
\end{equation}
$\{\mathcal{U}'_k\}$ is the measure-and-prepare instrument expressed in dual space.
Compared with \cref{eq:USId}, one can see that $\{\mathcal{U}'_k\}$ is also a uniform stochastic instrument, with the additional structure
\begin{equation}\label{eq:USIR_fid}
\lambda'^Q_{x,y}=\zeta_x^Q\xi_y^Q,
\end{equation}
i.e., the Pauli fidelity is factorized into a product.

In terms of error rates, define $q$ and $r$ as the inverse Fourier transformation of $\zeta$ and $\xi$ over $\mathbb{Z}_2^n$ respectively,
\begin{equation}\label{eq:USIRd2p}
q_a^P=\frac{1}{2^{n+2m}}\sum_{x\in\mathbb{Z}_2^n,Q\in\mathcal{P}^m}(-1)^{a\cdot x+\langle P,Q\rangle}\zeta_x^Q,\quad r_b^P=\frac{1}{2^{n+2m}}\sum_{x\in\mathbb{Z}_2^n,Q\in\mathcal{P}^m}(-1)^{b\cdot x+\langle P,Q\rangle}\xi^Q_x.
\end{equation}
$q$ and $r$ corresponds to the Pauli error rates for the measurement and prepare process, respectively.
The inverse transformations are
\begin{equation}\label{eq:USIRp2d}
\zeta_x^Q=\sum_{a\in\mathbb{Z}_2^n,P\in\mathcal{P}^m}(-1)^{a\cdot x+\langle P,Q\rangle}q_a^P,\quad \xi^Q_x=\sum_{a\in\mathbb{Z}_2^n,P\in\mathcal{P}^m}(-1)^{b\cdot x+\langle P,Q\rangle}r_b^P.
\end{equation}
And we have
\begin{align}
p'^P_{a,b}&=\frac{1}{4^{n+m}}\sum_{x,y\in\mathbb{Z}_2^n,Q\in\mathcal{P}^m}(-1)^{a\cdot x+b\cdot y+\langle P,Q\rangle}\lambda'^Q_{x,y}\\
&=\frac{1}{4^{n+m}}\sum_{x,y\in\mathbb{Z}_2^n,Q\in\mathcal{P}^m}(-1)^{a\cdot x+b\cdot y+\langle P,Q\rangle}\zeta_x^Q\xi_y^Q\\
&=\frac{1}{4^{n+2m}}\sum_{x,y\in\mathbb{Z}_2^n,Q_1,Q_2,P\in\mathcal{P}^m}(-1)^{a\cdot x+b\cdot y+\langle P,Q_1Q_2\rangle+\langle P,Q_2\rangle}\zeta_x^{Q_1}\xi_y^{Q_2}\label{eq:refreshdelta}\\
&=\frac{1}{4^{n+2m}}\sum_{x,y\in\mathbb{Z}_2^n,Q_1,Q_2,P\in\mathcal{P}^m}(-1)^{a\cdot x+b\cdot y+\langle P,Q_1\rangle+\langle PP_1,Q_2\rangle}\zeta_x^{Q_1}\xi_y^{Q_2}\\
&=\sum_{P_1\cdot P_2=P}\left(\frac{1}{2^{n+2m}}\sum_{x\in\mathbb{Z}_2^n,Q\in\mathcal{P}^m}(-1)^{a\cdot x+\langle P,Q\rangle}\zeta_x^Q\right)\left(\frac{1}{2^{n+2m}}\sum_{x\in\mathbb{Z}_2^n,Q\in\mathcal{P}^m}(-1)^{b\cdot x+\langle P,Q\rangle}\xi^Q\right)\\
&=\sum_{P_1\cdot P_2=P}q_a^{P_1}r_b^{P_2}.\label{eq:USIR_erate}
\end{align}
\cref{eq:refreshdelta} uses the fact that $\sum_{P\in\mathcal{P}^m}(-1)^{\langle P,Q_1Q_2\rangle}$ is non-zero only when $Q_1=Q_2$ and takes value $4^m$ in this case.
By plugging in this equation to \cref{eq:USIp} we get the physical space expression of the measure-and-prepare instrument.
\begin{equation}\label{eq:USIRp}
\mathcal{U}_k'=\sum_{a,b\in\mathbb{Z}_2^n,P\in\mathcal{P}^m,P_1\cdot P_2=P}q_a^{P_1}r_b^{P_2}(\mathcal{P}\otimes\lket{k+b}\lbra{k+a}).
\end{equation}

\section{Proofs from~\cref{sec:pLearn}}\label{sec:defer}
\noindent\paragraph*{Proof of~\cref{prop:1}}
It suffice to show the corresponding $1$-chain is in $Z$.
\begin{align}
&\partial\frac{1}{4^{n+m}}\sum_{x,y\in\mathbb{Z}_2^n,Q\in\mathcal{P}^m}(-1)^{a\cdot x+b\cdot y+\langle P,Q\rangle}e^Q_{x,y}\\
=&\frac{1}{4^{n+m}}\sum_{x,y\in\mathbb{Z}_2^n,Q\in\mathcal{P}^m}(-1)^{a\cdot x+b\cdot y+\langle P,Q\rangle}(-pt(\mathcal{G}^\dagger(Q\otimes Z^x))+pt(Q\otimes Z^y))\label{eq:claim1boundary}\\
=&-\frac{1}{4^{n+m}}\sum_{x\in\mathbb{Z}_2^n,Q\in\mathcal{P}^m}(-1)^{a\cdot x+\langle P,Q\rangle}pt(\mathcal{G}^\dagger(Q\otimes Z^x))\sum_{y\in\mathbb{Z}_2^n}(-1)^{b\cdot y}\nonumber\\
&+\frac{1}{4^{n+m}}\sum_{y\in\mathbb{Z}_2^n,Q\in\mathcal{P}^m}(-1)^{b\cdot y+\langle P,Q\rangle}pt(Q\otimes Z^y)\sum_{x\in\mathbb{Z}_2^n}(-1)^{a\cdot x}=0.\label{eq:claim1sum0}
\end{align}
Note that the addition of Pauli weight patterns are performed in the sense of $0$-chains.
\cref{eq:claim1boundary} uses the definition of boundary operator and \cref{eq:claim1sum0} uses the fact that when $a\neq0$ and $b\neq0$, $\sum_{y\in\mathbb{Z}_2^n}(-1)^{b\cdot y}=\sum_{x\in\mathbb{Z}_2^n}(-1)^{a\cdot x}=0$.

\noindent\paragraph*{Proof of~\cref{prop:2}}
\begin{align}
&\partial\frac{1}{4^{n+m}}\sum_{x,y,k\in\mathbb{Z}_2^n,Q\in\mathcal{P}^m}(-1)^{\langle S^kP,Q\rangle}e^Q_{x,y}\\
=&\frac{1}{4^{n+m}}\sum_{x,y\in\mathbb{Z}_2^n,Q\in\mathcal{P}^m}(-1)^{\langle P,Q\rangle}\partial e^Q_{x,y}\sum_{k\in\mathbb{Z}_2^n}(-1)^{\langle S^k,Q\rangle}\\
=&\frac{2^n}{4^{n+m}}\sum_{x,y\in\mathbb{Z}_2^n,Q\in C_{\mathcal{P}^m}(\mathcal{S})}(-1)^{\langle P,Q\rangle}(-pt(\mc G^\dagger (Q\otimes Z^x))+pt(Q\otimes Z^y))\label{eq:claim2centralizersum}\\
=&\frac{2^n}{4^{n+m}}\sum_{x,y\in\mathbb{Z}_2^n,Q\in C_{\mathcal{P}^m}(\mathcal{S})}(-1)^{\langle P,Q\rangle}(-pt(S^xQ\otimes Z^x)+pt(Q\otimes Z^y))\\
=&-\frac{1}{4^m}\sum_{x\in\mathbb{Z}_2^n,Q\in C_{\mathcal{P}^m}(\mathcal{S})}(-1)^{\langle P,Q\rangle}pt(S^xQ\otimes Z^x)+\frac{1}{4^m}\sum_{y\in\mathbb{Z}_2^n,Q\in C_{\mathcal{P}^m}(\mathcal{S})}(-1)^{\langle P,Q\rangle}pt(Q\otimes Z^y)\\
=&-\frac{1}{4^m}\sum_{x\in\mathbb{Z}_2^n,Q\in C_{\mathcal{P}^m}(\mathcal{S})}(-1)^{\langle P,S^xQ\rangle}pt(Q\otimes Z^x)+\frac{1}{4^m}\sum_{x\in\mathbb{Z}_2^n,Q\in C_{\mathcal{P}^m}(\mathcal{S})}(-1)^{\langle P,Q\rangle}pt(Q\otimes Z^x)=0.\label{eq:claim2sum0}
\end{align}
\cref{eq:claim2centralizersum} uses the fact that $\sum_{k\in\mathbb{Z}_2^n}(-1)^{\langle S^k,Q\rangle}$ is non-zero and takes value $2^n$ only when $Q\in C_{\mathcal{P}^m}(\mathcal{S})$ and \cref{eq:claim2sum0} makes a substitution $Q\mapsto S^xQ$ for the first term and also uses the fact $P\in C_{\mathcal{P}^m}(\mathcal{S})$.

\noindent\paragraph*{Proof of~\cref{prop:3}}
\begin{align}
&\partial\frac{1}{4^{n+m}}\sum_{Q\in\mathcal{P}^m,x,y,k\in\mathbb{Z}_2^n}(-1)^{\langle S^kP,Q\rangle}\left[(-1)^{a\cdot x}+(-1)^{a\cdot y}\right]e^Q_{x,y}\\
=&\frac{1}{4^{n+m}}\sum_{Q\in\mathcal{P}^m,x,y\in\mathbb{Z}_2^n}(-1)^{\langle P,Q\rangle}\left[(-1)^{a\cdot x}+(-1)^{a\cdot y}\right]\partial e^Q_{x,y}\sum_{k\in\mathbb{Z}_2^n}(-1)^{\langle S^k,Q\rangle}\\
=&\frac{2^n}{4^{n+m}}\sum_{Q\in C_{\mathcal{P}^m}(\mathcal{S}),x,y\in\mathbb{Z}_2^n}(-1)^{\langle P,Q\rangle}\left[(-1)^{a\cdot x}+(-1)^{a\cdot y}\right]\partial e^Q_{x,y}\label{eq:claim3centralizersum}\\
=&-\frac{2^n}{4^{n+m}}\sum_{Q\in C_{\mathcal{P}^m}(\mathcal{S}),x,y\in\mathbb{Z}_2^n}(-1)^{\langle P,Q\rangle}\left[(-1)^{a\cdot x}+(-1)^{a\cdot y}\right]pt(S^xQ\otimes Z^x)\nonumber\\
&+\frac{2^n}{4^{n+m}}\sum_{Q\in C_{\mathcal{P}^m}(\mathcal{S}),x,y\in\mathbb{Z}_2^n}(-1)^{\langle P,Q\rangle}\left[(-1)^{a\cdot x}+(-1)^{a\cdot y}\right]pt(Q\otimes Z^y)\\
=&-\frac{1}{4^m}\sum_{Q\in C_{\mathcal{P}^m}(\mathcal{S}),x\in\mathbb{Z}_2^n}(-1)^{a\cdot x+\langle P,Q\rangle}pt(S^xQ\otimes Z^x)+\frac{1}{4^m}\sum_{Q\in C_{\mathcal{P}^m}(\mathcal{S}),y\in\mathbb{Z}_2^n}(-1)^{a\cdot y+\langle P,Q\rangle}pt(Q\otimes Z^y)\\
=&-\frac{1}{4^m}\sum_{Q\in C_{\mathcal{P}^m}(\mathcal{S}),x\in\mathbb{Z}_2^n}(-1)^{a\cdot x+\langle P,S^xQ\rangle}pt(Q\otimes Z^x)+\frac{1}{4^m}\sum_{Q\in C_{\mathcal{P}^m}(\mathcal{S}),x\in\mathbb{Z}_2^n}(-1)^{a\cdot x+\langle P,Q\rangle}pt(Q\otimes Z^x)=0.\label{eq:claim3sum0}
\end{align}
\cref{eq:claim3centralizersum} uses the trick similar to \cref{eq:claim2centralizersum} and \cref{eq:claim3sum0} uses trick similar to \cref{eq:claim2sum0}.

\section{Invariance of cycle space under Walsh-Hadamard transformation}\label{sec:invar}
In \cite{chen2023learnability}, the authors made a conjecture about the learnability of Pauli error rates for noisy Clifford gates.
They conjectured that in this case, the cycle space is invariant under Walsh-Hadamard transformation.
That is, for a learnable cycle, if we substitute the log fidelities by the corresponding Pauli error rates, the resulting quantity about Pauli error rates is also learnable (under first order approximation).
Though noisy Clifford gates is not the focus of this paper and from \cref{sec:pLearn} obviously this special property no longer holds in our generalized case, to complete the physical vs dual space picture, in this section we give a proof for this conjecture.

We note by the way that though in a different language, Lemma 3 in Ref.~\cite{carignan2023error} effectively gives a constructive proof for the theorem for the special case where no interleaving single qubit Clifford gates are allowed.
But in this section, we follow the non-constructive proof fashion in \cref{sec:pLearn} to prove the general theorem.

Once again, we emphasize that in this section we follow our convention and model the noise as happening after the Clifford gate, so our definition of pattern transfer graph is different from that of Ref.~\cite{chen2023learnability}.
Ref.~\cite{chen2023learnability} has proved that for Pauli fidelities, the cycle space is learnable while the cut space is unlearnable, and their results are still valid using our convention.

\begin{theorem}\label{thm:invar}
For a cycle $v_0,e^{Q_1},v_1,\ldots,e^{Q_l},v_0$, $\sum_{k=1}^lp^{Q_k}$ is learnable.
\end{theorem}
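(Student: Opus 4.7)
The plan is to translate the statement into the language of 1-chains on the pattern transfer graph (noise-after-Clifford convention, $n=0$) and then invoke \cref{cor:cycl}. Starting from the inverse Fourier formula $p^P=\frac{1}{4^m}\sum_Q(-1)^{\langle P,Q\rangle}\lambda^Q$ and the low-noise approximation $\lambda^Q\approx 1+\log\lambda^Q$ used throughout \cref{sec:pLearn} (cf.~\cref{eq:fidelityapprox}), I would rewrite
\begin{equation*}
p^P\approx \delta_{P,I}+\frac{1}{4^m}\sum_{Q\in\mathcal{P}^m}(-1)^{\langle P,Q\rangle}\log\lambda^Q,
\end{equation*}
so that
\begin{equation*}
\sum_{k=1}^l p^{Q_k}\approx \#\{k:Q_k=I\}+\mu(\mathcal{N}),\qquad \mu:=\frac{1}{4^m}\sum_{Q\in\mathcal{P}^m}\left(\sum_{k=1}^l(-1)^{\langle Q_k,Q\rangle}\right)e^Q.
\end{equation*}
Since $\#\{k:Q_k=I\}$ is noise-independent, it suffices to prove $\mu\in Z$, i.e.,~$\partial\mu=0$, after which \cref{cor:cycl} delivers learnability.

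To compute the boundary I would use $\partial e^Q=pt(Q)-pt(\mathcal{G}^\dagger(Q))$, perform the substitution $Q\mapsto\mathcal{G}(Q)$ in one of the two resulting Pauli sums, and apply the Clifford invariance $\langle Q_k,\mathcal{G}(Q)\rangle=\langle\mathcal{G}^\dagger(Q_k),Q\rangle$. Grouping by Pauli pattern $v\in\mathbb{Z}_2^m$ then yields
\begin{equation*}
[\partial\mu]_v=\frac{1}{4^m}\sum_{k=1}^l\left[\sum_{Q:\,pt(Q)=v}(-1)^{\langle Q_k,Q\rangle}-\sum_{Q:\,pt(Q)=v}(-1)^{\langle\mathcal{G}^\dagger(Q_k),Q\rangle}\right].
\end{equation*}
The central combinatorial lemma to establish is that the partial Fourier sum
\begin{equation*}
h(v,pt(P)):=\sum_{Q:\,pt(Q)=v}(-1)^{\langle P,Q\rangle}=\prod_{i:\,v_i=1}\begin{cases}3,&\text{if }pt(P)_i=0,\\-1,&\text{if }pt(P)_i=1,\end{cases}
\end{equation*}
depends on $P$ only through its pattern; this factorizes qubit-wise using that $\sum_{q\in\{X,Y,Z\}}(-1)^{\langle P_i,q\rangle}$ equals $3$ when $P_i=I$ and $-1$ otherwise. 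Substituting $pt(Q_k)=v_k$ and $pt(\mathcal{G}^\dagger(Q_k))=v_{k-1}$ then turns $[\partial\mu]_v$ into a telescoping sum $\tfrac{1}{4^m}[h(v,v_l)-h(v,v_0)]$, which vanishes because $v_l=v_0$ by the cycle hypothesis. Hence $\mu\in Z$, and the theorem follows.

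The main obstacle I foresee is the pattern-only dependence of $h(v,pt(P))$; once that identity is pinned down, the telescoping that consumes the hypothesis $v_l=v_0$ is essentially automatic and the rest is bookkeeping. A subtler point to keep in mind is that the first-order low-noise expansion is used only to linearize $\lambda^Q$ into $\log\lambda^Q$; this matches the definition of learnability of Pauli error rates adopted in \cref{sec:pLearn}, so $\mu\in Z$ indeed translates back into learnability of $\sum_{k=1}^l p^{Q_k}$.
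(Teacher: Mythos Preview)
Your proof is correct. Both you and the paper reduce the claim to showing that the 1-chain $\mu=\frac{1}{4^m}\sum_{R}\bigl(\sum_k(-1)^{\langle Q_k,R\rangle}\bigr)e^R$ lies in the cycle space, and both exploit that the relevant Fourier sums depend on the Pauli $P$ only through $pt(P)$. The difference is in how that pattern-only dependence is used: you isolate it as an explicit combinatorial identity $h(v,pt(P))=\prod_{i:v_i=1}(3\cdot\mathbf{1}_{pt(P)_i=0}-\mathbf{1}_{pt(P)_i=1})$ and then telescope directly using $pt(Q_k)=v_k$, $pt(\mathcal G^\dagger(Q_k))=v_{k-1}$; the paper instead introduces auxiliary single-qubit Cliffords $H_{k-1}$ satisfying $\mathcal H_{k-1}(Q_{k-1})=\mathcal G^\dagger(Q_k)$ (which exist precisely because $pt(Q_{k-1})=v_{k-1}=pt(\mathcal G^\dagger(Q_k))$), and then performs two changes of variable $R\mapsto GRG^\dagger$ and $R\mapsto H_{k-1}RH_{k-1}^\dagger$, the second of which preserves $pt(R)$ and shifts the index $k\mapsto k-1$ in the exponent. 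Your route is slightly more elementary in that it avoids the auxiliary $H_k$'s and makes the telescoping manifest; the paper's route is a bit slicker but encodes the same pattern-invariance implicitly through the single-qubit Clifford substitution.
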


\begin{proof}
Since we are considering a cycle, view the subscript $0$ equivalent as the subscript $l$.
By definition of pattern transfer graph, $\exists H_1,\ldots,H_l\in\mathcal{C}^{\otimes m}$ ($\mathcal{C}$ is the group of single qubit Clifford gates) s.t. $H_{k-1}Q_{k-1}H_{k-1}^\dagger=G^\dagger Q_kG$.
Then it suffices to prove the learnability of the corresponding $1$-chain.

\begin{align}
&\partial\frac{1}{4^m}\sum_{k=1}^l\sum_{R\in\mathcal{P}^m}(-1)^{\langle Q_k,R\rangle}e^R\\
=&\frac{1}{4^m}\sum_{R\in\mathcal{P}^m}\partial e^R\sum_{k=1}^l(-1)^{\langle Q_k,R\rangle}\\
=&\frac{1}{4^m}\sum_{R\in\mathcal{P}^m}pt(R)\sum_{k=1}^l(-1)^{\langle Q_k,R\rangle}-\frac{1}{4^m}\sum_{R\in\mathcal{P}^m}pt(\mathcal{G}^\dagger(R))\sum_{k=1}^l(-1)^{\langle Q_k,R\rangle}\label{eq:SuppPartial}\\
=&\frac{1}{4^m}\sum_{R\in\mathcal{P}^m}pt(R)\sum_{k=1}^l(-1)^{\langle Q_k,R\rangle}-\frac{1}{4^m}\sum_{R\in\mathcal{P}^m}pt(R)\sum_{k=1}^l(-1)^{\langle GH_{k-1}Q_{k-1}H_{k-1}^\dagger G^\dagger,GRG^\dagger\rangle}\label{eq:SuppSub1}\\
=&\frac{1}{4^m}\sum_{R\in\mathcal{P}^m}pt(R)\sum_{k=1}^l(-1)^{\langle Q_k,R\rangle}-\frac{1}{4^m}\sum_{R\in\mathcal{P}^m}pt(R)\sum_{k=1}^l(-1)^{\langle Q_{k-1},H_{k-1}^\dagger RH_{k-1}\rangle}\label{eq:SuppComm}\\
=&\frac{1}{4^m}\sum_{R\in\mathcal{P}^m}pt(R)\sum_{k=1}^l(-1)^{\langle Q_k,R\rangle}-\frac{1}{4^m}\sum_{R\in\mathcal{P}^m}pt(R)\sum_{k=1}^l(-1)^{\langle Q_{k-1},R\rangle}\label{eq:SuppSub2}\\
=&0
\end{align}
\cref{eq:SuppPartial} uses the definition of pattern transfer graph for cycle benchmarking.
\cref{eq:SuppSub1} makes the substitution $R\mapsto GRG^\dagger$.
\cref{eq:SuppComm} uses the fact that conjugating by the same Clifford gate does not change the commutation relation.
\cref{eq:SuppSub2} makes the substitution $R\mapsto H_{k-1}RH_{k-1}^\dagger$ and uses the fact that single qubit Clifford gates do not change patterns.
\end{proof}
By simple counting of degrees of freedom, \cite{chen2023learnability} proved that \cref{thm:invar} implies that the learnability of Pauli error rates are also completely characterized by cycle space.
That is, the cycle space is the space of all learnable about Pauli error rates.

\end{document}